\numberwithin{equation}{section}		
\numberwithin{figure}{section}			
\numberwithin{table}{section}				
\newtheorem{defi}{Definition}[section]
\newtheorem{lem}{Lemma}[section]
\newtheorem{thm}{Theorem}[section]
\newtheorem{remark}{Remark}[section]
\newtheorem{prop}{Proposition}[section]
\newtheorem{assum}{Assumption}[section]
\newcommand{\babs}[1]{\left|{#1}\right|}
\newcommand{\bnorm}[1]{\left|\left|{#1}\right|\right|}
\newcommand{\vect}[1]{\boldsymbol{\mathbf{#1}}}
\title{An Operator Theory for Analyzing the Resolution of Multi-illumination Imaging Modalities
	\thanks{\footnotesize This work was supported in part by the Swiss National Science Foundation grant number
		200021--200307.}}
\author{Ping Liu\thanks{\footnotesize Department of Mathematics, ETH Z\"urich, R\"amistrasse 101, CH-8092 Z\"urich, Switzerland (ping.liu@sam.math.ethz.ch, habib.ammari@math.ethz.ch).} \and Habib Ammari\footnotemark[2]}
\date{}
\begin{document}
	\maketitle
	\begin{abstract}
	 By introducing a new operator theory, we provide a unified mathematical theory for  general source resolution in the multi-illumination imaging problem. Our main idea is to transform multi-illumination imaging into single-snapshot imaging with a new imaging kernel that depends on both the illumination patterns and the point spread function of the imaging system. We therefore prove that the resolution of multi-illumination imaging is approximately determined by the essential cutoff frequency of the new imaging kernel, which is roughly limited  by the sum of the cutoff frequency of the point spread function and the maximum essential frequency in the illumination patterns. 
	 
	 Our theory provides a unified way to estimate the resolution of various existing super-resolution modalities and results in the same estimates as those obtained in experiments. In addition, based on the reformulation of the multi-illumination imaging problem, we also estimate the resolution limits for resolving both complex and positive sources by sparsity-based approaches. We show that the resolution of multi-illumination imaging is approximately determined by the new imaging kernel from our operator theory and better resolution can be realized by sparsity-promoting techniques in practice but only for resolving very sparse sources. This explains experimentally observed phenomena in some sparsity-based super-resolution modalities. 	 
	\end{abstract}

	\vspace{0.5cm}
	\noindent{\textbf{Mathematics Subject Classification:} 65R32, 42A10, 15A09, 94A08, 94A12} 
	
	\vspace{0.2cm}
	\noindent{\textbf{Keywords:} super-resolution, resolution enhancement, multi-illumination imaging, operator theory, location recovery, source number recovery} 
	\vspace{0.5cm}

\section{Introduction}	
Due to the intrinsic property of wave propagation and diffraction, the spatial resolution in optical imaging was deemed to be limited by the optical diffraction limit for more than a century. Based on the criteria first proposed by Abbe \cite{abbe1873beitrage} and Rayleigh \cite{rayleigh1879xxxi}, this limit is commonly acknowledged to be nearly half of the wavelength and it is widely used to quantify the resolution of conventional optical microscopies. However, in the last two decades, pioneered by several super-resolution techniques \cite{STED, betzig2006imaging,STORM}, a large amount of super-resolution fluorescence microscopies were developed to shatter the diffraction barrier and even frequently achieve a resolution that is dramatically lower than the diffraction limit. For example, the techniques known as STED \cite{STED}, PALM \cite{betzig2006imaging}, STORM \cite{STORM} and dSTORM \cite{heilemann2008subdiffraction} exploit fluorescence to improve the spatial resolution from more than two hundreds nanometers to several tens of nanometers.

A crucial and common feature in the  super-resolution fluorescence microscopies is that multiple patterned fields of light were applied to the sample to manipulate its fluorescence emission and multiple snapshots are taken and processed to extract sub-wavelength features of the sample. Since the snapshots are taken from samples subject to multiple illuminations, we call imaging in this setting multi-illumination imaging, to distinguish it from the imaging from a single snapshot. For imaging from a single snapshot (or a single-illumination), as already demonstrated in \cite{liu2021theorylse,  liu2021mathematicaloned, liu2021mathematicalhighd, batenkov2019super}, the required signal-to-noise ratio is very restrictive when super-resolving $n$ point sources separated by a distance below the diffraction limit. Thus super-resolution is nearly hopeless in this case. This is why  practical super-resolution techniques have developed slowly over the last century, where  multiple illuminations have been rarely utilized. In recent years, the capabilities of super-resolution from a single snapshot have already been established by several mathematical theories \cite{liu2021mathematicaloned, liu2021mathematicalhighd, liu2021theorylse,  batenkov2019super, li2021stable, demanet2015recoverability} and the resolution limits have been explicitly characterized \cite{liu2021theorylse, liu2022mathematicalSR, liu2022mathematicalpositive}, while the super-resolution capability of multi-illumination imaging is not yet well understood.

On the other hand, although the mechanism and resolution of the aforementioned imaging modalities were simple and well explained, such as the down-modulating of high-frequency information in SIM and single molecule localization in STORM, a variety of the perspectives of  understanding  do not uncover the fundamental principle and possibilities
for improving the resolution by using multi-illuminations. There is no mathematical theory to understand all or most of the imaging modalities in a unified way, exhibiting the fundamental principle and performance limit in their resolution improvement. In particular, many new imaging modalities employing the prior knowledge of sparsity \cite{beam2022, zhao2022sparse} have achieved a resolution better than common sense, necessitating an investigation of a mathematical theory for the resolution as well. The fundamental understanding would certainly inspire us to develop new imaging modalities and give us insight into their fundamental limitations.
Therefore, the development of a rigorous and uniform mathematical theory to discover the principle and show the resolution of multi-illumination imaging in a straightforward and simple way is important.

This paper aims to present a unified mathematical theory for understanding the resolution of multi-illumination based super-resolution techniques. In particular, we seek to mathematically explain the resolution improvement of existing multi-illumination imaging approaches and further highlight the possibilities and difficulties in this field.

\subsection{Main contributions}
We first propose an operator theory for analyzing multi-illumination imaging. To be more specific, we define the multi-illumination imaging operator as $A$ and apply its adjoint operator $A^*$ to the measurement $A f$ with $f$ being a general source.  It turns out that $A^*A f$ can be viewed as a conventional imaging from a single snapshot with a specific imaging kernel $G(z, y)$. Therefore, this imaging kernel enables us to analyze the resolution of the multi-illumination imaging by conventional ways. Especially, as one shall see in Sections \ref{section:convolutheory1} and \ref{section:resoluofimgingmodality},  all of the multi-illumination imaging methods have this imaging kernel, despite it is quite hidden in some modalities, such as the SIM and the single molecule fluorescence microscopy.

Based on our operator theory, in Sections \ref{section:stabilitytheoryofimging} and \ref{section:resoluofimgingmodality} we analyze the stability of the reconstruction of the frequency information of the source $f$ and show that our results are in agreement with the experimental results.
While this is consistent with common understanding in the field of multi-illumination imaging, our presentation is more general and does not require specific manipulation of specific measurements and illumination patterns. For example, it explains the resolution of structured illumination microscopy, imaging by translating illumination points, single molecular localization microscopy and decoding based random illumination imaging in the same mathematical framework. 

In addition, this general framework allows us to analyze more aspects of  multi-illumination imaging. As shown in Section \ref{section:encoderdecoder}, by generalizing the above operator analysis to a more general encoding and decoding theory, for a large category of decoding methods, the reconstructed spectral data of $f$ cannot exceed $[-\Omega_{\mathrm{multi}}, \Omega_{\mathrm{multi}}]$, where $\Omega_{\mathrm{multi}} = \Omega_{\mathrm{psf}}+\Omega_{\mathrm{illu}}$ with $ \Omega_{\mathrm{psf}}$ being the cutoff frequency of the point spread function and $\Omega_{\mathrm{illu}}$ the essential maximum frequency in the illumination patterns. This demonstrates the common sense in the field that, without additional prior information, the maximum frequency information in the multi-illumination imaging recovery is limited by the summation of the cutoff frequency of the point spread function and the essential maximum frequency in the illumination pattern.  We also analyze the resolution of multi-illumination imaging for the case when the illumination patterns are not exactly known but can be approximated. 


On the other hand, based on the imaging kernel formulated in Section \ref{section:convolutheory1}, under appropriate assumptions, in Section \ref{section:rslforsparsesr} we are also able to estimate rigorous resolution limits of certain sparsity-based multi-illumination imaging methods. Our results explain important phenomena in some sparsity-based multi-illumination imaging modalities, revealing the inherent advantage and limitation of sparsity-based multi-illumination imaging. In particular, we arrive at the following conclusions: i) the resolution of multi-illumination imaging is fundamentally determined by the summation of the cutoff frequency of the point spread function and the essential maximum frequency in the illumination pattern; and ii) better resolution can be achieved by sparsity-promoting approaches, but only for resolving very sparse sources. 

\subsection{Related works}
The mathematical theory analyzing the ability of super-resolution  imaging from a single noisy snapshot dates back  to the last century. From the middle of the last century, many researchers have already analyzed the two-point resolution from the perspective of statistical inference \cite{helstrom1964detection, helstrom1969detection, lucy1992statistical, lucy1992resolution, den1996model, goodman2015statistical, shahram2004imaging, shahram2004statistical, shahram2005resolvability}. In these papers, the authors have derived estimations for the minimum SNR that is required to discriminate two point sources or for the possibility of a correct decision. Although the resolutions (or the requirement) in this respect were thoroughly explored in these works which spanned the course of several decades, these results are complicated, wherefore they are rarely used in practical applications. Recently, we proposed a new rigorous and simple formula in \cite{liu2022mathematicalSR} to serve as a resolution limit in super-resolving two point sources under only an assumption on the noise level. 

The mathematical analysis of the stability for recovering more than two point sources is more challenging. To the best of our knowledge, the first breakthrough dates back to Donoho. In 1992, he studied the possibility and difficulties of super-resolving multiple on-the-grid sources from  a noisy single snapshot. He derived both the lower and upper bounds for the minimax error of the amplitude recovery in terms of the noise level, grid spacing, cutoff frequency, and a so-called Rayleigh index. The results were improved in recent years for the case when resolving $n$-sparse on-the-grid sources \cite{demanet2015recoverability}. Especially, the authors showed that the minimax error rate for amplitudes recovery scales like $\mathrm{SRF}^{2n-1}\epsilon$, where $\epsilon$ is the noise level and $\mathrm{SRF}:= \frac{1}{\Delta\Omega}$ is the super-resolution factor with $\Delta$ being the grid spacing and $\Omega$ the band limit. Similar results for multi-cluster cases were also derived in \cite{batenkov2020conditioning, li2021stable}. In particular, in \cite{batenkov2019super} the authors derived sharp minimax errors for the location and the amplitude recovery of off-the-grid sources. They showed that for complex sources satisfying a specific clustered configuration and $\epsilon \lesssim \mathrm{SRF}^{-2p+1}$ with $p$ being the number of the cluster nodes, the minimax error rate for reconstructing of the cluster nodes is of  order $(\mathrm{SRF})^{2p-2} \frac{\epsilon}{\Omega}$, while for recovering the corresponding amplitudes the rate is of order $(\mathrm{SRF})^{2p-1}\epsilon$. These results were generalized to the case of superresolving positive sources by us \cite{liu2022super} recently. 

On the other hand, in order to characterize the exact resolution in the number and location recovery, in the earlier works \cite{liu2021mathematicaloned, liu2021theorylse, liu2021mathematicalhighd, liu2022nearly, liu2022mathematicalpositive, liu2022mathematicalSR}, we have defined the so-called "computational resolution limits", which characterize the minimum required distance between point sources so that their number or locations can be stably resolved under certain noise level. It was shown that the computational resolution limits for the number and location recoveries in the $k$-dimensional super-resolution problem should be around respectively $\frac{C_{num}(k, n)}{\Omega}\left(\frac{\sigma}{m_{\min }}\right)^{\frac{1}{2 n-2}}$ and $\frac{C_{supp}(k, n)}{\Omega}\left(\frac{\sigma}{m_{\min }}\right)^{\frac{1}{2 n-1}}$, where $C_{num}(k,n)$ and $C_{supp}(k,n)$ are certain constants depending only on the source number $n$ and the space dimensionality $k$. In particular, these results were generalized to the case when resolving positive sources in \cite{liu2022mathematicalpositive}. We also refer the readers to \cite{moitra2015super,chen2020algorithmic} for understanding the resolution limit from the perspective of sample complexity and to  \cite{tang2015resolution, da2020stable} for the resolving limit of some algorithms.

All of these results reveal the severe ill-conditioning of the inverse problem, indicating that achieving super-resolution for resolving multiple sources from single snapshot is almost hopeless. This is also the reason why the practical super-resolution techniques have developed slowly over the past century. The significant development of practical super-resolution techniques in the last two decades is mainly attributed to the use of multiple illuminations with different patterns. 

Although super-resolution techniques have achieved considerable progress and have become indispensable tools for understanding biological functions at the molecular level, the mathematical theory regarding the possibility and difficulty in multi-illumination imaging has not been satisfactory developed. 
To the best of our knowledge, stability estimations for the MUSIC and ESPRIT algorithms for multi-snapshots have been developed in \cite{li2021stability}. In our previous work \cite{liu2022mathematical}, we proposed a theory for the resolution estimation of sparsity-based multi-illumination imaging, revealing the importance of the incoherence of the illumination patterns in the resolution enhancement. However, the theory did not provide the performance limit of the multi-illumination imaging with known illumination patterns and the results still lack practical significance.

Here we propose a theory focused on the analysis of  multi-illumination imaging with exactly known or well approximated illumination patterns. We also intend our results 
to have sufficient interpretability and guiding significance for practical super-resolution techniques.

\subsection{Organization of the paper}
In Section 2, we first propose an operator analysis for multi-illumination imaging. In particular, we formulate a certain imaging kernel in the multi-illumination case and analyze the stability of the reconstruction of frequency information of a general source. In Section 3, we examine several super-resolution microscopies to elucidate their stability using the operator theory presented in Section 2. In Section 4, we derive estimates of the resolution limit of  sparsity-based multi-illumination imaging. Section 5 concludes the paper. The appendix contains several technical proofs.  
		
		\section{Imaging kernel and resolution of multi-illumination imaging modalities}\label{section:convolutheory1}
		In this section, we propose a mathematical theory to analyze the resolution of multi-illumination imaging. Our theory is based on the analysis of  imaging operators that appear in  multi-illumination imaging problems. For convenience, we call our theory the operator theory for the resolution analysis.

		\subsection{Problem setting and the Imaging kernel}\label{section:imagingkernel1}
		Let us first introduce the problem setting. We suppose that a general source $f$ is supported on $[0,1]^d$ and the point spread function of the imaging system is given by $$k(x,y)=PSF(x-y)$$ with $y$ denoting the source location. We also suppose that we have $N$ times of illuminations for the source and the illumination patterns, denoted by $I(x,t_q), q=1, \cdots, N$,  are known a priori. We consider having full data for each image on $\mathbb R^d$ (or full spectral data in the low-frequency region) to gain more precise reconstruction and make the analysis convenient. Moreover, we make the following assumptions on  $f$, $k(x,y)$ and $I(x,t_q)$. Our assumptions are consistent with the practical modalities. 
		
		\begin{assum}\label{asum:source1}
			$f$ is either a continuous function in $[0,1]^d$ or a discrete and finite measure. 
		\end{assum}
		\begin{assum}\label{asum:psf1}
			$k(x,y)$ is a real, continuous, and bounded function and $PSF\in L^2(\mathbb R^{d})$. 
		\end{assum}
		\begin{assum}\label{asum:illu1}
			$I(x,t_q), q=1, \cdots, N,$ are continuous and bounded functions in $\mathbb R^d$. 
		\end{assum}
		
		The noiseless measurements are given by
		\[
		\hat f(x,t_q)=\int_{\mathbb R^d}k(x,y)I(y,t_q)f(y)dy, \quad q=1, \cdots, N,
		\]
		where $f(y)$ is the unknown source. The multi-illumination imaging problem is to reconstruct $f(x)$ from $\hat f(x,t_q)$. To obtain an appropriate method for analyzing the resolution of the multi-illumination imaging problem, we define the imaging operator $A$ by
		\begin{equation}\label{equ:imagingoperator1}
		Af(x, t_q)=\int_{[0,1]^d}k(x,y)I(y,t_q)f(y)dy=\int_{[0,1]^d}Q(x,y,t_q)f(y)dy,
		\end{equation}
		where the function $Q(x,y,t_q)=k(x,y)I(y,t_q)$ combines together the point spread function and the illumination pattern. Note that by Assumptions \ref{asum:source1}, \ref{asum:psf1} and \ref{asum:illu1}, it is not difficult to see that $Af(x,t_q)\in L^2(\mathbb R^d)$ for each $t_q$. Then we define the inner product 
	\[
	\left\langle Af, g \right \rangle= \frac{1}{N}\sum_{q=1}^N\int_{\mathbb R^d}Af(x,t_q)\overline{g(x,t_q)}dx
	\]
	for $g \in \left(L^2(\mathbb R^d)\right)^N$ and  calculate the adjoint operator $A^*$ of $A$ from
	\begin{align*}
			\left\langle f,A^*g \right \rangle & = \left\langle Af,g \right \rangle=\frac{1}{N}\sum_{q=1}^N\int_{\mathbb R^d} \int_{[0,1]^d}Q(x,y,t_q)f(y)dy \ \overline{g(x,t_q)}dx\\
		&=\int_{[0,1]^d}\left( \frac{1}{N}\sum_{q=1}^N\int_{\mathbb R^d} Q(x,y,t_q)\overline{ g(x,t_q)}dx\right )f(y)dy\\
		&=\left\langle f,\frac{1}{N}  \sum_{q=1}^N\int_{\mathbb R^d}Q(x,y,t_q)\overline{ g(x,t_q)}dx\right \rangle.
	\end{align*}
Thus we get that
	\begin{equation}\label{equ:astar1}
		A^*g =\frac{1}{N}\sum_{q=1}^N \int_{\mathbb R^d}\overline{ Q(x,y,t_q)}g(x,t_q)dx.
	\end{equation}  		
Next, we consider the operator $A^*A$. We have 
\begin{equation}\label{equ:astara1}
		\begin{aligned}
			A^*Af(z)&=\frac{1}{N}\sum_{q=1}^N\int_{\mathbb R^d}\overline{Q(x,z,t_q)}\int_{[0,1]^d}Q(x,y,t_q)f(y)dydx\\
			&=\int_{[0,1]^d} \frac{1}{N}\sum_{q=1}^N \int_{\mathbb R^d}\overline{Q(x,z,t_q)}Q(x,y,t_q)dxf(y)dy\\
			&= \int_{[0,1]^d} G(z,y)f(y)dy,		
		\end{aligned}
	\end{equation}
		where $G(z,y):=\frac{1}{N}\sum_{q=1}^N \int_{\mathbb R^d}\overline{Q(x,z,t_q)}Q(x,y,t_q)dx$.
		Moreover, we obtain that
		\begin{align*}
			G(z,y)&=\frac{1}{N}\sum_{q=1}^N \int_{\mathbb R^d}\overline{Q(x,z,t_q)}Q(x,y,t_q)dx\\
			&=\frac{1}{N}\sum_{q=1}^N \int_{\mathbb R^d}\overline{I(z,t_q)}I(y,t_q)\overline{k(x,z)}k(x,y)dx\\
			&=\frac{1}{N}\sum_{q=1}^N\overline{I(z,t_q)}I(y,t_q)\int_{\mathbb R^d}\overline{k(x,z)}k(x,y)dx.
		\end{align*} 
	Since $PSF\in L^2(\mathbb R^d)$, $\int_{\mathbb R^d}\overline{k(x,z)}k(x,y)dx<+\infty$ and $G(z,y)$ is well-defined. We call
		\begin{equation}\label{equ:imagingkernel1}
			G(z,y):=\frac{1}{N}\sum_{q=1}^N\overline{I(z,t_q)}I(y,t_q)\int_{\mathbb R^d}\overline{k(x,z)}k(x,y)dx, \quad (z,y)\in \mathbb{R}^{2d},
		\end{equation}
	the imaging kernel of the multi-illumination imaging problem. From the above derivations, the problems in multi-illumination imaging are now transformed to conventional imaging problems from single snapshot with the imaging kernel $G(z,y)$. This is a crucial contribution of the paper, which allows us to transform the multi-illumination imaging to a standard imaging from a single measurement, where abundant techniques and results can be applied. In the rest of the paper, we shall see that the imaging kernel $G(z,y)$ plays a key role in determining the stability and resolution of multi-illumination imaging. 		
		
		\subsection{Stability analysis of the Imaging Problem}\label{section:stabilitytheoryofimging}
		In the above section, we have derived an imaging kernel in the multi-illumination imaging problem. We now propose some stability analyses for the multi-illumination imaging, elucidating that the resolution of the multi-illumination imaging is almost determined by the imaging kernel $G$ in (\ref{equ:imagingkernel1}). 
		
		To be more specific, we denote the  noise function  by $\sigma(x,t)$ and suppose that 
		\begin{equation}\label{equ:noiselevel1}
			\bnorm{\sigma(x,t_q)}_{L^1} =\int_{\mathbb R^d} \left|\sigma(x,t_q)\right|dx\leq \sigma, \quad q=1, \cdots, N,
		\end{equation}
		with $\sigma$ being the noise level.  Here we slightly abuse the use of $\sigma$ to keep the notation simple,  but this will not cause any ambiguity in the following discussions. For noisy measurements in the multi-illumination imaging given by
		\[
		h(x,t_q) = Af(x,t_q) +\sigma(x,t_q),\quad q=1, \cdots, N,
		\] 
		with $\sigma(x,t_q)$ satisfying (\ref{equ:noiselevel1}), we have 
		\begin{align*}
			A^*h =& A^*Af+A^*\sigma\\
			=&\int_{[0,1]^d}G(z,y)f(y)dy+A^*\sigma\\
			=&\int_{[0,1]^d}G(z,y)f(y)dy+ O(\sigma),
		\end{align*}
		where the last equality is because by (\ref{equ:noiselevel1}) the following estimate holds: 
		\begin{equation}\label{equ:noiseestimate1}
			\begin{aligned}
				\babs{A^*\sigma(y)}\leq & \frac{1}{N} \sum_{q=1}^N\babs{\int_{\mathbb R^d}\overline{I(y,t_q)k(x,y)}\sigma(x,t_q)dx}\\
				\leq&  \frac{1}{N} \sum_{q=1}^N \max\left(\babs{\overline{I(y,t_q)k(x,y)}}\right) \int_{\mathbb R^d}\babs{\sigma(x,t_q)}dx\\ 
				\leq& C\sigma.
			\end{aligned}
		\end{equation}
	Therefore, by applying the reconstruction operator $A^*$ to the noisy measurement $h(x,t)$, we obtain the following noisy image of the source $f(y)$:
		\begin{equation}\label{equ:imagingequation1}
			\vect Y(z) = \int_{[0,1]^d}G(z,y)f(y)dy+\vect W(z),
		\end{equation}
		where $|\vect W(z)|$ is of order $O(\sigma)$. 
		
		Classically, the resolution of imaging modalities is generally determined by the bandwidth of their point spread functions. Thus, starting from (\ref{equ:imagingequation1})  in the next section (Section \ref{section:resoluofimgingmodality}), we begin to analyze the resolution of various well-known imaging modalities, and even  SIM methods where the imaging kernel is completely concealed. To be more specific, we will show that in several imaging modalities, the imaging kernel $G(z,y)$ is of the form $PSF_{\mathrm{multi}}(z-y)$ with $PSF_{\mathrm{multi}}$ being the new point spread function for the multi-illumination imaging, or $G(z,y)$ can be approximated by $PSF_{\mathrm{multi}}(z-y), \ (z,y)\in [0,1]^{2d},$ to certain extent. This enables us to view (\ref{equ:imagingequation1}) as 
		\begin{equation}\label{equ:imagingequation2}
			\vect Y(z) = \int_{[0,1]^d}PSF_{\mathrm{multi}}(z-y)f(y)dy+\vect W(z)
		\end{equation}
		with $|\vect W(z)|$ being of order $O(\sigma)$. This is exactly the imaging model for the single snapshot, whereby we can understand the resolution in a conventional way, for instance from the point of view of the Rayleigh limit \cite{rayleigh1879xxxi} or the computational resolution limit \cite{liu2021mathematicaloned, liu2021theorylse, liu2021mathematicalhighd, liu2022mathematicalSR}.


	    
	    Furthermore, assuming that $G(z,y)$ is of the form $PSF_{\mathrm{multi}}(z-y)$, we  next demonstrate that the spectral data of $f$ in the bandpass of $PSF_{\mathrm{multi}}$ can be  reconstructed in a stable way from noisy images. In the next subsection, we show that this is also what can only eventually be reconstructed under most circumstances. This provides a rigorous explanation of the resolution of multi-illumination imaging.


		Note that the noisy measurements should be of the form $Af(x,t_q)+\sigma(x,t_q), q=1, \cdots, N,$ for some noise $\sigma(x,t_q)$ satisfying (\ref{equ:noiselevel1}).  Then for $g$ supported in $[0,1]^d$ satisfying $Ag(x,t_q)=Af(x,t_q)+\sigma(x,t_q)$, we have $A(g-f)(x, t_q)=\sigma(x,t_q)$ and $A^*A(g-f)=A^*\sigma$. We define 
		\begin{equation}\label{equ:tildefg1}
		\tilde{f}(y)=
		\left \{
		\begin{array}{cc}
			f(y),   &y \in [0,1]^d,\\
			0,     &y\in \mathbb R^d \setminus [0,1]^d,
		\end{array}
		\right. \quad  \tilde{g}(y)=
		\left \{
		\begin{array}{cc}
			g(y),   &y \in [0,1]^d,\\
			0,    &y \in \mathbb R^d \setminus [0,1]^d.
		\end{array}
		\right.
		\end{equation}
	We denote the Fourier transform by $\mathcal F[\mu](\xi) = \int_{-\infty}^{+\infty}e^{i \xi x} \mu(x) dx$ and consider  
		\begin{align*}
			\mathcal F[A^*\sigma]&=\mathcal{F}[A^*A(g-f)]=\mathcal{F}\left[\int_{\mathbb R^d}G(z,y)(\tilde{g}(y)-\tilde{f}(y))dy\right]\\
			&=\mathcal{F}\left[PSF_{\mathrm{multi}}*(\tilde{g}-\tilde{f})\right]=\mathcal{F}\left[PSF_{\mathrm{multi}}\right]\mathcal F\left[\tilde{g}-\tilde{f}\right]\\
			& = \mathcal{F}\left[PSF_{\mathrm{multi}}\right]\mathcal F\left[g-f\right].
		\end{align*}
		It then follows that
		\[
		\mathcal{F}\left[PSF_{\mathrm{multi}}\right](\xi)\mathcal F\left[g-f\right](\xi) = \mathcal F [A^*\sigma](\xi), \quad \xi \in \mathbb R^d,
		\]
		and,
		\begin{align}\label{equ:frestability1equ0}
			\mathcal F\left[g-f\right](\xi) = \frac{\mathcal F [A^*\sigma]}{\mathcal F\left[PSF_{\mathrm{multi}}\right](\xi)}
		\end{align}
		for $\xi$ satisfying $\mathcal F[PSF_{\mathrm{multi}}](\xi)\neq 0$.  We can therefore reconstruct  the frequency information in the bandpass of $PSF_{\mathrm{multi}}$ in  a stable way. In particular, if we consider $PSF(y)I(y, t_q)\in L^1(\mathbb R^d)$ and assume that $\mathcal F\left[I(y, t_q)\right]$ is bounded, then we have the following theorem for the stability of the reconstruction of the frequency information of $f$. Note that the assumption is very mild and applies to most of the imaging modalities. 
		\begin{thm}\label{thm:frestability1}
		Suppose that $k(x,y)=PSF(x-y)$, $PSF(y)I(y, t_q)\in L^1(\mathbb R^d)$, $\mathcal F\left[I(y, t_q)\right]$'s are bounded, and the imaging kernel in (\ref{equ:imagingkernel1}) has the form $PSF_{\mathrm{multi}}(z-y)$. For $g$ supported on $[0,1]^d$ satisfying $Ag(x,t_q)=Af(x,t_q)+\sigma(x,t_q)$ with $\sigma(x,t_q)$ satisfying (\ref{equ:noiselevel1}), we have 
	  \begin{align*}
	  	\mathcal F\left[g-f\right](\xi) \leq  \frac{C(I,k) \sigma}{\mathcal F\left[PSF_{\mathrm{multi}}\right](\xi)}
	  \end{align*}
  for $\xi$ satisfying $\mathcal F[PSF_{\mathrm{multi}}](\xi)\neq 0$.  Here $C(I,k)$ is a finite constant related to the illumination patterns and point spread function $k(x,y)$. 
		\end{thm}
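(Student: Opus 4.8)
The plan is to exploit the fact that almost all the work is already done before the theorem is stated: the computation there shows
\[
\mathcal F[g-f](\xi)\,\mathcal F[PSF_{\mathrm{multi}}](\xi)=\mathcal F[A^*\sigma](\xi),\qquad\xi\in\mathbb R^d ,
\]
so the theorem reduces to the single uniform estimate $\babs{\mathcal F[A^*\sigma](\xi)}\le C(I,k)\,\sigma$ for all $\xi\in\mathbb R^d$, after which one divides by $\mathcal F[PSF_{\mathrm{multi}}](\xi)$ on the passband $\{\xi:\mathcal F[PSF_{\mathrm{multi}}](\xi)\ne0\}$ and is done. The two extra hypotheses of the theorem, namely $PSF\cdot I(\cdot,t_q)\in L^1$ and $\mathcal F[I(\cdot,t_q)]$ bounded, are exactly what makes this estimate go through.

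To get the estimate I would first use $\|\mathcal F[\mu]\|_{L^\infty}\le\|\mu\|_{L^1}$, which reduces matters to bounding $\|A^*\sigma\|_{L^1(\mathbb R^d)}$. Inserting (\ref{equ:astar1}) with $Q(x,y,t_q)=PSF(x-y)I(y,t_q)$ and swapping the order of integration by Tonelli (the integrand is nonnegative, so there is nothing to check), one gets
\[
\|A^*\sigma\|_{L^1}\le\frac1N\sum_{q=1}^N\int_{\mathbb R^d}\babs{\sigma(x,t_q)}\Big(\int_{\mathbb R^d}\babs{PSF(x-y)}\,\babs{I(y,t_q)}\,dy\Big)\,dx .
\]
The inner integral equals $\int_{\mathbb R^d}\babs{PSF(u)}\,\babs{I(x-u,t_q)}\,du$ after a change of variables, and the integrability/boundedness hypotheses bound it by a finite constant $C_q(I,k)$ uniformly in $x$ (for instance by $\|I(\cdot,t_q)\|_{L^\infty}\|PSF\|_{L^1}$, or by $\|I(\cdot,t_q)\|_{L^1}\|PSF\|_{L^\infty}$, where $\|PSF\|_{L^\infty}<\infty$ by Assumption \ref{asum:psf1}). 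Combining with $\int\babs{\sigma(x,t_q)}\,dx\le\sigma$ from (\ref{equ:noiselevel1}) and averaging over $q$ yields $\babs{\mathcal F[A^*\sigma](\xi)}\le\|A^*\sigma\|_{L^1}\le\big(\tfrac1N\sum_q C_q(I,k)\big)\,\sigma=:C(I,k)\,\sigma$. An equivalent route, in which the boundedness of $\mathcal F[I(\cdot,t_q)]$ is used directly, is to write $A^*\sigma(y)=\frac1N\sum_q\overline{I(y,t_q)}\,(\widetilde{PSF}\ast\sigma(\cdot,t_q))(y)$ with $\widetilde{PSF}(y)=\overline{PSF(-y)}$, apply the convolution theorem to obtain $\mathcal F[A^*\sigma]=\frac1N\sum_q\frac1{(2\pi)^d}\mathcal F[\overline{I(\cdot,t_q)}]\ast\big(\overline{\mathcal F[PSF]}\,\mathcal F[\sigma(\cdot,t_q)]\big)$, and then use $\babs{\mathcal F[\sigma(\cdot,t_q)](\eta)}\le\sigma$ for every $\eta$ together with $\|\mathcal F[I(\cdot,t_q)]\|_{L^\infty}\,\|\mathcal F[PSF]\|_{L^1}<\infty$; the $\sigma$ factor then falls out of the frequency convolution automatically.

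The only genuine obstacle is the uniform-in-$x$ bound on the inner kernel integral: this is precisely where the two added hypotheses must be invoked, and one has to be a little careful about \emph{which} of them to use and about the minimal regularity needed to legitimize the Fourier/convolution manipulations --- Tonelli disposes of this for the $L^1$ route, while for the convolution-theorem version one would pass through Schwartz (or $L^1\cap L^2$) densities, using $\sigma(\cdot,t_q)\in L^1$, $PSF\in L^2\cap L^\infty$ and $I\in L^\infty$. I do not expect any of this to be hard; it is bookkeeping, and it produces exactly the constant $C(I,k)=\tfrac1N\sum_q C_q(I,k)$ claimed in the statement, depending only on the illumination patterns, the point spread function and $N$. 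Feeding this back into $\mathcal F[g-f](\xi)=\mathcal F[A^*\sigma](\xi)/\mathcal F[PSF_{\mathrm{multi}}](\xi)$ would finish the proof.
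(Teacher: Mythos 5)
Your overall reduction is exactly the paper's: everything hinges on the single uniform bound $\babs{\mathcal F[A^*\sigma](\xi)}\le C(I,k)\sigma$, after which one divides by $\mathcal F[PSF_{\mathrm{multi}}](\xi)$ on the passband. Your \emph{second} route is essentially the paper's proof: the paper justifies Fubini using $PSF(\cdot)I(\cdot,t_q)\in L^1$, swaps the integrals, recognizes the inner integral as the frequency convolution $\mathcal F[\overline{I(\cdot,t_q)}]\ast\mathcal F[\overline{k(x,\cdot)}](\xi)$, and bounds it by a constant using the boundedness of $\mathcal F[I(\cdot,t_q)]$ together with the fact that $\mathcal F[PSF]$ vanishes outside a bounded set (hence lies in $L^1$, being also in $L^2$); integrating against $\babs{\sigma(x,t_q)}$ and invoking (\ref{equ:noiselevel1}) then gives $C(I,k)\sigma$. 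Your variant, which transforms $\sigma(\cdot,t_q)$ as well and uses $\babs{\mathcal F[\sigma(\cdot,t_q)](\eta)}\le\sigma$ inside the frequency convolution, yields the same constant and is a legitimate reorganization of the same argument.

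Your \emph{first} (and primary) route, however, does not go through under the stated hypotheses. The quantity you need to bound uniformly in $x$ is $\int\babs{PSF(x-y)}\,\babs{I(y,t_q)}\,dy$, a convolution of $\babs{PSF}$ with $\babs{I(\cdot,t_q)}$; the hypothesis $PSF(y)I(y,t_q)\in L^1(\mathbb R^d)$ controls the pointwise product at a single argument and says nothing about this convolution. The two bounds you offer, $\bnorm{I(\cdot,t_q)}_{L^\infty}\bnorm{PSF}_{L^1}$ and $\bnorm{I(\cdot,t_q)}_{L^1}\bnorm{PSF}_{L^\infty}$, require respectively $PSF\in L^1$ and $I(\cdot,t_q)\in L^1$, neither of which is assumed: Assumption \ref{asum:psf1} gives only $PSF\in L^2$ and bounded, Assumption \ref{asum:illu1} only that $I$ is continuous and bounded, and both failure modes occur in the paper's canonical examples (the band-limited $PSF(x)=\frac{\sin\Omega x}{x}$ is not in $L^1(\mathbb R)$, and the plane-wave illuminations $I(y,t_q)=e^{i\omega(t_q)\Omega y}$ of Section \ref{section:planewaveillumination} are not in $L^1$). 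For the same reason the preliminary step $\bnorm{\mathcal F[A^*\sigma]}_{L^\infty}\le\bnorm{A^*\sigma}_{L^1}$ is not available, since $A^*\sigma$ need not be integrable. Discard the first route and carry out the second in full; with that, your argument coincides with the paper's.
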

		\begin{proof}
	Since $PSF(y)I(y, t_q)\in L^1(\mathbb R^d)$, 
	\begin{align*}
	\int_{\mathbb R^d} \left|\sigma(x,t_q)\right| \int_{\mathbb R^d}\left|\overline{I(y, t_q) k(x,y)}\right|dydx< +\infty.
	\end{align*}
Then by Fubini's Theorem, we have 
		\begin{align*}
			\int_{\mathbb R^{2d}}\left|\overline{I(y, t_q) k(x,y)}\sigma(x,t_q)\right|dxdy< +\infty.
		\end{align*}
	Thus $\overline{I(y, t_q) k(x,y)}\sigma(x,t_q)\in L^{1}(\mathbb R^{2d})$ and  Fubini's Theorem yields  
	\begin{align*}
	\left|\mathcal F\left[A^*\sigma\right](\xi)\right| = & \left|\int_{\mathbb R^d} e^{iy\cdot \xi}\frac{1}{N}\sum_{q=1}^N\overline{I(y, t_q)}\int_{\mathbb R^d}\overline{k(x,y)}\sigma(x,t_q)dxdy\right|\\
	=&\left|\frac{1}{N}\sum_{q=1}^N\int_{\mathbb R^d} \sigma(x,t_q) \int_{\mathbb R^d}e^{i y\cdot \xi}\overline{I(y, t_q)}\overline{k(x,y)}dydx\right|\\
	\leq &\frac{1}{N}\sum_{q=1}^N \int_{\mathbb R^d} \babs{\sigma(x,t_q)}\babs{\int_{\mathbb R^d}e^{iy\cdot \xi}\overline{I(y, t_q)}\overline{k(x,y)}dy}dx\\
	=& \frac{1}{N}\sum_{q=1}^N \int_{\mathbb R^d} \babs{\sigma(x,t_q)}\babs{\mathcal F\left[\overline{I(y, t_q)}\right]*  \mathcal F\left[\overline{k(x,y)}\right](\xi)}dx \\
	\leq &  C(I,k)\sigma,
	\end{align*}
	where the last inequality is because (\ref{equ:noiselevel1}), $\mathcal F\left[\overline{I(y, t_q)}\right]$ is bounded, and $k(x,y)=PSF(x-y)$ whose spectral data vanishes outside a bounded interval. Combining the previous estimate with (\ref{equ:frestability1equ0}) proves the theorem. 
	\end{proof}

		According to Theorem \ref{thm:frestability1}, the spectral data of $f$ in the bandpass of $PSF_{\mathrm{multi}}$ can be stably reconstructed if we know the illumination pattern $I$ and the point spread function. Furthermore, as discussed in Sections \ref{section:encoderdecoder} and \ref{section:resoluofimgingmodality}, the essential cutoff frequency of $PSF_{\mathrm{multi}}$ is around the sum of the cutoff frequency of the point spread function and the essential maximum frequency in the illumination pattern.  On the other hand, suppose that the essential cut-off frequency of $PSF_{\mathrm{multi}}$ is $\Omega_{\mathrm{multi}}$, by the classical resolution limit theory \cite{den1997resolution}, the resolution enhancement is around $\frac{\Omega_{\mathrm{multi}}}{\Omega}$.  This is a direct conclusion from our operator theory for the resolution enhancement, which is consistent with experimental results from many imaging modalities in practice. Sometimes additional  prior information further improves the resolution enhancement, of which sparsity is the most common and widely used feature \cite{solomon2019sparcom, zhao2022sparse, blindSIM}. In Section \ref{section:rslforsparsesr}, we will analyze the resolution when resolving sparse sources, enabling an explanation of observed phenomena in some experiments.  
		\begin{remark}
			Note that when $G(z,y)$ is not exactly $PSF_{\mathrm{multi}}(z-y)$ but can be approximated by $PSF_{\mathrm{multi}}$ in some sense, we can obtain an estimate similar to (\ref{equ:frestability1equ0}), indicating that the frequency information in the bandpass of $PSF_{\mathrm{multi}}$ can be reconstructed stably in  multi-illumination imaging. 
 		\end{remark}
 	
 	\begin{remark}
 		We remark that Theorem \ref{thm:frestability1} can also be applied to plane wave illuminations $I(y, t_q) = e^{iz(t_q) \cdot y}$. This means that it can be applied to analyze the resolution of SIM imaging modalities \cite{gustafsson2000surpassing}. 
 	\end{remark}
 
 	\subsection{General Encoding and decoding theory}\label{section:encoderdecoder}
 The formulation of the imaging from the operator $A^*A$ in (\ref{equ:astara1}) can also be viewed as a process of decoding measurements in multi-illumination imaging. To be more specific, the operator  $A^*$ can be viewed as a decoder which decodes the source information from the measurements $Af$ and the decoding patterns are the specific $Q(x,y,t_q) = k(x,y)I(y,t_q)$. In some applications, although the illumination patterns as well as $Q(x,y, t_q)$ are not exactly known, an estimated decoding pattern can be used to reconstruct the image of the sources \cite{gur2010linear, garcia2005synthetic}.  The mathematical formulation is as follows. 
 We use the same notation as in Section \ref{section:imagingkernel1}. Each diffraction limited image in the sequence that is captured by the camera (or the detectors) has the following formulation:
 \[
 \hat f(x, t) = \int_{y} f\left(y \right) I\left(y, t\right) PSF\left(x-y\right) dy,
 \]
 where the  illumination pattern $I(y,t)$ varies in time. For example, in \cite{gur2010linear}, $I(y, t)$ represents the nanoparticles distribution that varies in time according to the Brownian motion of the nanoparticles. Then, the decoding pattern is numerically extracted according to some estimation procedures and the reconstruction $r(x)$ is obtained by as follows:
 $$
 r(x)=\int_t\left[\int_{y} f\left(y \right) I\left(y, t\right) PSF\left(x-y\right) dy\right] \tilde{I}(x, t) d t, 
 $$
 where $\tilde{I}(x, t)$ is the digitally estimated decoding pattern. By different means and assumptions on the model \cite{garcia2005synthetic, gur2010linear}, it was shown that 
 \begin{equation}\label{equ:encodedecodeeq1}
 	\int_t I(y, t)\tilde{I}(x,t)dt = Col(x-y)
 \end{equation}
 for some function $Col$ characterizing the correlations between $I, \tilde{I}$. Then the reconstructed image is 
 \[
 r(x)= \int_y G(x-y)f(y)dy,
 \]
 where $G(x) = Col(x)PSF(x)$ combines the encoding/decoding patterns and the point spread function. The further analysis of the resolution can be derived in the same way as those in the paper. 
 
 On the other hand, the generalization of the reconstruction operator $A^*$ by this encoding and decoding theory gives us a new insight into the stability of  multi-illumination imaging. A simple idea is to create new  general decoders to analyze the possibility of further resolution enhancement compared to Section \ref{section:imagingkernel1}. For simplicity, we always consider that the illumination patterns vary continuously and that each diffraction-limited image is 
 \begin{equation}\label{equ:encodedecodeeq2}
 \hat f(x, t) = Af(x, t) : = \int_{y} k(x, y) I(y, t) f(y) dy. 
 \end{equation}
 Since the only known information is the illumination patterns and the point spread function, we consider a very general decoding pattern  $g_1\left(I(z,t)\right)g_2\left(k(x,z)\right)$ with $g_1, g_2$ being two general functions. 
 
 We define the corresponding decoder  
 \[
 Dg = \int_{t}\int_{x}g_1\left(I(z,t)\right)g_2\left(k(x,z)\right) g(x,t)dxdt.
 \]	
 Then the reconstructed image of source $f$ reads
 \[
 DAf = \int_{t} \int_{x} g_1\left(I(z,t)\right)g_2\left(k(x,z)\right) \int_{y} k(x, y) I(y, t) f(y) dy dx dt,
 \]
 or equivalently,
 \begin{align*}
 	DAf = &  \int_{t} \int_{x} g_1\left(I(z,t)\right)g_2\left(k(x,z)\right) \int_{y} k(x, y) I(y, t) f(y) dy dx dt\\
 	=& \int_{y} \int_t  g_1\left(I(z,t)\right) I(y, t) dt \int_x  g_2\left(k(x,z)\right)  k(x,y)dx f(y)dy.\\
 	=&  \int_{y} \int_t  g_1\left(I(z,t)\right) I(y, t) dt \int_x  g_2\left(PSF(x-z)\right)  PSF(x-y)dx f(y)dy\\
 	=&\int_y G(z, y)f(y)dy,
 \end{align*}
 where $G(z, y) =  \int_t  g_1\left(I(z,t)\right) I(y, t) dt \int_x  g_2\left(PSF(x-z)\right)  PSF(x-y)dx$. Note that the above operations are valid under only very mild assumptions on $g_1, g_2, I, k$. When the illumination patterns are generated by point sources or by other assumptions such as in \cite{garcia2005synthetic}, $I(y, t)$ is of the form $I(y,t)=IP(y-t)$ for some illumination function $IP$. Then 
 \[
 DA f = \int_y PSF_{\mathrm{multi}}(z-y)f(y)dy
 \] 
 and 
 \[
 \mathcal F\left[PSF_{\mathrm{multi}}\right](\xi) = \mathcal F\left[g_1(IP)*IP\right]*\mathcal F\left[g_2(PSF)*PSF\right](\xi), \quad \xi\in \mathbb R^d.  
 \]
 Note that 
 \begin{align*}
 	&\mathcal F\left[g_1(IP)*IP\right](\xi) = \mathcal F\left[g_1(IP)\right](\xi)\mathcal F\left[IP\right](\xi) = \mathcal F\left[g_1(IP)\right](\xi)\mathcal F\left[IP\right](\xi) \chi_{\mathcal F[IP]} , \\
 	&\mathcal F\left[g_2(PSF)*PSF\right](\xi) = \mathcal F\left[g_2(PSF)\right](\xi)\mathcal F\left[PSF\right](\xi)= \mathcal F\left[g_2(PSF)\right](\xi)\mathcal F\left[PSF\right](\xi)\chi_{\mathcal F[PSF]},
 \end{align*}
 where $\chi_g$ is the characteristic function of the support of $g$. Thus the spectral data of $DAf$ is still essentially constrained in $[-\Omega_{\mathrm{multi}}, \Omega_{\mathrm{multi}}]$, where $\Omega_{\mathrm{multi}} = \Omega_{\mathrm{psf}} + \Omega_{\mathrm{illu}}$ with $ \Omega_{\mathrm{psf}}$ being the cutoff frequency of the point spread function and $\Omega_{\mathrm{multi}}$ being the essential maximum frequency in the illumination patterns. This directly reveals that the spectral data of $f$ that can be recovered by any sophisticated recovering (decoding) algorithms with the aforementioned form cannot exceed the bound $\Omega_{\mathrm{psf}} + \Omega_{\mathrm{illu}}$. This theoretically confirms the common sense in the super-resolution field \cite{blindSIM} that,  without further assumption and information on high-order statistics of the illumination patterns such as in \cite{dertinger2009fast, solomon2019sparcom}, the maximum frequency information in  multi-illumination imaging recovery is limited by the sum of the cutoff frequency of the point spread function and the essential maximum frequency in the illumination pattern.  
 
 A crucial observation is that, since in the measurement (\ref{equ:encodedecodeeq2}) the kernel in the integral is $k(x,y)I(y,t)$, together with the discussions above, any sophisticated decoding operator cannot further improve the resolution. Thus, the essential way to further improve the resolution is not to manipulate the decoding operator but to manipulate the measurement (\ref{equ:encodedecodeeq2}), for example  by multiplying it with new functions in the integral kernel in (\ref{equ:encodedecodeeq2}). This  was in fact done in \cite{dertinger2009fast, solomon2019sparcom} under assumptions on high-order statistics of the illumination patterns.
 
 Based on the above discussions, we now have clearer understanding of the possible resolution improvement in multi-illumination imaging. This provides sufficient guidance for the development of super-resolution modalities and algorithms. 
 
 \subsection{Illumination patterns are unknown but can be approximated}
 In many practical applications, the illumination patterns are not exactly known but can be approximated. In this section, we analyze the stability of multi-illumination imaging in this case. Suppose that the original illumination patterns are $I(x, t_q), q=1, \cdots, N,$ and the estimated illumination patterns are $\widehat{I}(x,t_q)= I(x, t_q)+\epsilon(x,t_q)$ with a bounded $\epsilon(x,t_q)$ satisfying
 \begin{align}\label{equ:illuperturblevel1}
 	 \bnorm{\epsilon(x,t_q)}_{L^1}\leq \epsilon.
\end{align}
 
 We define the new imaging operator $\widehat A$ by 
\begin{equation}\label{equ:imagingoperator3}
	\widehat Af=\int_{[0,1]^d}k(x,y) \widehat I(y,t_q)f(y)dy=\int_{[0,1]^d} \widehat Q(x,y,t_q)f(y)dy,
\end{equation}
where $\widehat Q(x,y,t_q)=k(x,y)\widehat I(y,t_q)$.  For the noisy images $h(x,t_q) = Af(x,t_q)+\sigma(x, t_q)$ with $A$ being the original imaging operator (\ref{equ:imagingoperator1}), the source $g$ is recovered by using the following constraint:
\[
\bnorm{\widehat{A} g(x, t_q) - h(x, t_q)}_{L^1}\leq \sigma,\quad q=1,\cdots, N,
\]
or
\[
\widehat{A} g(x, t_q) = h(x, t_q)+ \widehat{\sigma}(x, t_q), \quad q=1,\cdots, N,
\] 
for some $\widehat{\sigma}(x, t_q)$ satisfying (\ref{equ:noiselevel1}). 
 
To analyze the stability of the reconstruction, we further derive the operator $A^* \widehat A$.  Since $\epsilon(x,t_q)$ is bounded, it is not difficult to see that $\widehat Af(x, t_q)\in L^2(\mathbb R^d)$ by Assumptions \ref{asum:source1}, \ref{asum:psf1}, \ref{asum:illu1}. In the same way as for the derivations in  Section \ref{section:imagingkernel1}, we can write that
 	\begin{equation}
 	A^* \widehat A f = \int_{[0,1]^d} \hat G(z,y) f(y)dy,
 \end{equation}  		
 where $\widehat G(z,y)$ is defined by 
 \begin{equation}
\widehat G(z,y):=\frac{1}{N}\sum_{q=1}^N\overline{ \widehat I(z,t_q)}I(y,t_q)\int_{\mathbb R^d}\overline{k(x,z)}k(x,y)dx, \quad (z,y)\in \mathbb{R}^{2d}.  
\end{equation}
 
We now have the following theorem for the stability of recovery of the spectral data of the source $f$. 
\begin{thm}\label{thm:frestability2}
	Suppose that $k(x,y)=PSF(x-y)$, $PSF(y)I(y, t_q)\in L^1(\mathbb R^d)$, the $\mathcal F\left[I(y, t_q)\right]$'s are bounded, and the imaging kernel in (\ref{equ:imagingkernel1}) has the form $PSF_{\mathrm{multi}}(z-y)$. For a bounded $g$ supported on $[0,1]^d$ satisfying $\widehat Ag(x,t_q)=Af(x,t_q)+\sigma(x,t_q)$ with $\sigma(x,t_q)$ satisfying (\ref{equ:noiselevel1}) and $\widehat{A}$ defined by (\ref{equ:imagingoperator3}), we have 
	\begin{align*}
		\mathcal F\left[g-f\right](\xi) \leq  \frac{C_1(I,k) \sigma + C_2(I,k)\epsilon }{\mathcal F\left[PSF_{\mathrm{multi}}\right](\xi)}
	\end{align*}
	for $\xi$ satisfying $\mathcal F[PSF_{\mathrm{multi}}](\xi)\neq 0$.  Here, $C_1(I,k), C_2(I,k)$ are finite constants that depend on the illumination patterns and the point spread function $k(x,y)$. 
\end{thm}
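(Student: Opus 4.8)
The plan is to run the proof of Theorem \ref{thm:frestability1} almost verbatim, treating the error caused by the approximated illumination patterns as an extra perturbation of the same type as the measurement noise. First I would apply $A^*$ to the reconstruction constraint $\widehat Ag(x,t_q)=Af(x,t_q)+\sigma(x,t_q)$, obtaining $A^*\widehat Ag=A^*Af+A^*\sigma$. By the kernel identities already derived, $A^*\widehat Ag=\int_{[0,1]^d}\widehat G(z,y)g(y)\,dy$ and $A^*Af=\int_{[0,1]^d}G(z,y)f(y)\,dy$; substituting $\widehat I=I+\epsilon$ into the definitions of $\widehat G$ and $G$ shows $\widehat G=G+R$, where
\[
R(z,y)=\frac1N\sum_{q=1}^N\overline{\epsilon(z,t_q)}\,I(y,t_q)\int_{\mathbb R^d}\overline{k(x,z)}k(x,y)\,dx
\]
is \emph{linear} in $\epsilon$. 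Using the hypothesis $G(z,y)=PSF_{\mathrm{multi}}(z-y)$ and the zero-extensions $\tilde f,\tilde g$ from (\ref{equ:tildefg1}), this rearranges to $PSF_{\mathrm{multi}}*(\tilde g-\tilde f)(z)=A^*\sigma(z)-\int_{[0,1]^d}R(z,y)g(y)\,dy$. Taking Fourier transforms and dividing by $\mathcal F[PSF_{\mathrm{multi}}](\xi)$ on the set where it does not vanish gives
\[
\mathcal F[g-f](\xi)=\frac{\mathcal F[A^*\sigma](\xi)-\mathcal F\big[z\mapsto\int_{[0,1]^d}R(z,y)g(y)\,dy\big](\xi)}{\mathcal F[PSF_{\mathrm{multi}}](\xi)} .
\]
Since the hypotheses $k(x,y)=PSF(x-y)$, $PSF(y)I(y,t_q)\in L^1(\mathbb R^d)$ and $\mathcal F[I(y,t_q)]$ bounded are exactly those used in Theorem \ref{thm:frestability1}, that computation applies unchanged to give $\babs{\mathcal F[A^*\sigma](\xi)}\le C_1(I,k)\sigma$, so it only remains to bound the $R$-term by $C_2(I,k)\epsilon$.

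For this, write $\int_{[0,1]^d}R(z,y)g(y)\,dy=\frac1N\sum_{q=1}^N\overline{\epsilon(z,t_q)}\,\Phi_q(z)$ with $\Phi_q(z):=\int_{[0,1]^d}I(y,t_q)\Big(\int_{\mathbb R^d}\overline{k(x,z)}k(x,y)\,dx\Big)g(y)\,dy$. Since $\babs{\int_{\mathbb R^d}\overline{k(x,z)}k(x,y)\,dx}\le\bnorm{PSF}_{L^2}^2$ by Cauchy--Schwarz (the $L^2$ norm being translation invariant) and $[0,1]^d$ has measure $1$, we get $\babs{\Phi_q(z)}\le\bnorm{PSF}_{L^2}^2\sup_y\babs{I(y,t_q)}\sup_y\babs{g(y)}=:M_q$ for all $z$. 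Hence, using $\babs{\mathcal F[\phi](\xi)}\le\bnorm{\phi}_{L^1}$ and then (\ref{equ:illuperturblevel1}),
\[
\babs{\mathcal F\big[z\mapsto\int_{[0,1]^d}R(z,y)g(y)\,dy\big](\xi)}\le\frac1N\sum_{q=1}^N M_q\,\bnorm{\epsilon(\cdot,t_q)}_{L^1}\le\Big(\max_{1\le q\le N}M_q\Big)\epsilon=:C_2(I,k)\epsilon .
\]
All interchanges of integration are justified by Fubini exactly as in the proof of Theorem \ref{thm:frestability1}, using $PSF(y)I(y,t_q)\in L^1(\mathbb R^d)$, $\epsilon(\cdot,t_q)\in L^1(\mathbb R^d)$, and that $g$ is bounded with compact support. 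Plugging the two numerator bounds into the displayed identity gives $\babs{\mathcal F[g-f](\xi)}\le\frac{C_1(I,k)\sigma+C_2(I,k)\epsilon}{\babs{\mathcal F[PSF_{\mathrm{multi}}](\xi)}}$, which is the claim.

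The main obstacle is modest: essentially the only step beyond a mechanical adaptation of Theorem \ref{thm:frestability1} is the perturbation estimate, and the point to get right there is that $R=\widehat G-G$ is \emph{linear} in $\epsilon$ --- this is what makes its contribution $O(\epsilon)$ rather than, say, $O(\epsilon^2)$, and hence produces the linear-in-$\epsilon$ term asserted in the theorem --- together with the observation that the compact support of $g$ (with $PSF\in L^2(\mathbb R^d)$) already suffices to bound each $\Phi_q$ uniformly, without any additional decay hypothesis on $PSF$. One cosmetic point is that the constant $C_2$ produced this way also carries a factor $\sup_y\babs{g(y)}$; this can be absorbed into $C_2(I,k)$ by restricting attention to candidate reconstructions with an a priori amplitude bound, or simply recorded explicitly in the statement.
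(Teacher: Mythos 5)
Your proposal is correct and follows essentially the same route as the paper's proof: the same decomposition $\widehat G = G + G_\epsilon$ with $G_\epsilon$ linear in $\epsilon$, reuse of the Theorem \ref{thm:frestability1} bound $\babs{\mathcal F[A^*\sigma]}\le C_1(I,k)\sigma$, and a division by $\mathcal F[PSF_{\mathrm{multi}}]$ at the end. The only (minor) difference is in bounding the perturbation term: the paper applies Fubini and uses the band-limitedness of $PSF$ via $\mathcal F[\epsilon]*\mathcal F[k]$, whereas you bound $\bnorm{z\mapsto\overline{\epsilon(z,t_q)}\Phi_q(z)}_{L^1}$ directly with $\Phi_q$ controlled by Cauchy--Schwarz — a slightly more elementary estimate that reaches the same $C_2(I,k)\epsilon$ (and your remark that $C_2$ carries a factor $\sup_y|g(y)|$ applies equally to the paper's own constant).
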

\begin{proof}
	Note first that 
	\begin{align*}
	G_{\epsilon}(z,y):=	\widehat G(z,y)- G(z,y)= \frac{1}{N}\sum_{q=1}^N\overline{ \epsilon(z,t_q)}I(y,t_q)\int_{\mathbb R^d}\overline{k(x,z)}k(x,y)dx.
	\end{align*}
	By the condition on $g$, we have 
		\begin{align*}
		\mathcal F[A^*\sigma]=&\mathcal{F}[A^* \widehat Ag- A^* A f]\\
		=&\mathcal{F}\left[\int_{\mathbb R^d}\widehat{G}(z,y)\tilde{g}(y)- G(z,y)\tilde{f}(y)dy\right] \quad  \bigg( \text{$\tilde{f}, \tilde{g}$ defined in (\ref{equ:tildefg1})} \bigg)\\
		=&\mathcal{F}\left[\int_{\mathbb R^d}{G}(z,y)\tilde{g}(y)- G(z,y)\tilde{f}(y)dy\right] + \mathcal{F}\left[\int_{\mathbb R^d}{G}_{\epsilon}(z,y)\tilde{g}(y)dy \right] \\ 
		=&\mathcal{F}\left[PSF_{\mathrm{multi}}*(\tilde{g}-\tilde{f})\right]+ \mathcal{F}\left[\int_{\mathbb R^d}{G}_{\epsilon}(z,y)\tilde{g}(y)dy \right]  \\
		=&\mathcal{F}\left[PSF_{\mathrm{multi}}\right]\mathcal F\left[\tilde{g}-\tilde{f}\right]+\mathcal{F}\left[\int_{\mathbb R^d}{G}_{\epsilon}(z,y)\tilde{g}(y)dy \right] \\
		=& \mathcal{F}\left[PSF_{\mathrm{multi}}\right]\mathcal F\left[g-f\right]+\mathcal{F}\left[\int_{\mathbb R^d}{G}_{\epsilon}(z,y)\tilde{g}(y)dy \right] .
	\end{align*}
Note that by the proof of Theorem \ref{thm:frestability1}, we have $\babs{\mathcal F[A^*\sigma]} \leq C(I, k)\sigma$ where $C(I, k)$ is a finite constant depending on the illumination pattern and the point spread function. Now we estimate $\mathcal{F}\left[\int_{\mathbb R^d}{G}_{\epsilon}(z,y)\tilde{g}(y)dy \right]$. We have 
\begin{align}\label{equ:prooffrestable2equ1}
	&\babs{\mathcal{F}\left[\int_{\mathbb R^d}{G}_{\epsilon}(z,y)\tilde{g}(y)dy \right](\xi)} \nonumber \\
	=& \babs{\mathcal{F}\left[\int_{\mathbb R^d}\frac{1}{N}\sum_{q=1}^N\overline{ \epsilon(z,t_q)}I(y,t_q)\int_{\mathbb R^d}\overline{k(x,z)}k(x,y)dx\tilde{g}(y)dy \right] (\xi)} \nonumber \\
	=&\babs{\int_{\mathbb R^d} e^{i z \xi} \int_{\mathbb R^d}\frac{1}{N}\sum_{q=1}^N\overline{ \epsilon(z,t_q)}I(y,t_q)\int_{\mathbb R^d}\overline{k(x,z)}k(x,y)dx\tilde{g}(y)dy dz }\nonumber \\
	\leq & \frac{1}{N}\sum_{q=1}^{N} \babs{\int_{\mathbb R^d}\int_{\mathbb R^d}\babs{\overline{ \epsilon(z,t_q)}I(y,t_q)} \int_{\mathbb R^d} \babs{\overline{k(x,z)}k(x,y)}dx \babs{\tilde{g}(y)}dy dz}. 
\end{align}
It is not difficult to see that
\[
\babs{\int_{\mathbb R^d}\int_{\mathbb R^d}\babs{\overline{ \epsilon(z,t_q)}I(y,t_q)} \int_{\mathbb R^d} \babs{\overline{k(x,z)}k(x,y)}dx \babs{\tilde{g}(y)}dy dz}< +\infty.
\]
Thus 
\[
\int_{\mathbb R^{3d}} \babs{\overline{ \epsilon(z,t_q)}I(y,t_q)\overline{k(x,z)}k(x,y)\tilde{g}(y)}dx dy dz <+\infty. 
\]
By Fubini's theorem, we have 
\begin{align*}
&\babs{\mathcal{F}\left[\int_{\mathbb R^d}{G}_{\epsilon}(z,y)\tilde{g}(y)dy \right](\xi)} \\
=& \babs{\frac{1}{N}\sum_{q=1}^{N}\int_{\mathbb R^d}  \int_{\mathbb R^d}\int_{\mathbb R^d} e^{iz \xi}\overline{\epsilon(z,t_q)k(x,z)}dz I(y,t_q)k(x,y)\tilde{g}(y)dxdy}\\
\leq& \frac{1}{N} \sum_{q=1}^{N}\babs{\int_{\mathbb R^d}  \int_{\mathbb R^d}\int_{\mathbb R^d} e^{iz \xi}\overline{\epsilon(z,t_q)k(x,z)}dz I(y,t_q)k(x,y)\tilde{g}(y)dxdy}\\
=& \frac{1}{N} \sum_{q=1}^{N}\babs{\int_{\mathbb R^d}  \int_{\mathbb R^d} \mathcal F\left[\epsilon(z,t_q)\right]* \mathcal F\left[k(x,z)\right](\xi, x) I(y,t_q)k(x,y)\tilde{g}(y) dx dy}.
\end{align*}
By (\ref{equ:illuperturblevel1}), we have $\babs{F\left[\epsilon(z,t_q)\right](\xi)}\leq \epsilon$. Meanwhile, $\mathcal F[k(x,z)](\xi)$ is zero outside a bounded interval. Thus we have $\babs{ \mathcal F[\epsilon(z,t_q)]* \mathcal F[k(x,z)](\xi, x)}\leq C_1 \epsilon$ for some constant $C_1$. Then it is not difficult to see that
\[
\babs{\mathcal{F}\left[\int_{\mathbb R^d}{G}_{\epsilon}(z,y)\tilde{g}(y)dy \right](\xi)} \leq C_2 \epsilon
\]
for a certain constant $C_2$ under the conditions of the theorem and the assumptions made in Section \ref{section:imagingkernel1}. This together with (\ref{equ:prooffrestable2equ1}) completes the proof. 
\end{proof}
	
	Theorem \ref{thm:frestability2} shows that if we stably estimate the illumination patterns, then we can always stably reconstruct the spectral information of the source $f$ inside the bandpass of the $PSF_{\mathrm{multi}}$.  This elucidates the stability of many imaging modalities where the illumination patterns are estimated \cite{beam2022}. 
	
		\subsection{Discrete measurement for each image}\label{section:discreteimaging}
		Since the measurement is taken at some discrete points in real applications, to complete our theory,  we show in this section that we have the same imaging kernel $G(z,y)$ for the case of discrete measurement under certain conditions.  For the sake of presentation, we further make the following simple assumption on the point spread function, which is also compatible with practical applications. 
		
		\begin{assum}\label{asum:psf2}
		The point spread function $PSF$ is smooth and its gradient is bounded. 
		\end{assum}
		
		Suppose we take the measurement at $M^d$ evenly-spaced points $x_j$'s in $[-R,R]^d$ for a large enough $R$ for a single snaptshot. Suppose we have $N$ times of illuminations, the noiseless measurements are 
		\[
		\hat f(x_j,t_q)=\int_{[0,1]^d}k(x_j,y)I(y,t_q)f(y)dy, \quad j=1, \cdots, M^d, \ q=1,\cdots, N.
		\] 
		We define the operator A by
		\begin{equation}\label{equ:discreteoperator1}
		Af=\int_{[0,1]^d}k(x_j,y)I(y,t_q)f(y)dy=\int_{[0,1]^d}Q(x_j,y,t_q)f(y)dy, \quad j=1, \cdots, M^d, \ q=1,\cdots, N,
		\end{equation}
		where the function $Q(x_j,y,t_q)=k(x_j,y)I(y,t_q)$. Now, we define the inner product by
		\[
		\left\langle Af, g(x,t) \right \rangle= \frac{1}{M^dN}\sum_{q=1}^N\sum_{j=1}^{M^d}Af\overline{g(x_j,t_q)}.
		\]
		Calculating the adjoint operator $A^*$ and $A^*A$, we get that 
		\begin{equation}
			A^*g =\frac{1}{M^dN}\sum_{q=1}^N \sum_{j=1}^{M^d}\overline{ Q(x_j,y,t_q)}g(x_j,t_q),
		\end{equation}  
	 and 
		\begin{align*}
			A^*Af(z)=\int_{[0,1]^d} \frac{1}{N}\sum_{q=1}^N\overline{I(z,t_q)}I(y,t_q)\frac{1}{M^d}\sum_{j=1}^{M^d}\overline{k(x_j,z)}k(x_j,y)f(y)dy.
		\end{align*}
	Define $W(z,y)=\frac{1}{N}\sum_{q=1}^N\overline{I(z,t_q)}I(y,t_q)\frac{1}{M^d}\sum_{j=1}^{M^d}\overline{k(x_j,z)}k(x_j,y)$, we have the following lemma. 
	\begin{lem}\label{lem:discreteapprox1}
	For $(z,y)\in [0,1]^{2d}$ and $M$ sufficiently large, we have 
\begin{equation}
	W(z,y)=\frac{1}{N}\sum_{q=1}^N\overline{I(z,t_q)}I(y,t_q)\frac{1}{(2R)^d}\int_{[-R,R]^d}\overline{k(x,z)}k(x,y)dx+ 	C(R,M)
\end{equation}
with $|C(R,M)|\leq \frac{CR}{M}$ for a finite constant $C$. 
	\end{lem}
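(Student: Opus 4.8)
The statement is essentially a Riemann-sum error estimate: the term
\[
\frac{1}{M^d}\sum_{j=1}^{M^d}\overline{k(x_j,z)}k(x_j,y)
\]
is the midpoint (or left-endpoint) quadrature rule for the integral $\frac{1}{(2R)^d}\int_{[-R,R]^d}\overline{k(x,z)}k(x,y)\,dx$ over $M^d$ evenly spaced cells in $[-R,R]^d$ of side length $h=2R/M$. The rest of $W(z,y)$ is the fixed, $j$-independent prefactor $\frac{1}{N}\sum_{q=1}^N\overline{I(z,t_q)}I(y,t_q)$, which is uniformly bounded on $[0,1]^{2d}$ by Assumption \ref{asum:illu1}. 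So the whole proof reduces to bounding the quadrature error of the single-variable function $\Phi_{z,y}(x):=\overline{k(x,z)}k(x,y)=\overline{PSF(x-z)}\,PSF(x-y)$, uniformly for $(z,y)\in[0,1]^{2d}$, and then multiplying by the bounded prefactor.

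First I would recall the standard one-cell estimate: if a function $\Phi$ is $C^1$ on a cube $C$ of side $h$ with sample point $x_j$ interior to $C$, then $\bigl|\,h^d\Phi(x_j)-\int_C\Phi(x)\,dx\,\bigr|\le h^d\cdot h\sqrt{d}\,\sup_{C}|\nabla\Phi|$, just by the mean value inequality $|\Phi(x)-\Phi(x_j)|\le \sqrt d\,h\,\sup|\nabla\Phi|$ and integrating. Summing over the $M^d$ cells gives
\[
\Babs{\frac{1}{M^d}\sum_{j=1}^{M^d}\Phi(x_j)-\frac{1}{(2R)^d}\int_{[-R,R]^d}\Phi(x)\,dx}\le \frac{\sqrt d\,(2R)}{M}\,\sup_{[-R,R]^d}|\nabla\Phi_{z,y}|.
\]
The key point is then that $\sup_{[-R,R]^d}|\nabla\Phi_{z,y}|$ is bounded by a constant independent of $(z,y)\in[0,1]^{2d}$: since $\nabla_x\Phi_{z,y}(x)=\overline{\nabla PSF(x-z)}\,PSF(x-y)+\overline{PSF(x-z)}\,\nabla PSF(x-y)$, Assumption \ref{asum:psf2} (smoothness of $PSF$ with bounded gradient) together with Assumption \ref{asum:psf1} (boundedness of $PSF$, i.e. of $k$) gives $|\nabla_x\Phi_{z,y}(x)|\le 2\,\|PSF\|_\infty\,\|\nabla PSF\|_\infty=:c_0$ for all $x,z,y$. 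Multiplying the quadrature bound by the prefactor bound $\frac1N\sum_q|I(z,t_q)||I(y,t_q)|\le \max_q\|I(\cdot,t_q)\|_\infty^2=:c_1$ yields the claim with $C(R,M)$ the difference term and $|C(R,M)|\le c_0c_1\sqrt d\,\tfrac{2R}{M}=\tfrac{CR}{M}$ for $C:=2\sqrt d\,c_0c_1$.

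There is one genuine subtlety worth flagging as the main obstacle: the factor $\frac{1}{(2R)^d}$ versus $\frac{1}{(2R)}$ (and more importantly $R$ versus $R/M$) in the final bound. With a truly fixed domain $[-R,R]^d$ and $h=2R/M$, the natural error is $O(h)=O(R/M)$ with a constant that picks up no extra $R$-dependence beyond what I wrote, so the stated $|C(R,M)|\le CR/M$ is exactly what comes out; I would present it that way. If instead one wants the estimate to be genuinely uniform as $R\to\infty$ as well (so that the earlier "$R$ large enough" truncation of the measurement window is harmless), one should note that $\Phi_{z,y}$ and $\nabla\Phi_{z,y}$ are actually supported, up to negligible tails, near $[0,1]^d$ because $PSF\in L^2$ decays; then the effective integration region has $O(1)$ measure and the constant does not grow with $R$. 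I would include a remark to that effect but keep the clean $CR/M$ form in the lemma statement since that is all that is used. The remaining bookkeeping — interchanging the finite sums over $q$ and $j$ with the quadrature error, and checking that the interior sample-point hypothesis holds for the chosen evenly spaced grid — is routine and I would dispatch it in one line each.
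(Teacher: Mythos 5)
Your proof is correct and follows essentially the same route as the paper's: both bound the Riemann-sum error cell by cell via the mean value inequality, using the uniform bound on $\nabla\bigl(\overline{k(x,z)}k(x,y)\bigr)$ supplied by Assumptions \ref{asum:psf1} and \ref{asum:psf2}, and then absorb the bounded illumination prefactor into the constant. Your explicit product-rule bound on the gradient and the remark on the $R$-dependence of the constant are slightly more detailed than the paper's treatment but do not change the argument.
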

\begin{proof}
Let $F(x)=\overline{k(x,z)}k(x,y)$ and $\Delta_{x_j}$ be the hypercube $\left[x_{j,1}, x_{j,1}+\frac{2R}{M}\right]\times \left[x_{j,2}, x_{j,2}+\frac{2R}{M}\right]\times \cdots \times \left[x_{j,d}, x_{j,d}+\frac{2R}{M}\right]$. Then
\begin{align*}
	\frac{1}{(2R)^{d}}\int_{[-R,R]^d}\overline{k(x,z)}k(x,y)dx = \frac{1}{(2R)^d}\sum_{j=1}^{M^d}\int_ {\Delta_{x_j}} F(x) dx.
\end{align*}
On the other hand, 
\begin{align*}
&\babs{ \frac{1}{(2R)^d} 	\sum_{j=1}^{M^d}\int_ {\Delta_{x_j}} F(x) dx - \sum_{j=1}^{M^d} F(x_j)}\\
=&\frac{1}{(2R)^d} \babs{ \sum_{j=1}^{M^d}\int_ {\Delta_{x_j}} F(x) dx - \sum_{j=1}^{M^d}\int_ {\Delta_{x_j}} F(x_j) dx}\\
\leq& \frac{1}{(2R)^d} \sum_{j=1}^{M^d}\int_ {\Delta_{x_j}} \babs{F(x)-F(x_j)} dx\\
\leq & \frac{1}{(2R)^d} \max_{\xi \in \mathbb R^d}\babs{\nabla F(\xi)}\sum_{j=1}^{M^d}\int_ {\Delta_{x_j}} \babs{z-x_j} dz\\
 \leq& \frac{1}{(2R)^d}  \max_{\xi \in \mathbb R^d}\babs{\nabla F(\xi)}\sum_{j=1}^{M^d}\int_ {\Delta_{x_j}} \left(z_1-x_{j,1}+\cdots + z_d-x_{j,d}\right) dz \\ 
= & \frac{1}{(2R)^d} \max_{\xi \in \mathbb R^d}\babs{\nabla F(\xi)}\sum_{j=1}^{M^d} \frac{R}{M} \left(\frac{2R}{M}\right)^d\\
= & \max_{\xi \in \mathbb R^d}\babs{\nabla F(\xi)} \frac{R}{M}.
\end{align*}
By Assumptions \ref{asum:psf1} and \ref{asum:psf2}, for all $(z, y)\in [0,1]^{2d}$, $\max_{\xi \in \mathbb R^d}\babs{\nabla F(\xi)}<C<+\infty$ for a uniform contant $C$. This proves the lemma. 
\end{proof}

		Furthermore, we have the following lemma relating $W(z,y)$ to the imaging kernel $G(z,y)$ in \eqref{equ:imagingkernel1}. 
		\begin{lem}\label{lem:approximatekernel1}
			Under the condition that for $(z,y)\in [0,1]^{2d}$, there exist $\alpha, C>0$ such that
			\[
			\babs{\int_{\mathbb R^d \setminus [-R,R]^d}\overline{k(x,z)}k(x,y)dx}\leq\frac{C}{(R-1)^{\alpha}},
			\] 
			we have 
			\[	W(z,y)=\frac{1}{N(2R)^d}\sum_{q=1}^N\overline{I(z,t_q)}I(y,t_q)\int_{\mathbb R^d}\overline{k(x,z)}k(x,y)dx+O\left(\frac{1}{R^d(R-1)^\alpha}\right)+O\left(\frac{R}{M}\right).
			\]
		\end{lem}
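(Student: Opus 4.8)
The plan is to combine Lemma~\ref{lem:discreteapprox1} with the assumed tail decay of the point spread function autocorrelation. First I would invoke Lemma~\ref{lem:discreteapprox1}, which already gives, for $(z,y)\in[0,1]^{2d}$ and $M$ sufficiently large,
\[
W(z,y)=\frac{1}{N}\sum_{q=1}^N\overline{I(z,t_q)}I(y,t_q)\,\frac{1}{(2R)^d}\int_{[-R,R]^d}\overline{k(x,z)}k(x,y)\,dx+C(R,M),\qquad |C(R,M)|\le \frac{CR}{M},
\]
where the constant $C$ is uniform over $(z,y)\in[0,1]^{2d}$ thanks to Assumptions~\ref{asum:psf1} and~\ref{asum:psf2}. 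So the only thing left is to pass from the truncated integral over $[-R,R]^d$ to the full integral over $\mathbb{R}^d$.

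Next I would split
\[
\int_{[-R,R]^d}\overline{k(x,z)}k(x,y)\,dx=\int_{\mathbb{R}^d}\overline{k(x,z)}k(x,y)\,dx-\int_{\mathbb{R}^d\setminus[-R,R]^d}\overline{k(x,z)}k(x,y)\,dx,
\]
and bound the last term by $\frac{C}{(R-1)^\alpha}$ using the hypothesis of the lemma. After dividing by $(2R)^d$ this tail contributes a term of order $O\big(\frac{1}{R^d(R-1)^\alpha}\big)$, uniformly in $(z,y)\in[0,1]^{2d}$. Finally, since each $I(\cdot,t_q)$ is bounded on $[0,1]^d$ by Assumption~\ref{asum:illu1} and $N$ is fixed, the prefactor $\frac{1}{N}\sum_{q=1}^N\overline{I(z,t_q)}I(y,t_q)$ is uniformly bounded, so multiplying the two error terms by it leaves their orders unchanged; collecting everything gives exactly
\[
W(z,y)=\frac{1}{N(2R)^d}\sum_{q=1}^N\overline{I(z,t_q)}I(y,t_q)\int_{\mathbb{R}^d}\overline{k(x,z)}k(x,y)\,dx+O\Big(\frac{1}{R^d(R-1)^\alpha}\Big)+O\Big(\frac{R}{M}\Big).
\]

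There is no serious obstacle here; the proof is essentially the bookkeeping of two independent asymptotic parameters $R$ and $M$. The one point that needs a little care is making sure the constant in Lemma~\ref{lem:discreteapprox1} and the constant in the decay hypothesis can both be taken uniform over the compact square $[0,1]^{2d}$ — which they can, because $k(x,y)=PSF(x-y)$ with $PSF\in L^2(\mathbb{R}^d)$ having a bounded gradient (Assumptions~\ref{asum:psf1}, \ref{asum:psf2}) — so that the displayed estimate holds simultaneously for all source/evaluation points, as the statement requires.
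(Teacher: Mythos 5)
Your proposal is correct and is exactly the argument the paper intends (the paper in fact states Lemma~\ref{lem:approximatekernel1} without writing out a proof, treating it as an immediate consequence of Lemma~\ref{lem:discreteapprox1} together with the assumed tail bound): you split the truncated integral into the full integral minus the tail, control the tail by the hypothesis, normalize by $(2R)^d$, and absorb the bounded illumination prefactor into the constants. The uniformity over $(z,y)\in[0,1]^{2d}$ that you flag is indeed the only point requiring care, and it is supplied by the hypothesis itself and by Assumptions~\ref{asum:psf1} and~\ref{asum:psf2} as you note.
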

	Therefore, when we consider that $M$ is large enough so that $O\left(\frac{R}{M}\right)$ is of at most the same order as $O\left(\frac{1}{R^d(R-1)^{\alpha}}\right)$, we have 
	\[
	W(z,y)=\frac{1}{(2R)^d}G(z,y)+O\left(\frac{1}{R^d(R-1)^{\alpha}}\right), \quad (z,y)\in [0,1]^2,
	\] 
	for $G(z,y)$ being the imaging kernel defined in (\ref{equ:imagingkernel1}). This demonstrates that we will have the same imaging kernel $G(z,y)$ when taking a sufficient number of discrete measurements. 
	
	We remark that the condition 
		\[
	\babs{\int_{\mathbb R^d \setminus [-R,R]^d}\overline{k(x,z)}k(x,y)dx}\leq\frac{C}{(R-1)^{\alpha}}
	\] 
	in the lemma holds for most of the point spread functions in practice, such as $\frac{\sin |x|}{|x|}$ in the one-dimensional space and $\left(\frac{J_1(r)}{r}\right)^2$ in the two-dimensional space, where $J_1$ is the Bessel function of the first kind and order one.

		\section{Resolution study for some imaging modalities}\label{section:resoluofimgingmodality}
		By the operator theory in Section \ref{section:convolutheory1}, we have shown, for example by (\ref{equ:imagingequation1}), that the imaging kernel in the multi-illumination case can be viewed as $G(z,y)$ defined by (\ref{equ:imagingkernel1}). In this section, we analyze the resolution of some multi-illumination imaging modalities by computing the bandwidth of the point spread function $PSF_{\mathrm{multi}}$ from the kernel $G(z,y)$.  
		
		We will show that $G(z,y)$ is equal to or approximated by $PSF_{\mathrm{multi}}(z-y)$, where
		\[
		PSF_{\mathrm{multi}}(z-y)= f_{\mathrm{ILF}}(z-y)f_{\mathrm{PSF}}(z-y)
		\]
		with $f_{\mathrm{ILF}}$ and $f_{\mathrm{PSF}}$ being determined respectively by the illumination pattern $I$ and the point spread function $PSF$ of the imaging system.  The spectral data of $PSF_{\mathrm{multi}}$ is given by
		\[
		\mathcal F[PSF_{\mathrm{multi}}] = \mathcal{F} [f_{\mathrm{ILF}}]*\mathcal{F} [f_{\mathrm{PSF}}]. 
		\]
		This clearly elucidates that  multi-illumination imaging extends the bandwidth of the point spread functions of the imaging system through convoluting them by the illumination patterns and thus increases the resolution. This explains the resolution of many existing imaging modalities in a new and unified way. It is also consistent with the common sense in  multi-illumination imaging that the resolution is determined by the sum of the cutoff frequency of the point spread function and the essential maximum cutoff frequency in the illumination patterns.

		\subsection{Plane Wave Illumination}\label{section:planewaveillumination}
		In structured illumination microscopy (SIM), the sources are illuminated by plane waves with cutoff frequency $\Omega$. In the one-dimensional case, we have $I(y,t) = e^{i\omega(t_q)\Omega y}$, where $\omega(t_q)$ represents the direction of the plane wave. Therefore, $\overline{I(z,t_q)}I(y,t_q)=e^{-i\omega(t_q)\Omega (z-y)}$. In the one-dimensional case, suppose we illuminate the source by two plane waves with opposite directions
		$$I(y,t_1) = e^{i\Omega y}, \quad I(y, t_2) =  e^{-i\Omega y}.$$ We also recall that the point spread function in a single frame is $k(x,y)=PSF(x-y)$. Thus  $G$ in (\ref{equ:imagingkernel1}) is given by
		\begin{align*}
			G(z,y)=&\frac{1}{2}\left(e^{-i\Omega(z-y)}+e^{i\Omega(z-y)}\right)\int_{-\infty}^{\infty}\overline{PSF(x-z)}PSF(x-y)dx\\
			=&\frac{1}{2}\left(e^{-i\Omega(z-y)}+e^{i\Omega(z-y)}\right)\int_{-\infty}^{\infty}\overline{PSF(x-(z-y))}PSF(x)dx \\
			=&: PSF_{\mathrm{multi}}(z-y). 
		\end{align*}
		For example, when $PSF = \frac{\sin  \Omega x}{x}$, the Fourier expansion of $PSF_{\mathrm{multi}}$ is
		\[
		\mathcal{F}[PSF_{\mathrm{multi}}](\xi)=\frac{c}{2}\left(\delta(x-\Omega)+\delta(x+\Omega)\right)* \mathbb{1}_{[-\Omega,\Omega]}(\xi)=\frac{c}{2}\mathbb{1}_{[-2\Omega,2\Omega]}(\xi)
		\]
		for some constant $c$. The band limit is thus doubled; See also Figure \ref{fig:simspectraldata} for an illustration. This demonstrates the well-known twofold resolution improvement of SIM. The above elucidation can be extended to higher dimensions. In fact, for any $PSF\in L^1(\mathbb R^d)$, we can analyze the stability of SIM method by Theorem \ref{thm:frestability1}.  Although our analysis leads to the same result as the one from the frequency explanation of SIM's resolution enhancement, the arguments of the two explanations are actually different. Our new understanding is based on extracting a new basic imaging kernel in the multi-illumination imaging problem, rather than combining all the frequency information from multiple images in SIM. 
		
		\begin{figure}[!h]
			\centering
			\begin{subfigure}[b]{0.48\textwidth}
				\centering
				\includegraphics[width=\textwidth]{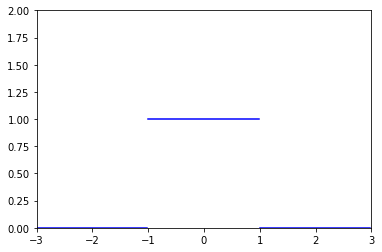}
				\caption{Spectral data of $PSF$}
			\end{subfigure}
			\begin{subfigure}[b]{0.48\textwidth}
				\centering
				\includegraphics[width=\textwidth]{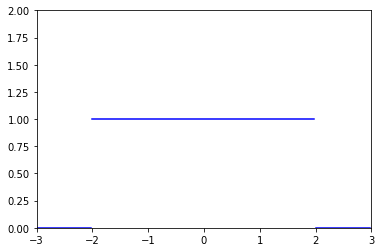}
				\caption{Spectral data of $PSF_{\mathrm{multi}}$}
			\end{subfigure}
			\caption{ Spectral data of $PSF$ and $PSF_{\mathrm{multi}}$ in SIM.}
			\label{fig:simspectraldata}
		\end{figure}

		\subsection{Illumination pattern generated by point sources}\label{section:pointsourceillumination}
		In many super-resolution techniques, the illumination pattern is generated by other sources and has the form of a point spread function. This time, the illumination pattern $I(x,t)$ takes the form $I(x,t)=IP(x-t)$, where $t$ denotes the location of the illumination point. We will show that with a sufficient number of illuminations and measurements, an imaging kernel $G(z,y)$ similar to (\ref{equ:imagingkernel1}) can be derived, which can also be represented by $PSF_{\mathrm{multi}}(z-y)$ with $PSF_{\mathrm{multi}}$ being a new point spread function. For the case when the number of illuminations is not large enough such as in \cite{beam2022}, one would expect that the resolution improvement for resolving general sources will not be better than what derived below. 
		
		To be more specific, we consider the noisy discrete measurements as 
		\begin{equation}\label{equ:noisymeasurement2}
		\hat f(x_j,t_q)=\int_{[0,1]^d}k(x_j,y)I(y,t_q)f(y)dy+\sigma(x_j, t_q), \quad j=1, \cdots, M^d, \ q=1,\cdots, N,
		\end{equation}
		with  $\left|\sigma(x_j, t_q)\right|\leq \sigma$. For convenience of presentation,  we suppose that the illumination points $t_q$'s are evenly spaced in$[-T, T]^d$.  Since $I(x,t)=IP(x-t)$ is generated by point sources, we can assume that
		\begin{align*}
			\babs{\int_{\mathbb R^d \setminus [-T,T]^d}\overline{I(x,z)}I(x,y)dx}\leq\frac{C}{(T-1)^{\alpha}}, \  (z,y)\in [0,1]^{2}. 
			\end{align*} 
		Consequently, in the same way as in Lemma \ref{lem:approximatekernel1}, for large enough $N,M, R, T$, we can have
		\begin{align*}
		&\frac{1}{M^dN}\sum_{q=1}^N \overline{I(z,t_q)}I(y,t_q)\sum_{j=1}^{M^d}\overline{k(x_j,z)}k(x_j,y)\\
		=&\frac{1}{(2T)^d(2R)^d} \int_{\mathbb R^d} \overline{I(z,t)}I(y,t)dt \int_{\mathbb R^d} \overline{k(x,z)}k(x,y) dx+ \frac{1}{(2T)^d(2R)^d} \int_{\mathbb R^d}  O(\sigma). 
		\end{align*}
		Therefore, 
		\begin{align*}
			&\babs{\frac{1}{M^dN}\sum_{q=1}^N \sum_{j=1}^{M^d} \overline{I(z,t_q)} \overline{k(x_j, z)} \sigma(x_j, t_q)}\\
			\leq &\sigma \frac{1}{M^dN}\sum_{q=1}^N \sum_{j=1}^{M^d} \babs{\overline{I(z,t_q)} \overline{k(x_j, z)}} =\frac{1}{(2T)^d(2R)^d} O(\sigma). 
		\end{align*}
		 Thus, by the results of Section \ref{section:discreteimaging}, for the imaging operator $A$ defined in (\ref{equ:discreteoperator1}), we have 
		 \begin{align*}
		 &A^*\hat f =  A^*Af + A^* \sigma \\
		 =& \int_{[0,1]^d} \frac{1}{N}\sum_{q=1}^N\overline{I(z,t_q)}I(y,t_q)\frac{1}{M^d}\sum_{j=1}^{M^d}\overline{k(x_j,z)}k(x_j,y)f(y)dy+\frac{1}{M^dN}\sum_{q=1}^N \sum_{j=1}^{M^d} \overline{I(y,t_q)} \overline{k(x_j, y)} \sigma(x_j, t_q)\\
		 =& \frac{1}{(2T)^d(2R)^d}   \int_{[0,1]^d} \int_{\mathbb{R}^d} \overline{I(z,t)}I(y,t) dt\int_{\mathbb{R}^d}\overline{k(x,z)}k(x,y)dx f(y)dy + O(\sigma). 
		 \end{align*} 
		 Hence, we consider the  imaging kernel $G(z,y)$ as
		 \[
		 G(z,y)=\int_{\mathbb{R}^d} \overline{I(z,t)}I(y,t) dt\int_{\mathbb{R}^d}\overline{k(x,z)}k(x,y)dx,
		 \]
		 and obtain that 
		 \[
		 (2T)^d(2R)^dA^*\hat f = \int_{[0,1]^d} G(z, y)dy + O(\sigma). 
		 \]
		 Since $I(x,t)=IP(x-t)$, we get
		\begin{align}\label{equ:PSFmulti1}
			G(z,y)&=\int_{\mathbb{R}^d} \overline{I(z,t)}I(y,t) dt\int_{\mathbb{R}^d}\overline{k(x,z)}k(x,y)dx \nonumber \\
			&=\int_{\mathbb{R}^d}  \overline{IP(z-t)}IP(y-t)dt \int_{\mathbb{R}^d}\overline{PSF(x-z)}PSF(x-y)dx\nonumber \\
			&=\int_{\mathbb{R}^d} \overline{IP(z-y+t)}IP(t)dt \int_{\mathbb{R}^d}\overline{PSF(x-(z-y))}PSF(x)dx\nonumber \\
			&=:PSF_{\mathrm{multi}}(z-y). 
		\end{align}	   
		Now, for different imaging modalities we can characterize the bandwidth of the corresponding $PSF_{\mathrm{multi}}$'s by their Fourier transforms. 
		
		On the other hand, in many imaging modalities, the sources are illuminated by light generated by multiple sources. For these cases, we can model the illumination patterns $I(y, \tilde{t}_q)$'s by $I(y, \tilde{t}_q) = \sum_{l=1}^L b_lIP(y-\tilde{t}_{q,l})$. Thus, the noisy images are 
		\begin{align*}
		\hat g(x_l, \tilde{t}_q) =& \int_{[0,1]^d}I(y, \tilde{t}_q)k(x_j, y) f(y)dy+\sigma(x_j, \tilde{t}_q) \\
		=& \sum_{l=1}^L b_l \int_{[0,1]^d} IP(y-\tilde{t}_{q,l}) k(x_j, y) f(y)dy+\sigma(x, \tilde{t}_q), \ q=1,\cdots, N, j =1, \cdots, M^d,
		\end{align*}
		with $\sigma(x_j, \tilde{t}_q)$'s being the noise. Since the measurement constraint 
		\[
		\babs{\hat g(x_j, \tilde{t}_q)-  \sum_{l=1}^L b_l \int_{[0,1]^d} IP(y-\tilde{t}_{q,l}) k(x_j, y) f(y)dy}\lesssim \sigma,\ q=1,\cdots, N, j =1, \cdots, M^d,
		\]
		can also be generated by
		\[
		\babs{\hat h(x_j, {t}_q)-  \int_{[0,1]^d} IP(y-{t}_{q}) k(x_j, y) f(y)dy}\lesssim \sigma,\  j =1, \cdots, M^d,
		\]
		for certain $t_q$'s and 
		\[
		\hat h(x_j, {t}_q)=  \int_{[0,1]^d} IP(y-{t}_{q})k(x_j, y) f(y)dy+ \sigma(x_j, {t}_q),
		\] 
		the stability for the imaging where the illumination patterns are generated by multiple sources can be analyzed in the same way as in the case when the illumination patterns are generated by a single point source. 

	As an example, we consider  Brownian excitation amplitude modification (BEAM) developed in \cite{beam2022}. In BEAM, the illumination patterns are the waves scattered by the  randomly moving Ag particles. As shown there, assuming that the Ag particles are far apart from each other, the total field $E_{tot}^{\omega}$ generated by the random array of nanoparticles can then be approximated by 
		\begin{align*}
			E_{tot}^{\omega}(r) \approx E_{inc}^{\omega}(r)+\alpha(\omega) \sum_{j=1}^s G^{\omega}\left(r, r_{j}\right) E_{i n c}^{\omega}\left(r_{j}\right)+\alpha(\omega)^{2} \sum_{j=1}^s \sum_{k=1}^s G^{\omega}\left(r, r_{j}\right) G^{\omega}\left(r_{j}, r_{k}\right) E_{i n c}^{\omega}\left(r_{j}\right),
		\end{align*}
	where $E_{inc}^{\omega}$ is the incident wave, $\alpha(\omega)$ is the polarizability of a single nanoparticle and $G^{\omega}(r, r_j)$ is the Green function. The Green function  $G^{\omega}(z,y)$ is approximately $\frac{e^{ik|z-y|}}{|z-y|}C$ for some constant $C$ with $k$ being the wave number of background medium.  The sources are illuminated by the intensity $I(r)$ of the total field that $I(r)=\left|E_{\text {tot }}^{\omega}(r)\right|^{2}$.
	Since the wave number of the illumination patterns is the wave number of the background medium, which is close to that of the point spread functions. Thus, as we have seen in this section, with multiple illuminations, the cutoff frequency of the point spread function  can be at most increased by about two times, which allows for a corresponding improvement in resolution when resolving general sources.  We note that, from Supplementary Figure 1:E in \cite{beam2022},  BEAM can improve the resolution by more than $1.5$ for both two- and four-source recovery, which is consistent with our theoretical prediction in this paper.

  \subsection{Single molecule  localization microscopy}
 Single molecule localization microscopy (SMLM) \cite{lelek2021single, betzig2006imaging, hess2006ultra, STORM, heilemann2008subdiffraction} describes a family of powerful imaging techniques that dramatically improve spatial resolution to the nanometer scale by computationally localizing individual fluorescent molecules, among which the most well-known imaging modalities are STORM \cite{STORM} and PALM \cite{betzig2006imaging}. 

  Since at each frame, only one point or well-separated point sources are illuminated in such SMLM techniques, we can model the illumination patterns in these imaging modalities as $\delta$ function or a continuous function $IP$ with a sharp peak. Note that by Theorem \ref{thm:frestability1}, we have a stability results for these imaging modalities when the illumination patterns are modeled by a function $IP\in C(\mathbb{R}) \cap L^1(\mathbb{R})$.  Thus, by (\ref{equ:PSFmulti1}), 
  \[
  PSF_{\mathrm{multi}}(z-y) = \int_{\mathbb{R}} \overline{IP(z-y+t)}IP(t)dt \int_{\mathbb{R}}\overline{PSF(x-(z-y))}PSF(x)dx. 
  \]
  Since $IP$ has a sharp peak, the bandwidth of $PSF_{\mathrm{multi}}$ is considerably extended, which ensures that these imaging modalities have excellent resolution improvement. In experiments,  PALM \cite{betzig2006imaging} can improve the resolution by more than ten times.

		\section{Resolution limits for sparsity-based super-resolution}\label{section:rslforsparsesr}
		 We have now shown that the frequency information of a general source $f$ that can be stably reconstructed is in the bandpass of the new point spread function $P S F_{\text {multi }}$. However, when we have a prior information that the source is a collection of point sources, we are able to reconstruct more frequency information by sparsity-promoting algorithms, as demonstrated in the single measurement case \cite{candes2013super, candes2014towards}. On the other hand, experimental evidence \cite{beam2022, zhao2022sparse} has shown that sparsity-promoting approaches can achieve better resolution improvement than discussed in Sections \ref{section:convolutheory1} and \ref{section:resoluofimgingmodality} when resolving very sparse sources. 
		 For example, when resolving two sources, BEAM \cite{beam2022} can achieve a threefold resolution improvement, which is better than twofold improvement discussed in Section \ref{section:resoluofimgingmodality}. In this section, we theoretically estimate the resolution limits of sparsity-promoting approaches in multi-illumination imaging. 
		 In particular, we analyze the resolution limits for the recovery of locations and number of complex and positive point sources. Our conclusion is that it is possible to obtain better resolution than that predicted by  operator theory in Sections \ref{section:convolutheory1} and \ref{section:resoluofimgingmodality}, but only for very sparse sources with high signal-to-noise ratio. For resolving more point sources that are tightly-spaced, the resolution should be the  one predicted by operator theory.  This sheds light on the cause of some experimental phenomena in BEAM. To be specific, as shown in Supplementary Figure 1:E in \cite{beam2022}, the algorithm achieves a threefold resolution improvement when resolving two positive point sources, but fails when resolving six positive point sources.
		
		\subsection{Resolution limit for the location recovery}
		Based on the discussions in Sections \ref{section:convolutheory1} and \ref{section:resoluofimgingmodality}, we have shown that, for some multi-illumination imaging modalities, by introducing a certain recovering operator, we can recover the image of a source $\mu=\sum_{j=1}^n a_j \delta_{y_j}$ from
		\begin{equation}\label{equ:imagingmodelrsl1}
		\vect Y(z) = \int_{[0,1]^d}PSF_{\mathrm{multi}}(z-y)\mu(y)dy+\vect W(z), \quad z\in [0,1]^d,
		\end{equation}
		with $|\vect W(z)|$ being of order $O(\sigma)$. The imaging process is then actually a deconvolution and is similar to the one we considered in \cite{liu2021mathematicaloned}. A simple idea is to compare the resolution enhancement by the theory in this paper, but the results in \cite{liu2021mathematicaloned} can only compare the resolution of imaging modalities with point spread functions of the same shape, such as $PSF_1 = f(\Omega x) $ and $PSF_2= f(2\Omega x)$. It is not possible to explicitly compare the resolution of point spread functions with different shapes.  For example, it is known that imaging with the point spread function $PSF=\left(\frac{\sin \Omega x}{x}\right)^5$ has better resolution than  imaging with $PSF=\left(\frac{\sin \Omega x}{x}\right)$, but the theory in \cite{liu2021mathematicaloned} cannot directly show the difference in resolution between the two cases. This is also a common difficulty in comparing resolutions of deconvolution problems. To circumvent this problem, it is useful and reliable to understand the resolution by measurements in the spatial-frequency domain. From (\ref{equ:imagingmodelrsl1}), considering the parallel model in the spatial-frequency domain, we have 
		\begin{equation}\label{equ:imagingmodelrsl2}
		\vect \Psi(\xi) = \mathcal F[PSF_{\mathrm{multi}}](\xi) \mathcal{F}[\mu](\xi) + \mathbf{\hat W}(\xi), \quad \xi \in \mathbb R^d,
		\end{equation}
		with $\mathbf{\hat W}(\xi)$ of order $O(\sigma)$. Note that by Theorems \ref{thm:frestability1} and \ref{thm:frestability2}, the frequency information of the source $\mu$ that can be stably reconstructed is in the bandpass of $PSF_{\mathrm{multi}}$. Thus, the above model is essential in  multi-illumination imaging, even when the source $\mu$ is not recovered from the deconvolution problem (\ref{equ:imagingmodelrsl1}). For ease of analysis, we assume that $\babs{\mathbf{\hat W}(\xi)}\leq \sigma$ with $\sigma$ being the noise level. The inverse problem consists in  reconstructing $\mu$ from the measurement $\vect G(\xi), \xi \in \mathbb R^d$. 
		
		In order to analyze the resolution, we introduce the following $\sigma$-admissible measures, which cannot be distinguished from the underlying sources without additional prior information.

		\begin{defi} 
		Given the measurement $\vect \Psi(\xi), \xi \in \mathbb R^d$ in (\ref{equ:imagingmodelrsl2}), we say that $\hat{\mu}=\sum_{j=1}^m \hat{a}_j \delta_{\hat{{y}}_j}, \hat{{y}}_j \in \mathbb{R}^d$ is a $\sigma$-admissible discrete measure of $\vect \Psi$ if
		\begin{equation}\label{equ:sparsityresolutionlimiteq-1}
		\babs{\mathcal F[PSF_{\mathrm{multi}}](\xi) \mathcal{F} \left[\hat{\mu}\right]({\xi})- \vect \Psi({\xi})}<\sigma, \quad \forall \xi \in \mathbb{R}^d.
		\end{equation}
		\end{defi}

		For a $b_{\mathrm{lower}}\gg \sigma$, denote $\hat \Omega_{\mathrm{multi, b_{\mathrm{lower}}}}$ by 
		\begin{equation}\label{equ:sparsityresolutionlimiteq-2}
		\hat \Omega_{\mathrm{multi, b_{\mathrm{lower}}}}: = \max \left\{ r >0 : \babs{PSF_{\mathrm{multi}}(\xi)}> b_{\mathrm{lower}}\ \text{ for }\ \bnorm{\xi}_2\leq r,\ \xi \in \mathbb R^d\right\}, 
		\end{equation}
		which can be viewed as the essential cutoff frequency of $PSF_{\mathrm{multi}}$. From (\ref{equ:sparsityresolutionlimiteq-1}), for the $\sigma$-admissible measure $\hat \mu$, we have
		\begin{equation}\label{equ:sparsityresolutionlimiteq0}
		\babs{\mathcal{F} \left[\hat{\mu}\right]({\xi})- \frac{\vect \Psi({\xi})}{\mathcal F[PSF_{\mathrm{multi}}](\xi)}}<\frac{\sigma}{b_{\mathrm{lower}}}, \quad  \bnorm{\xi}_2 \leq  	\hat \Omega_{\mathrm{multi, b_{\mathrm{lower}}}}.
		\end{equation}
		The above model is the same as the one in \cite{liu2021mathematicalhighd} for the single measurement case. Thus, we can obtain a similar theorem for the resolution estimate. Defining 
		\[
		B^d({x}):=\left\{{y} : {y} \in \mathbb{R}^d,\bnorm{{y}-{x}}_2< \frac{n-1}{2 \hat \Omega_{\mathrm{multi}, b_{\mathrm{lower}}}} \right\},
		\]	
		by Theorem 2.7 in \cite{liu2021mathematicalhighd}, we have the following theorem.
		\begin{thm}\label{thm:sparsityresolutionlimit1}
			Let $n \geq 2$, assume that $\mu=\sum_{j=1}^{n} a_{j} \delta_{y_{j}}, y_j \in \mathbb R^d, \min_{j=1,\cdots, n}|a_j|\geq m_{\min},$ is supported on $B^d(\mathbf{0})$ and that
			\begin{equation}\label{equ:sparsityresolutionlimiteq1}
				d_{\min}:=\min _{p \neq j}\bnorm{y_{p}-y_{j}}_2 \geq \frac{C_{supp}(d,n)}{\hat \Omega_{\mathrm{multi}, b_{\mathrm{lower}}}}\left(\frac{\sigma}{m_{\min}b_{\mathrm{lower}}}\right)^{\frac{1}{2 n-1}} ,
			\end{equation}
			for $\hat \Omega_{\mathrm{multi}, b_{\mathrm{lower}}}$ defined in (\ref{equ:sparsityresolutionlimiteq-2}) and a numerical constant $C_{supp}(d,n)$ depending only on $d, n$.  For any $\hat{\mu}=\sum_{j=1}^{n} \hat{a}_{j} \delta_{\hat{y}_{j}}$ supported on $B^d(\mathbf{0})$ and satisfying 
			\[
			\babs{\mathcal F[PSF_{\mathrm{multi}}](\xi) \mathcal{F} \left[\hat{\mu}\right]({\xi})- \vect \Psi({\xi})}<\sigma, \quad \forall \xi \in \mathbb{R}^k,
			\]
			for $\vect \Psi(\xi)$ defined in (\ref{equ:imagingmodelrsl2}), after reordering the $\hat{y}_{j}$ 's, we have
			\[
			\bnorm{\hat{y}_j -y_j}_2<\frac{d_{\min}}{2}, 
			\]
			and 
			$$
			\bnorm{\hat{y}_{j}-y_{j}}_2 \leq \frac{C(d, n)}{\Omega} S R F^{2 n-2} \frac{\sigma}{m_{\min }}, \quad 1 \leq j \leq n,
			$$
			with $C(d, n)$ being a numerical constant depending only on $d,n$.
		\end{thm}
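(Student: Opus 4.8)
The plan is to reduce the statement to the single-snapshot super-resolution estimate of \cite{liu2021mathematicalhighd} (their Theorem 2.7) after ``dividing out'' the transfer function $\mathcal F[PSF_{\mathrm{multi}}]$. First I would restrict the admissibility constraint $\babs{\mathcal F[PSF_{\mathrm{multi}}](\xi)\mathcal F[\hat\mu](\xi)-\vect\Psi(\xi)}<\sigma$ to the ball $\bnorm{\xi}_2\le\hat\Omega_{\mathrm{multi}, b_{\mathrm{lower}}}$ of (\ref{equ:sparsityresolutionlimiteq-2}). There $\babs{\mathcal F[PSF_{\mathrm{multi}}](\xi)}>b_{\mathrm{lower}}$ by construction, so dividing through is legitimate and yields exactly the bound (\ref{equ:sparsityresolutionlimiteq0}). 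Dividing the definition (\ref{equ:imagingmodelrsl2}) of $\vect\Psi$ in the same way, the rescaled data $\mathcal Y(\xi):=\vect\Psi(\xi)/\mathcal F[PSF_{\mathrm{multi}}](\xi)$ satisfies $\mathcal Y(\xi)=\mathcal F[\mu](\xi)+\mathcal W'(\xi)$ with $\babs{\mathcal W'(\xi)}\le\sigma/b_{\mathrm{lower}}$ on that ball, while (\ref{equ:sparsityresolutionlimiteq0}) says that $\hat\mu$ matches $\mathcal Y$ within $\sigma/b_{\mathrm{lower}}$ there.

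Next I would note that the pair $(\mathcal Y,\hat\mu)$, read on $\bnorm{\xi}_2\le\hat\Omega_{\mathrm{multi}, b_{\mathrm{lower}}}$, is precisely an instance of the single-measurement model of \cite{liu2021mathematicalhighd} with cutoff $\hat\Omega_{\mathrm{multi}, b_{\mathrm{lower}}}$ playing the role of $\Omega$ and noise level $\tilde\sigma:=\sigma/b_{\mathrm{lower}}$ playing the role of $\sigma$: $\mathcal Y$ is a noisy spectral sample of the $n$-atom source $\mu$ up to level $\tilde\sigma$, and $\hat\mu$ is another $n$-atom measure fitting $\mathcal Y$ within $\tilde\sigma$. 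The remaining hypotheses transfer verbatim under the same substitution --- $\mu,\hat\mu$ are supported in $B^d(\mathbf{0})$, whose radius $\frac{n-1}{2\hat\Omega_{\mathrm{multi}, b_{\mathrm{lower}}}}$ is exactly the one required there; the amplitude bound $\min_j\babs{a_j}\ge m_{\min}$ is unchanged; and the separation hypothesis (\ref{equ:sparsityresolutionlimiteq1}) reads precisely $d_{\min}\ge\frac{C_{supp}(d,n)}{\hat\Omega_{\mathrm{multi}, b_{\mathrm{lower}}}}\big(\frac{\tilde\sigma}{m_{\min}}\big)^{1/(2n-1)}$.

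I would then invoke Theorem 2.7 of \cite{liu2021mathematicalhighd} directly. It provides, after reordering the $\hat y_j$, the bound $\bnorm{\hat y_j-y_j}_2<d_{\min}/2$ --- which, since the $y_j$ are $d_{\min}$-separated, makes this reordering the unique admissible pairing --- together with the quantitative estimate $\bnorm{\hat y_j-y_j}_2\le\frac{C(d,n)}{\hat\Omega_{\mathrm{multi}, b_{\mathrm{lower}}}}\,\mathrm{SRF}^{2n-2}\,\frac{\tilde\sigma}{m_{\min}}$, where $\mathrm{SRF}$ is the super-resolution factor relative to the cutoff $\hat\Omega_{\mathrm{multi}, b_{\mathrm{lower}}}$. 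Substituting back $\tilde\sigma=\sigma/b_{\mathrm{lower}}$ and absorbing the fixed factor $b_{\mathrm{lower}}^{-1}$ (and any numerical loss incurred in passing between $\mathcal F[\hat\mu]$, $\mathcal Y$ and $\mathcal F[\mu]$) into the constant gives the stated conclusion.

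The genuinely hard analytic content --- the Vandermonde/Hankel conditioning estimates behind the $\mathrm{SRF}^{2n-2}$ rate, including the handling of the difference measure $\hat\mu-\mu$, which carries up to $2n$ atoms --- is entirely packaged in the cited Theorem 2.7 and need not be reproved. The only real work will therefore be the bookkeeping of the reduction: performing the division by $\mathcal F[PSF_{\mathrm{multi}}]$ exactly on the band $\bnorm{\xi}_2\le\hat\Omega_{\mathrm{multi}, b_{\mathrm{lower}}}$ so that the effective cutoff is $\hat\Omega_{\mathrm{multi}, b_{\mathrm{lower}}}$ and the effective noise is $\sigma/b_{\mathrm{lower}}$, and checking that every hypothesis of Theorem 2.7 is met one-to-one after these substitutions. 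The point to keep straight is that $b_{\mathrm{lower}}$ must be treated as a fixed model parameter, and that $\hat\Omega_{\mathrm{multi}, b_{\mathrm{lower}}}$ --- not the raw $\Omega_{\mathrm{multi}}$ --- is the band limit entering the final rate.
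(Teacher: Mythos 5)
Your proposal is correct and follows essentially the same route as the paper: the paper likewise divides the admissibility constraint by $\mathcal F[PSF_{\mathrm{multi}}]$ on the band $\bnorm{\xi}_2\le\hat\Omega_{\mathrm{multi},b_{\mathrm{lower}}}$ to obtain (\ref{equ:sparsityresolutionlimiteq0}), observes that the resulting model coincides with the single-measurement setting of \cite{liu2021mathematicalhighd} with effective cutoff $\hat\Omega_{\mathrm{multi},b_{\mathrm{lower}}}$ and noise level $\sigma/b_{\mathrm{lower}}$, and then cites Theorem 2.7 there. Your additional bookkeeping about treating $b_{\mathrm{lower}}$ as a fixed parameter and reading $\Omega$ and $\mathrm{SRF}$ in the conclusion relative to $\hat\Omega_{\mathrm{multi},b_{\mathrm{lower}}}$ is exactly the intended interpretation.
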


		 By Theorem \ref{thm:sparsityresolutionlimit1}, we show that for very sparse sources, better resolution  than the Rayleigh limit $\frac{c(d)\pi}{\hat \Omega_{\mathrm{multi}, b_{\mathrm{lower}}}}$ can be obtained when the signal-to-noise ratio is sufficiently high. This explains why better resolution than that predicted by operator theory when performing sparsity-promoting recovery in  multi-illumination imaging is attained in the experiments \cite{beam2022, zhao2022sparse}. 
		
		On the other hand, by deriving the following lower bound on the resolution in the worst-case scenario, we can also show that achieving better resolution is very hard when resolving more than two sources.  
		\begin{prop} \label{prop:resolutionofsparsitymultiillu1}
		Let $b_{\mathrm{upper}}:=\max_{\xi \in \mathbb R^d} \babs{\mathcal F[PSF_{\mathrm{multi}}](\xi)}<+\infty$.  For given $0<\sigma<m_{\min}b_{\mathrm{upper}}$ and integer $n \geqslant 2$, let
		\begin{equation}\label{equ:sparsityresolutionlimiteq2}
		\tau=\frac{e^{-1}}{\check \Omega_{\mathrm{multi}, \frac{(n-1)!n!\sigma}{(2n)! m_{\min}}}} \left(\frac{\sigma}{m_{\min}b_{\mathrm{upper}}}\right)^{\frac{1}{2 n-1}},
		\end{equation}
		where $\check \Omega_{\mathrm{multi}, \frac{(n-1)!n!\sigma}{(2n)! m_{\min}}}:= \min \left\{ r >0 : \babs{PSF_{\mathrm{multi}}(\xi)}< \frac{(n-1)!n!\sigma}{(2n)! m_{\min}} \ \text{for}\ \bnorm{\xi}_2\geq r, \xi \in \mathbb R^d\right\} $.  Then, there exist $a$ measure $\mu=\sum_{j=1}^n a_j \delta_{{y}_j}, {y}_j \in \mathbb{R}^k$ with $n$ supports at 
		\[
		\left\{\left(-\frac{\tau}{2}, 0, \ldots, 0\right), \ \left(-\frac{3\tau}{2}, 0, \ldots, 0\right), \ \ldots,\ \left(-\left(n-\frac{1}{2}\right)\tau, 0, \ldots, 0\right)\right\}
		\] 
		and a measure $\hat{\mu}=\sum_{j=1}^n \hat{a}_j \delta_{\hat{\hat{y}}_j}$ with $n$ supports at 
		\[
		\left\{\left(\frac{\tau}{2},0, \ldots, 0\right),\ \left(\frac{3\tau}{2}, 0, \ldots, 0\right),\ \ldots,\ \left(\left(n-\frac{1}{2}\right) \tau, 0, \ldots, 0\right)\right\}
		\] 
		such that 
		\[
		\babs{\mathcal F\left[PSF_{\mathrm{multi}}\right](\xi) \mathcal{F} \left[\hat{\mu}\right]({\xi})- \mathcal F\left[PSF_{\mathrm{multi}}\right](\xi) \mathcal{F}\left[ \mu\right]({\xi})}< \sigma,\  \xi \in \mathbb R^d,\quad \min _{1 \leqslant j \leqslant n}\left|a_j\right|=m_{\min}.
		\] 
		\end{prop}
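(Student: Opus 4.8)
The plan is to exhibit the two required measures as the positive and negative parts of one signed atomic measure $\nu$ whose Fourier transform is flat to high order near the origin. Put $z_k:=\left(k-n-\tfrac12\right)\tau$ and $y_k:=(z_k,0,\dots,0)\in\mathbb R^d$ for $k=1,\dots,2n$, so that $\{y_1,\dots,y_n\}$ and $\{y_{n+1},\dots,y_{2n}\}$ are exactly the prescribed support sets of $\mu$ and $\hat\mu$. Fix the scalar $\lambda:=m_{\min}\,\tau^{2n-1}\,(2n-1)!>0$ and define the Lagrange / divided-difference weights $c_k:=\lambda\big/\prod_{j\ne k}(z_k-z_j)$, which are all nonzero. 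Set $\mu:=\sum_{k=1}^{n}(-c_k)\delta_{y_k}$ and $\hat\mu:=\sum_{k=n+1}^{2n}c_k\delta_{y_k}$, so that $\hat\mu-\mu=\nu:=\sum_{k=1}^{2n}c_k\delta_{y_k}$; hence $\mathcal F\left[PSF_{\mathrm{multi}}\right](\xi)\big(\mathcal F[\hat\mu](\xi)-\mathcal F[\mu](\xi)\big)=\mathcal F\left[PSF_{\mathrm{multi}}\right](\xi)\,\mathcal F[\nu](\xi)$, and since $\mathcal F[\nu](\xi)=\sum_k c_k e^{i\xi_1 z_k}=:g(\xi_1)$ depends on $\xi$ only through $\xi_1$, it suffices to bound $\babs{\mathcal F\left[PSF_{\mathrm{multi}}\right](\xi)}\,\babs{g(\xi_1)}$ by $\sigma$ on all of $\mathbb R^d$.

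First I would record three elementary facts. Because $z_k-z_j=(k-j)\tau$, one has $\babs{c_k}=\lambda\big/\big(\tau^{2n-1}(k-1)!(2n-k)!\big)$, and since $(k-1)!(2n-k)!$ is strictly decreasing in $k$ on $\{1,\dots,n\}$ (the consecutive ratio being $k/(2n-k)<1$ there), $\min_{1\le k\le n}\babs{c_k}=\babs{c_1}=\lambda/\big(\tau^{2n-1}(2n-1)!\big)=m_{\min}$, which delivers the normalization $\min_{1\le j\le n}\babs{a_j}=m_{\min}$. Next, $\bnorm{\nu}_{\mathrm{TV}}=\sum_{k=1}^{2n}\babs{c_k}=\frac{\lambda}{\tau^{2n-1}(2n-1)!}\sum_{k=1}^{2n}\binom{2n-1}{k-1}=m_{\min}\,2^{2n-1}$. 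Finally, $g(\omega)=\lambda\,[z_1,\dots,z_{2n}]\big(e^{i\omega\,\cdot}\big)$ is $\lambda$ times the divided difference of order $2n-1$ of the smooth function $t\mapsto e^{i\omega t}$, so the Hermite--Genocchi integral representation gives $\babs{g(\omega)}\le\frac{\lambda}{(2n-1)!}\babs{\omega}^{2n-1}=m_{\min}\,(\tau\babs{\omega})^{2n-1}$.

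Now I would split $\mathbb R^d$ by the essential cutoff $\check\Omega:=\check\Omega_{\mathrm{multi},\,(n-1)!n!\sigma/((2n)!m_{\min})}$, which is finite because $\mathcal F\left[PSF_{\mathrm{multi}}\right]$ decays at infinity. On $\bnorm{\xi}_2\ge\check\Omega$ the definition of $\check\Omega$ gives $\babs{\mathcal F\left[PSF_{\mathrm{multi}}\right](\xi)}<\tfrac{(n-1)!n!\sigma}{(2n)!m_{\min}}$, so combined with the crude bound $\babs{g(\xi_1)}\le\bnorm{\nu}_{\mathrm{TV}}$ we obtain $\babs{\mathcal F\left[PSF_{\mathrm{multi}}\right](\xi)}\,\babs{g(\xi_1)}<\frac{(n-1)!n!\,2^{2n-1}}{(2n)!}\sigma=\frac{2^{2n-1}}{n\binom{2n}{n}}\sigma<\sigma$, where the last inequality holds precisely because $n\ge2$. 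On $\bnorm{\xi}_2<\check\Omega$ we have $\babs{\xi_1}\le\bnorm{\xi}_2<\check\Omega$, so combining $\babs{\mathcal F\left[PSF_{\mathrm{multi}}\right](\xi)}\le b_{\mathrm{upper}}$ with the Hermite--Genocchi bound yields $\babs{\mathcal F\left[PSF_{\mathrm{multi}}\right](\xi)}\,\babs{g(\xi_1)}\le b_{\mathrm{upper}}\,m_{\min}\,(\tau\check\Omega)^{2n-1}=e^{-(2n-1)}\sigma<\sigma$, where the equality is exactly the defining relation $\tau=\frac{e^{-1}}{\check\Omega}\big(\sigma/(m_{\min}b_{\mathrm{upper}})\big)^{1/(2n-1)}$ raised to the $(2n-1)$-st power. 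Combining the two regimes proves the asserted inequality for every $\xi\in\mathbb R^d$, and the atoms $c_k$ being nonzero guarantees $\mu$ and $\hat\mu$ have exactly $n$ supports each.

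The main obstacle is the low-frequency estimate: once the normalization forcing $\min_j\babs{a_j}=m_{\min}$ has pinned down the scale $\lambda$, one still needs $\babs{\mathcal F[\nu]}$ to vanish to order $2n-1$ at the origin with a constant small enough to beat $b_{\mathrm{upper}}$ uniformly on the whole ball $\bnorm{\xi}_2<\check\Omega$; obtaining the sharp constant $\lambda/(2n-1)!$ is what forces the divided-difference (Hermite--Genocchi) identity rather than a crude Taylor remainder, and it is also what dictates the exponent $1/(2n-1)$ and the factor $e^{-1}$ in the definition of $\tau$. Everything else --- that $(k-1)!(2n-k)!$ peaks at $k=1$ on $\{1,\dots,n\}$, that $\sum_{k}\binom{2n-1}{k-1}=2^{2n-1}$, and that $2^{2n-1}<n\binom{2n}{n}$ for $n\ge2$ --- is routine bookkeeping, and the construction itself is a direct adaptation of the worst-case constructions used for the single-snapshot lower bounds in \cite{liu2021mathematicaloned, liu2021mathematicalhighd}.
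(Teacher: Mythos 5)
Your construction is correct and delivers the proposition, and its skeleton coincides with the paper's: the same $2n$ equispaced nodes $\pm\tfrac{\tau}{2},\pm\tfrac{3\tau}{2},\dots$, the same splitting of the positive and negative parts of one signed measure into $\mu$ and $\hat\mu$, and the same two-regime argument (crude total-variation bound beyond $\check\Omega_{\mathrm{multi},\cdot}$, flatness of order $2n-1$ at the origin inside). Where you genuinely diverge is in the execution of both key estimates. The paper obtains the coefficients abstractly as a nontrivial null vector of the $(2n-1)\times 2n$ Vandermonde-type matrix $A=(\phi_{2n-2}(t_1),\dots,\phi_{2n-2}(t_{2n}))$, then \emph{estimates} the ratio of the largest to the smallest coefficient via the Lagrange-interpolation identity of Lemma \ref{lem:vandermonde1}, arriving at $\sum_j|a_j|\le \frac{(2n)!}{(n-1)!n!}m_{\min}$; you instead write the null vector in closed form as divided-difference weights $c_k=\lambda/\prod_{j\ne k}(z_k-z_j)$, which turns both the normalization $\min_{1\le k\le n}\babs{c_k}=m_{\min}$ and the total variation $\sum_k\babs{c_k}=2^{2n-1}m_{\min}$ into exact computations (and $2^{2n-1}\le n\binom{2n}{n}$, so your high-frequency bound is slightly sharper than the paper's). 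For the low-frequency regime the paper expands $\mathcal F[\mu-\hat\mu]$ in a Taylor series, kills the moments up to order $2n-2$, and sums the tail using Stirling's formula and the geometric-series bound $\tau\check\Omega/2\le 0.2$; your Hermite--Genocchi representation $g(\omega)=\lambda\,[z_1,\dots,z_{2n}](e^{i\omega\cdot})$ gives $\babs{g(\omega)}\le m_{\min}(\tau\babs{\omega})^{2n-1}$ in one line and yields the cleaner final bound $e^{-(2n-1)}\sigma$. Both routes are valid; yours is shorter, avoids the Stirling bookkeeping, and makes the role of the exponent $\tfrac{1}{2n-1}$ and the factor $e^{-1}$ in \eqref{equ:sparsityresolutionlimiteq2} transparent, at the cost of invoking the divided-difference machinery rather than the paper's self-contained linear-algebra lemma.
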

	\begin{proof}
		See Appendix \ref{section:proofofsparsityresolution}.
	\end{proof}

     Note that the source locations in  the eventually reconstructed measure $\hat \mu$ are completely different and distant from the locations of the underlying sources.  Stable recovery of the source locations in this case is impossible by sparsity-based multi-illumination imaging. Since the distance in (\ref{equ:sparsityresolutionlimiteq2}) deteriorates rapidly as $n$ increases, achieving a resolution less than $\frac{e^{-1}}{\check \Omega_{\mathrm{multi}, \frac{(n-1)!n!\sigma}{(2n)! m_{\min}}}}$ is nearly impossible for recovering sources that are not that sparse. This gives a rigorous proof of the resolution limit of the sparsity-promoting super-resolution when the illumination patterns are known.  
     
     We remark that for many illumination patterns,  $\hat \Omega_{\mathrm{multi}, b_{\mathrm{lower}}}$ and $\check \Omega_{\mathrm{multi}, \frac{(n-1)!n!\sigma}{(2n)! m_{\min}}}$ are close to the sum of the cutoff frequency of the $PSF$ and the essential maximum frequency in the illumination patterns. Also, $b_{\mathrm{upper}}$ and $b_{\mathrm{lower}}$ are comparable. Thus, the resolution limit for the multi-illumination imaging is of order $O\left(\frac{1}{\Omega_{\mathrm{multi}}}\left(\frac{\sigma}{m_{\min}}\right)^{\frac{1}{2n-1}}\right)$,  where $\Omega_{\mathrm{multi}}$ is the essential cutoff frequency of the new point spread function $PSF_{\mathrm{multi}}$. This estimate now helps us to understand the resolution in the sparsity-based multi-illumination imaging with known illumination patterns. It indicates that $\frac{c}{\Omega_{\mathrm{multi}}}$ for some constant $c$ is still the essential resolution in multi-illumination imaging, and thus, a better resolution can be achieved but only for recovering very sparse sources.

    \subsection{Resolution limit for resolving positive sources} 
      Based on the discussions and techniques presented in \cite{liu2022mathematicalpositive}, we can directly generalize the above estimates  of the resolution limit to the super-resolution of positive sources. In particular, we define the positive discrete measure by 
      \[
      \mu = \sum_{j=1}^n a_j \delta_{{y}_j}, \quad a_j >0,
      \]
       and  have the following results. 
       
      		\begin{thm}\label{thm:positivesparsityresolutionlimit1}
      		Let $n \geq 2$.  Assume that $\mu=\sum_{j=1}^{n} a_{j} \delta_{y_{j}}, y_j \in \mathbb R^d, a_j >0,  \min_{j=1,\cdots, n}|a_j|\geq m_{\min},$ is supported on $B^d(\mathbf{0})$ and that
      		\begin{equation}\label{equ:positivesparsityresolutionlimiteq1}
      			d_{\min}:=\min _{p \neq j}\bnorm{y_{p}-y_{j}}_2 \geq \frac{C_{supp}(d,n)}{\hat \Omega_{\mathrm{multi}, b_{\mathrm{lower}}}}\left(\frac{\sigma}{m_{\min}b_{\mathrm{lower}}}\right)^{\frac{1}{2 n-1}} ,
      		\end{equation}
      		for $\hat \Omega_{\mathrm{multi}, b_{\mathrm{lower}}}$ defined in (\ref{equ:sparsityresolutionlimiteq-2}) and a numerical constant $C_{supp}(d,n)$ depending only on $d, n$.  For any positive measure $\hat{\mu}=\sum_{j=1}^{n} \hat{a}_{j} \delta_{\hat{y}_{j}}, \hat a_j >0,$ supported on $B^d(\mathbf{0})$ and satisfying 
      		\[
      		\babs{\mathcal F[PSF_{\mathrm{multi}}](\xi) \mathcal{F} \left[\hat{\mu}\right]({\xi})- \vect \Psi ({\xi})}<\sigma, \quad \forall \xi \in \mathbb{R}^d,
      		\]
      		for $\vect \Psi(\xi)$ defined in (\ref{equ:imagingmodelrsl2}), after reordering the $\hat{y}_{j}$ 's, we have
      		\[
      		\bnorm{\hat{y}_j -y_j}_2<\frac{d_{\min}}{2}, 
      		\]
      		and 
      		$$
      		\bnorm{\hat{y}_{j}-y_{j}}_2 \leq \frac{C(d, n)}{\Omega} S R F^{2 n-2} \frac{\sigma}{m_{\min }}, \quad 1 \leq j \leq n,
      		$$
      		with $C(d, n)$ being a numerical constant depending only on $d,n$.
      	\end{thm}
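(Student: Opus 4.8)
The strategy is to reduce this positive multi-illumination statement to a single-snapshot super-resolution problem for positive measures, exactly as Theorem \ref{thm:sparsityresolutionlimit1} was reduced to the model of \cite{liu2021mathematicalhighd}, and then to import the positive-source stability estimate of \cite{liu2022mathematicalpositive}. First I would pass to the spatial-frequency domain. Since $\hat\mu$ satisfies $\babs{\mathcal F[PSF_{\mathrm{multi}}](\xi)\mathcal F[\hat\mu](\xi)-\vect\Psi(\xi)}<\sigma$ for all $\xi\in\mathbb R^d$, restricting to the ball $\bnorm{\xi}_2\le\hat\Omega_{\mathrm{multi},b_{\mathrm{lower}}}$, on which $\babs{\mathcal F[PSF_{\mathrm{multi}}](\xi)}>b_{\mathrm{lower}}$ by (\ref{equ:sparsityresolutionlimiteq-2}), and dividing through yields, as in (\ref{equ:sparsityresolutionlimiteq0}),
\[
\babs{\mathcal F[\hat\mu](\xi)-\frac{\vect\Psi(\xi)}{\mathcal F[PSF_{\mathrm{multi}}](\xi)}}<\frac{\sigma}{b_{\mathrm{lower}}},\qquad \bnorm{\xi}_2\le\hat\Omega_{\mathrm{multi},b_{\mathrm{lower}}},
\]
together with the same bound with $\hat\mu$ replaced by $\mu$; hence $\mathcal F[\hat\mu]$ and $\mathcal F[\mu]$ differ by at most $2\sigma/b_{\mathrm{lower}}$ on that ball. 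Since $\mu,\hat\mu$ are positive and both supported in $B^d(\mathbf{0})$, this is precisely the noisy positive deconvolution model of \cite{liu2022mathematicalpositive} with effective band limit $\hat\Omega_{\mathrm{multi},b_{\mathrm{lower}}}$ and effective noise level of order $\sigma/b_{\mathrm{lower}}$.

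Second I would apply the positive-source analogue of Theorem 2.7 of \cite{liu2021mathematicalhighd}, established in \cite{liu2022mathematicalpositive}, to this reduced model. The separation hypothesis required there is of order $\frac{1}{\hat\Omega_{\mathrm{multi},b_{\mathrm{lower}}}}\left(\frac{\sigma/b_{\mathrm{lower}}}{m_{\min}}\right)^{\frac{1}{2n-1}}$, which is exactly (\ref{equ:positivesparsityresolutionlimiteq1}) once the constant $C_{supp}(d,n)$ absorbs the $b_{\mathrm{lower}}$ factor. That theorem then provides, after reordering the $\hat y_j$, a one-to-one pairing with the $y_j$ satisfying $\bnorm{\hat y_j-y_j}_2<d_{\min}/2$ and the quantitative bound $\bnorm{\hat y_j-y_j}_2\le\frac{C(d,n)}{\Omega}\,\mathrm{SRF}^{2n-2}\frac{\sigma}{m_{\min}}$, with $\mathrm{SRF}$ and the band limit understood relative to $\hat\Omega_{\mathrm{multi},b_{\mathrm{lower}}}$; absorbing the remaining $b_{\mathrm{lower}}$ factors into $C(d,n)$ gives the stated estimates.

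The content here is imported rather than new, so the work is essentially bookkeeping, and the points I would take care over are: (i) that the positive-source theorem of \cite{liu2022mathematicalpositive} is stated for full $d$-dimensional Fourier data on a ball, with no extra hypothesis on the competitor $\hat\mu$ beyond positivity, support in $B^d(\mathbf{0})$, and the same cardinality $n$ — all of which are assumed here; (ii) that the admissibility constraint (\ref{equ:sparsityresolutionlimiteq-1}) is imposed on all of $\mathbb R^d$ whereas the conclusion uses only the low-frequency ball $\bnorm{\xi}_2\le\hat\Omega_{\mathrm{multi},b_{\mathrm{lower}}}$, so that no control of the high-frequency tail of $\mathcal F[\hat\mu]$ is needed and the non-vanishing of $\mathcal F[PSF_{\mathrm{multi}}]$ is only invoked inside that ball, where it is guaranteed by (\ref{equ:sparsityresolutionlimiteq-2}); and (iii) that the rescaling $\sigma\mapsto\sigma/b_{\mathrm{lower}}$ together with $\Omega\mapsto\hat\Omega_{\mathrm{multi},b_{\mathrm{lower}}}$ reproduces, under the normalisations used in \cite{liu2022mathematicalpositive}, both the separation threshold (\ref{equ:positivesparsityresolutionlimiteq1}) and the final error constant in the claimed form. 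Positivity of the sources enters only through the cited theorem, which is the sole place where any genuine difficulty beyond the complex case of Theorem \ref{thm:sparsityresolutionlimit1} resides.
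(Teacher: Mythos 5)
Your argument is correct, but it takes a heavier route than the paper, which disposes of Theorem~\ref{thm:positivesparsityresolutionlimit1} in one line: it is a \emph{special case} of Theorem~\ref{thm:sparsityresolutionlimit1}. The class of competitors $\hat{\mu}$ admitted in Theorem~\ref{thm:positivesparsityresolutionlimit1} (positive amplitudes, $n$ supports, supported on $B^d(\mathbf{0})$, satisfying the same data constraint) is a subset of the class admitted in Theorem~\ref{thm:sparsityresolutionlimit1} (arbitrary complex amplitudes, otherwise identical), the hypothesis on $\mu$ is only strengthened by positivity, and the claimed conclusion --- same constant $C_{supp}(d,n)$, same exponent $\tfrac{1}{2n-1}$, same error bound --- is unchanged; so the complex-source theorem applies verbatim and nothing about positivity needs to be used. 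Your proposal instead re-runs the reduction to the frequency-domain model (\ref{equ:sparsityresolutionlimiteq0}) and then invokes a positive-source stability theorem from \cite{liu2022mathematicalpositive}. That reduction step is fine and is exactly the one the paper uses for Theorem~\ref{thm:sparsityresolutionlimit1}, but the appeal to the positive-source machinery is unnecessary here and introduces a dependence on the precise form of an external theorem that the statement does not require. Where your route would genuinely pay off is if the theorem claimed something that exploits positivity --- e.g.\ admitting competitors with \emph{at most} $n$ supports, or an improved constant/exponent --- but as stated it claims no such improvement, and your own closing remark that ``positivity enters only through the cited theorem'' signals that the positivity hypothesis is doing no work in your argument either; the cleanest observation is that it can be dropped entirely by citing Theorem~\ref{thm:sparsityresolutionlimit1}.
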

      
      	Theorem \ref{thm:positivesparsityresolutionlimit1} is a direct consequence of Theorem \ref{thm:sparsityresolutionlimit1}, which shows the possibility of achieving a better resolution for resolving positive sparse sources by sparsity-based multi-illumination imaging.

         \begin{prop} \label{prop:positiveresolutionofsparsitymultiillu1}
     	Let $b_{\mathrm{upper}}:=\max_{\xi \in \mathbb R^d} \babs{\mathcal F[PSF_{\mathrm{multi}}](\xi)}<+\infty$.  For given $0<\sigma<m_{\min}b_{\mathrm{upper}}$ and integer $n \geqslant 2$, let
     	\begin{equation}\label{equ:positivesparsityresolutionlimiteq2}
     		\tau=\frac{e^{-1}}{\check \Omega_{\mathrm{multi}, \frac{(n-1)!n!\sigma}{(2n)! m_{\min}}}} \left(\frac{\sigma}{m_{\min}b_{\mathrm{upper}}}\right)^{\frac{1}{2 n-1}},
     	\end{equation}
     	where $\check \Omega_{\mathrm{multi}, \frac{(n-1)!n!\sigma}{(2n)! m_{\min}}}:= \min \left\{ r >0 : \babs{PSF_{\mathrm{multi}}(\xi)}< \frac{(n-1)!n!\sigma}{(2n)! m_{\min}} \ \text{for}\ \bnorm{\xi}_2\geq r, \xi \in \mathbb R^d\right\} $.  Then there exist $a$ measure $\mu=\sum_{j=1}^n a_j \delta_{{y}_j}, {y}_j \in \mathbb{R}^k$ with $n$ supports at 
     	\[
     	\left\{\left(-\left(n-\frac{3}{2}\right)\tau, 0, \ldots, 0\right),\ \left(-\left(n-\frac{7}{2}\right)\tau, 0, \ldots, 0\right),\ \ldots, \ \left(\left(n-\frac{1}{2}\right)\tau, 0, \ldots, 0\right)\right\}
     	\] 
     	and a measure $\hat{\mu}=\sum_{j=1}^n \hat{a}_j \delta_{\hat{y}_j}$ with $n$ supports at 
     	\[
     	\left\{\left(-\left(n-\frac{1}{2}\right)\tau, 0, \ldots, 0\right),\ \left(-\left(n-\frac{5}{2}\right)\tau, 0, \ldots, 0\right), \ \ldots, \ \left(\left(n-\frac{3}{2}\right) \tau, 0, \ldots, 0\right)\right\}
     	\]
     	such that 
     	\[
     	\babs{\mathcal F[PSF_{\mathrm{multi}}](\xi) \mathcal{F} [\hat{\mu}]({\xi})- \mathcal F[PSF_{\mathrm{multi}}](\xi) \mathcal{F}[ \mu]({\xi})}< \sigma,\  \xi \in \mathbb R^d, 
     	\] 
     	and
     	\[
     	 \quad \min _{1 \leqslant j \leqslant n}\left|a_j\right|=m_{\min}.
     	\] 
     \end{prop}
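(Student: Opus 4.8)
The plan is to produce the pair $(\mu,\hat\mu)$ explicitly by first prescribing the \emph{difference} measure $\nu:=\mu-\hat\mu$ and then splitting it according to the sign of its coefficients. Concretely, I would place $\nu$ on the $2n$ collinear, equally spaced nodes $x_l$ whose first coordinate is $(l-n-\tfrac12)\tau$ and whose remaining coordinates are zero, $l=1,\dots,2n$, and take the coefficients to be scaled signed binomial numbers: $\nu=m_{\min}\sum_{l=1}^{2n}(-1)^{l}\binom{2n-1}{l-1}\delta_{x_l}$. The $l$-th coefficient has sign $(-1)^l$, so the $n$ even-indexed nodes carry positive mass and the $n$ odd-indexed nodes carry negative mass; collecting the former into $\mu$ and the magnitudes of the latter into $\hat\mu$ yields two \emph{positive} $n$-atomic measures. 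A short check of parities, using the base point $x_1$ with first coordinate $-(n-\tfrac12)\tau$, shows their supports are exactly the two interleaved grids in the statement, and since $\binom{2n-1}{0}=1$ is the smallest binomial coefficient and is attained at the endpoints $l=1$ (in $\hat\mu$) and $l=2n$ (in $\mu$), both measures have smallest amplitude precisely $m_{\min}$. This reduces the proposition to the single inequality $\babs{\mathcal F[PSF_{\mathrm{multi}}](\xi)}\,\babs{\mathcal F[\nu](\xi)}<\sigma$ for all $\xi\in\mathbb R^d$.

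Next I would compute $\mathcal F[\nu]$ in closed form. Writing $\omega$ for the first coordinate of $\xi$ and using $\mathcal F[\delta_{x_l}](\xi)=e^{i\xi\cdot x_l}$ together with $\sum_{j=0}^{2n-1}\binom{2n-1}{j}(-e^{i\tau\omega})^{j}=(1-e^{i\tau\omega})^{2n-1}$ gives $\mathcal F[\nu](\xi)=-m_{\min}e^{i\xi\cdot x_1}(1-e^{i\tau\omega})^{2n-1}$, hence $\babs{\mathcal F[\nu](\xi)}=m_{\min}\babs{2\sin(\tau\omega/2)}^{2n-1}$. From this I extract the two estimates I need: the global bound $\babs{\mathcal F[\nu](\xi)}\le m_{\min}\sum_{l}\binom{2n-1}{l-1}=m_{\min}2^{2n-1}$, and the near-origin bound $\babs{\mathcal F[\nu](\xi)}\le m_{\min}\babs{\tau\omega}^{2n-1}\le m_{\min}\bigl(\tau\,\bnorm{\xi}_2\bigr)^{2n-1}$, the latter from $\babs{\sin t}\le\babs{t}$.

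Then I split $\mathbb R^d$ at the radius $\check\Omega:=\check\Omega_{\mathrm{multi},\,\frac{(n-1)!n!\sigma}{(2n)!m_{\min}}}$ (which is positive and finite since $\frac{(n-1)!n!}{(2n)!}<1$ and $\sigma<m_{\min}b_{\mathrm{upper}}$, and since $\mathcal F[PSF_{\mathrm{multi}}]$ decays at infinity). For $\bnorm{\xi}_2\ge\check\Omega$, the definition of $\check\Omega$ gives $\babs{\mathcal F[PSF_{\mathrm{multi}}](\xi)}<\frac{(n-1)!n!\sigma}{(2n)!m_{\min}}$, and multiplying by the global bound $m_{\min}2^{2n-1}$ leaves the factor $\frac{2^{2n-1}(n-1)!n!}{(2n)!}\sigma$, which is $\le\sigma$ once one verifies the elementary inequality $2^{2n-1}\le n\binom{2n}{n}$ for $n\ge 2$ (immediate by induction on $n$, the right side growing by a factor $4+\tfrac2n$ at each step). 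For $\bnorm{\xi}_2<\check\Omega$, I use $\babs{\mathcal F[PSF_{\mathrm{multi}}](\xi)}\le b_{\mathrm{upper}}$ together with the near-origin bound and the defining identity $\tau\check\Omega=e^{-1}\bigl(\tfrac{\sigma}{m_{\min}b_{\mathrm{upper}}}\bigr)^{1/(2n-1)}$, so the product is at most $b_{\mathrm{upper}}\cdot m_{\min}(\tau\check\Omega)^{2n-1}=e^{-(2n-1)}\sigma<\sigma$. Combining the two regions closes the argument.

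The construction is the positive-source analogue of the one behind Proposition \ref{prop:resolutionofsparsitymultiillu1} (Appendix \ref{section:proofofsparsityresolution}), in the spirit of \cite{liu2022mathematicalpositive}; the only genuinely new point is the remark that the alternating sign pattern of the binomial coefficients lets the same difference measure be split into two \emph{positive} measures on the shifted interleaved supports. I do not expect a real analytic obstacle here — the factorization of $\mathcal F[\nu]$ and the $\babs{\sin t}\le\babs{t}$ bound are routine; the step demanding care is the bookkeeping that matches the index parities and the base point to the exact support lists, together with checking that the constants $e^{-1}$ and $\frac{(n-1)!n!}{(2n)!}$ are precisely what make the high-frequency and low-frequency regimes patch together.
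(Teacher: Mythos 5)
Your construction is correct and, in substance, produces the very same pair of measures as the paper: the kernel of the underdetermined Vandermonde system $\left(\phi_{2n-2}(t_1),\dots,\phi_{2n-2}(t_{2n})\right)a=0$ used in Appendix \ref{section:proofofsparsityresolution} is one-dimensional, and it is spanned exactly by your alternating binomial vector $(-1)^l\binom{2n-1}{l-1}$; the paper's normalization $\min_j|a_j|=m_{\min}$ then forces the endpoint coefficients to equal $m_{\min}$, just as in your explicit choice. Where you genuinely depart from the paper is in the execution of the two key estimates. The paper obtains the sign pattern and the bound $\sum_j|a_j|\le \frac{(2n)!}{(n-1)!\,n!}m_{\min}$ indirectly, via the Lagrange-interpolation identity of Lemma \ref{lem:vandermonde1}, and then controls $\mathcal F[\mu-\hat\mu]$ by a Taylor expansion exploiting the vanishing of the first $2n-1$ moments, which requires Stirling's formula and the numerical condition $\tau\check\Omega/2\le 0.2$ to sum the tail. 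You instead read everything off the closed-form factorization $\mathcal F[\nu](\xi)=-m_{\min}e^{i\xi\cdot x_1}\left(1-e^{i\tau\omega}\right)^{2n-1}$, so that $\babs{\mathcal F[\nu]}=m_{\min}\babs{2\sin(\tau\omega/2)}^{2n-1}$ gives the near-origin bound from $\babs{\sin t}\le\babs{t}$ and the global bound $m_{\min}2^{2n-1}$ for free; since $2^{2n-1}\le n\binom{2n}{n}=\frac{(2n)!}{(n-1)!\,n!}$ (which I checked, by your induction), your high-frequency estimate is in fact slightly sharper than the paper's, and your low-frequency estimate $e^{-(2n-1)}\sigma$ coincides with theirs while dispensing with the Stirling bookkeeping. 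The positivity of the split into $\mu$ and $\hat\mu$, which the paper derives from Lemma \ref{lem:vandermonde1}, is immediate from the explicit signs in your formula. The only caveats, which the paper shares, are the implicit assumptions that $\check\Omega_{\mathrm{multi},\frac{(n-1)!n!\sigma}{(2n)!m_{\min}}}$ is finite and positive (i.e., that $\mathcal F[PSF_{\mathrm{multi}}]$ decays at infinity) and that the threshold in its definition is to be read as a bound on $\mathcal F[PSF_{\mathrm{multi}}]$ rather than on $PSF_{\mathrm{multi}}$ itself; neither affects the validity of your argument relative to the paper's.
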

     \begin{proof}
     	See Appendix \ref{section:proofofsparsityresolution}.
     \end{proof}
 
  Note that, in the example of Proposition \ref{prop:positiveresolutionofsparsitymultiillu1}, it is hard to say which $\hat y_j$ in $\hat \mu$ is the recovered locations of some $y_j$ in the source $\mu$. Thus, Proposition \ref{prop:positiveresolutionofsparsitymultiillu1} provides an upper bound estimate on the resolution enhancement by the sparsity-based multi-illumination imaging. To further demonstrate the instability of the location recovery under this order of separation distance, we state the following proposition. 
  
     \begin{prop} \label{prop:positiveresolutionofsparsitymultiillu2}
     Let $b_{\mathrm{upper}}:=\max_{\xi \in \mathbb R^d} \babs{\mathcal F[PSF_{\mathrm{multi}}](\xi)}<+\infty$.	 For given $0<\sigma<m_{\min}b_{\mathrm{upper}}$ and integer $n \geq 2$, let
     \[
     \tau=\frac{0.2 e^{-1}}{\check \Omega_{\mathrm{multi},  \frac{\pi^2\sigma}{2ne^{11} s^2(n+1)^{10} 2^{2 n-8} m_{\min}}} s^{\frac{2 n+1}{2 n-1}}}\left(\frac{\sigma}{m_{\min }b_{\mathrm{upper}}}\right)^{\frac{1}{2 n-1}} , 
     \]
     where 
     \[
     \check \Omega_{\mathrm{multi},  \frac{\pi^2\sigma}{2ne^{11} s^2(n+1)^{10} 2^{2 n-8} m_{\min}}}:= \min \left\{ r >0 : \babs{PSF_{\mathrm{multi}}(\xi)}< \frac{\pi^2\sigma}{2ne^{11} s^2(n+1)^{10} 2^{2 n-8} m_{\min}} \ \text{for}\ \bnorm{\xi}_2\geq r, \xi \in \mathbb R^d\right\}.
     \]  
     Then there exist a positive measure $\mu=\sum_{j=1}^n a_j \delta_{y_j}$ with $n$ supports at 
    \[
    \left\{t_j=-\frac{s n-2}{2} \tau+\frac{(j-2) s}{2} \tau, \quad  j= 2,4, \cdots, 2 n\right \}
    \] 
    and a positive measure $\hat{\mu}=\sum_{j=1}^n \hat{a}_j \delta_{\hat{y}_j}$ with $n$ supports at 
    \[
    \left\{t_j=t_{4\lceil\frac{j+1}{4}\rceil-2}+(-1)^{\frac{j+1}{2}} \tau, \quad  j=1,3,5, \cdots, 2 n-1\right\}
    \] 
    such that
     \[
    	\babs{\mathcal F[PSF_{\mathrm{multi}}](\xi) \mathcal{F} [\hat{\mu}]({\xi})- \mathcal F[PSF_{\mathrm{multi}}](\xi) \mathcal{F}[ \mu]({\xi})}< \sigma,\ \xi\in \mathbb R^d, \quad \min _{1 \leq j \leq n}\left|a_j\right|=m_{\min } .
     \]
     \end{prop}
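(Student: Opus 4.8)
The plan is to recognize Proposition~\ref{prop:positiveresolutionofsparsitymultiillu2} as the positive-source, multi-illumination counterpart of the lower-bound constructions in \cite{liu2022mathematicalpositive}, with the band limit $\Omega$ there replaced by the essential high-frequency decay of $\mathcal F[PSF_{\mathrm{multi}}]$ and with an extra factor $b_{\mathrm{upper}}=\max_\xi\babs{\mathcal F[PSF_{\mathrm{multi}}](\xi)}$ coming from $\bnorm{\mathcal F[PSF_{\mathrm{multi}}]}_{L^\infty}$. Concretely, I would first reduce the statement to exhibiting the signed measure $\nu:=\mu-\hat\mu$, supported on the $2n$ prescribed points, such that: \emph{(i)} both $\mu$ and $\hat\mu$ are positive, $\min_j\babs{a_j}=m_{\min}$, and every amplitude of both measures is at most a constant times $\frac{2^{2n-8}e^{11}s^2(n+1)^{10}}{\pi^2}\,m_{\min}$; and \emph{(ii)} $\babs{\mathcal F[\nu](\xi)}$ is small on the low-frequency ball. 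Positivity is arranged exactly as in \cite{liu2022mathematicalpositive}: the $2n$ points split into $n$ tight clusters (pairs at mutual distance $\tau$, clusters separated on the scale $s\tau$), each cluster of $\nu$ mimicking a high-order finite difference of a Dirac mass, so that $n$ such near-dipole clusters with alternating effective signs behave to leading order like the $2n$-point alternating configuration used in the complex case \cite{liu2021mathematicalhighd}, while $\mu$ and $\hat\mu$ individually remain genuine sums of positive point masses.

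The heart of the argument is the split of $\mathbb R^d$ at the radius $r:=\check\Omega_{\mathrm{multi},\,\frac{\pi^2\sigma}{2ne^{11}s^2(n+1)^{10}2^{2n-8}m_{\min}}}$. For $\bnorm{\xi}_2\ge r$, by the very definition of $\check\Omega_{\mathrm{multi},\cdot}$ one has $\babs{\mathcal F[PSF_{\mathrm{multi}}](\xi)}<\frac{\pi^2\sigma}{2ne^{11}s^2(n+1)^{10}2^{2n-8}m_{\min}}$, while $\babs{\mathcal F[\nu](\xi)}\le\sum_j(\babs{a_j}+\babs{\hat a_j})\le 2n\cdot\frac{2^{2n-8}e^{11}s^2(n+1)^{10}}{\pi^2}m_{\min}$ by the amplitude bound in \emph{(i)}, so the product is strictly below $\sigma$. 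For $\bnorm{\xi}_2<r$, I would simply use $\babs{\mathcal F[PSF_{\mathrm{multi}}](\xi)}\le b_{\mathrm{upper}}$ and reduce to proving $\babs{\mathcal F[\nu](\xi)}<\sigma/b_{\mathrm{upper}}$ there. This is where the clustered structure is used: writing $\mathcal F[\nu]$ as a sum over clusters, each contributing a factor vanishing to order $2n-1$ at $\xi=0$, and invoking the trigonometric-polynomial / Vandermonde estimate behind the analogous statement in \cite{liu2022mathematicalpositive}, one gets $\babs{\mathcal F[\nu](\xi)}\lesssim (\,\bnorm{\xi}_2\, s\, n\,\tau\,)^{2n-1}\cdot(\text{amplitude factor})$ uniformly for $\bnorm{\xi}_2\le r$.

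It then remains to substitute the prescribed $\tau=\frac{0.2\,e^{-1}}{\check\Omega_{\mathrm{multi},\cdot}\,s^{(2n+1)/(2n-1)}}\big(\frac{\sigma}{m_{\min}b_{\mathrm{upper}}}\big)^{1/(2n-1)}$ and $\bnorm{\xi}_2\le r=\check\Omega_{\mathrm{multi},\cdot}$, so that $(\bnorm{\xi}_2 s n\tau)^{2n-1}\le (0.2\,e^{-1})^{2n-1}n^{2n-1}s^{-2}\,\frac{\sigma}{m_{\min}b_{\mathrm{upper}}}$; multiplying by the amplitude factor of order $\frac{2^{2n-8}e^{11}s^2(n+1)^{10}}{\pi^2}m_{\min}$ and checking that the leftover numerical/combinatorial factor (essentially $(0.2 e^{-1})^{2n-1}n^{2n-1}(n+1)^{10}2^{2n-8}e^{11}/\pi^2$) is $\le 1$ yields $\babs{\mathcal F[\nu](\xi)}<\sigma/b_{\mathrm{upper}}$ on the low band; this is precisely the role of the generous slack factors $e^{11}$, $(n+1)^{10}$ and $0.2 e^{-1}$. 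Finally I would match indexings: verifying that $t_j=-\frac{sn-2}{2}\tau+\frac{(j-2)s}{2}\tau$ ($j$ even) and $t_j=t_{4\lceil(j+1)/4\rceil-2}+(-1)^{(j+1)/2}\tau$ ($j$ odd) are, cluster by cluster, the first-coordinate images of the configuration from \cite{liu2022mathematicalpositive} under this rescaling, so the stated constants are exactly reproduced.

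The step I expect to be the main obstacle is item \emph{(i)} together with the low-frequency bound: one must simultaneously force $\min_j\babs{a_j}=m_{\min}$ \emph{and} keep every amplitude of both positive measures within the stated $O\!\big(2^{2n}(n+1)^{10}m_{\min}\big)$ budget (needed so the high-frequency tail is tamed), even though the finite-difference-of-$\delta$ mechanism that buys the vanishing order $2n-1$ naturally inflates amplitudes as $\tau\to0$. Calibrating the number of masses per cluster, their signs, and the scaling of $\tau$ in $s$ and $n$ so that both the vanishing order and the amplitude budget hold at once is the delicate part, and it is what fixes the particular exponents in the statement; once the construction of \cite{liu2022mathematicalpositive} is in place, the remainder is the two-region frequency estimate above.
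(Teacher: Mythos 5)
Your proposal follows essentially the same route as the paper: the same Vandermonde null-space construction of alternating-sign amplitudes on the clustered $2n$-point configuration to get two positive measures, the same two-region frequency split at $\check\Omega_{\mathrm{multi},\cdot}$ (total-amplitude bound against the decay of $\mathcal F[PSF_{\mathrm{multi}}]$ at high frequencies, $b_{\mathrm{upper}}$ against the order-$(2n-1)$ vanishing of $\mathcal F[\mu-\hat\mu]$ at low frequencies), and the same reliance on the amplitude and low-band estimates from the proof of Proposition 2.1 in \cite{liu2022mathematicalpositive}. The paper simply cites that reference for the step you flag as delicate, so your outline matches its proof in both structure and constants.
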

           \begin{proof}
     	See Appendix \ref{section:proofofsparsityresolution}.
     \end{proof}
     
     The $n$ underlying sources in $\mu$ in Proposition \ref{prop:positiveresolutionofsparsitymultiillu2} are spaced by
     \[
     s \tau=\frac{0.4 e^{-1}}{\check \Omega_{\mathrm{multi},  \frac{\pi^2\sigma}{2ne^{11} s^2(n+1)^{10} 2^{2 n-8} m_{\min}}} s^{\frac{2}{2 n-1}}}\left(\frac{\sigma}{m_{\min}b_{\mathrm{upper}}}\right)^{\frac{1}{2 n-1}} .
     \]
  It is revealed that when the $n$ point sources are separated by $\frac{c}{\check \Omega_{\mathrm{multi},  \frac{\pi^2\sigma}{2ne^{11} s^2(n+1)^{10} 2^{2 n-8} m_{\min}}}}\left(\frac{\sigma}{m_{\min}b_{\mathrm{upper}}}\right)^{\frac{1}{2 n-1}}$ for some constant $c$, the recovered source locations from the positive $\sigma$-admissible measures can be very unstable; See Figure 2.1 for an illustration.
     
     \begin{figure}[!h]
     	\includegraphics[width=0.7\textwidth]{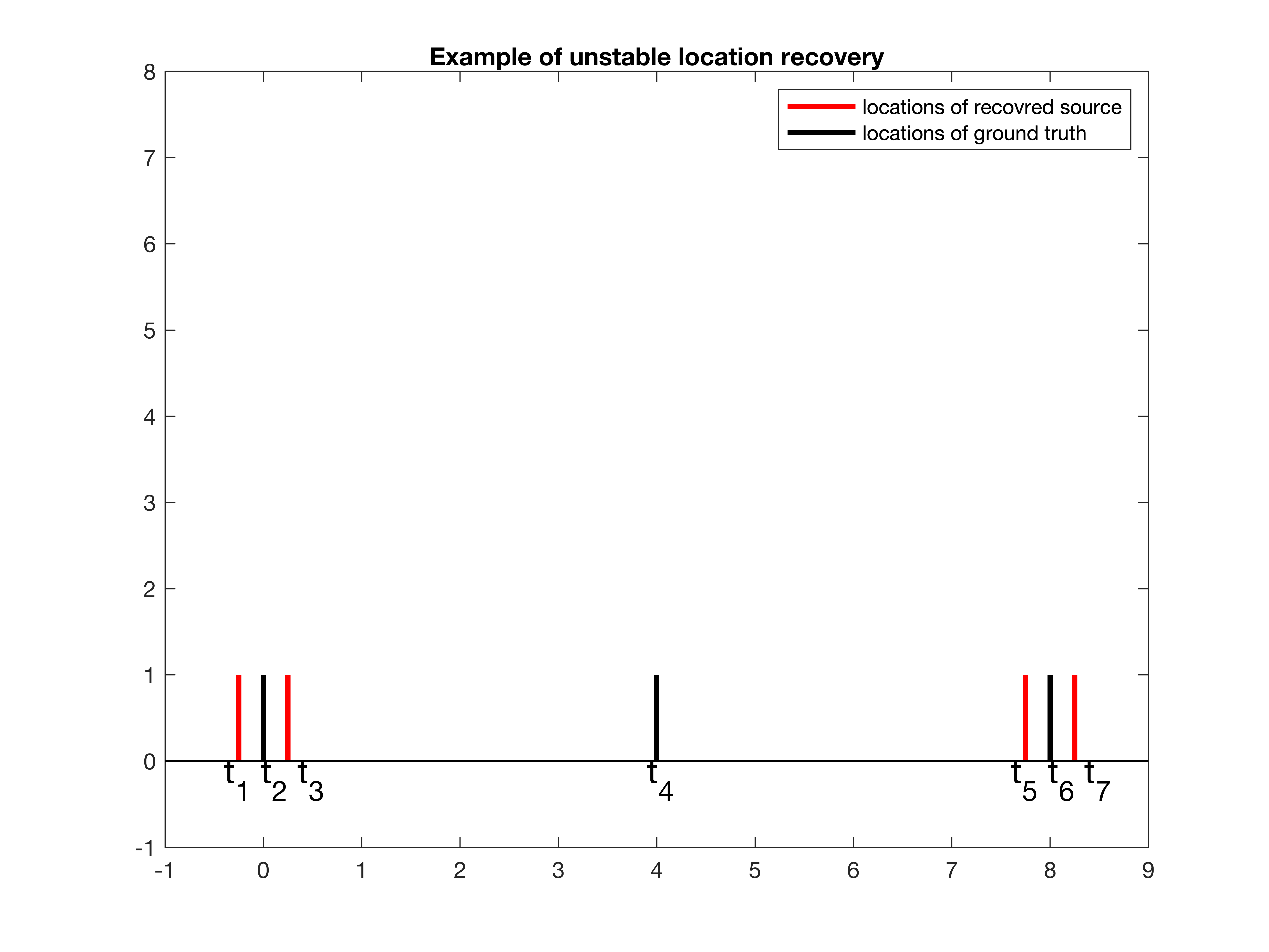}
     	\centering
     	\caption{An example of unstable location recovery. Black spikes indicate the locations of underlying sources and red spikes indicate the source locations of some $\sigma$-admissible measure.}
     	\label{fig:examplelocation}
     \end{figure}

  \subsection{Resolution limit for the source number recovery}
  To better understand the possibilities and difficulties of super-resolution in sparsity-based multi-illumination imaging, in particular the difficulty to achieve a better resolution than that predicted in Section \ref{section:convolutheory1}, we next derive stability results for the recovery of the source number in multi-illumination imaging. We have the following results. 
  
  \begin{thm}\label{thm:numbersparsityresolutionlimit1}
  	Let $n \geq 2$. Assume that $\mu=\sum_{j=1}^{n} a_{j} \delta_{y_{j}}, y_j \in \mathbb R^d, \min_{j=1,\cdots, n}|a_j|\geq m_{\min},$ is supported on $B^d(\mathbf{0})$ and that
  	\begin{equation}\label{equ:numbersparsityresolutionlimiteq1}
  		d_{\min}:=\min _{p \neq j}\bnorm{y_{p}-y_{j}}_2 \geq \frac{C_{num}(d,n)}{\hat \Omega_{\mathrm{multi}, b_{\mathrm{lower}}}}\left(\frac{\sigma}{m_{\min}b_{\mathrm{lower}}}\right)^{\frac{1}{2 n-2}} ,
  	\end{equation}
  	for $\hat \Omega_{\mathrm{multi}, b_{\mathrm{lower}}}$ defined in (\ref{equ:sparsityresolutionlimiteq-2}) and a numerical constant $C_{num}(d,n)$ depending only on $d, n$.  Then there do not exist any measures with less than $n$ supports, $\hat \mu = \sum_{j=1}^k \hat a_j \delta_{\hat {y}_j}, k<n,$ such that 
  	\[
  	\babs{\mathcal F[PSF_{\mathrm{multi}}](\xi) \mathcal{F} \left[\hat{\mu}\right]({\xi})- \vect \Psi({\xi})}<\sigma, \quad \forall \xi \in \mathbb{R}^d.
  	\]
  	In particular, the above results still hold for the case when $\mu, \hat \mu$ are positive measures. 
  	\end{thm}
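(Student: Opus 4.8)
The plan is to reduce Theorem~\ref{thm:numbersparsityresolutionlimit1} to the single-snapshot source-number recovery theory already established in \cite{liu2021mathematicalhighd} for complex measures and in \cite{liu2022mathematicalpositive} for positive measures, working entirely inside the spatial-frequency model (\ref{equ:imagingmodelrsl2}). First I would argue by contradiction: suppose there is a measure $\hat\mu=\sum_{j=1}^{k}\hat a_j\delta_{\hat y_j}$ with $k<n$ satisfying $\babs{\mathcal F[PSF_{\mathrm{multi}}](\xi)\mathcal F[\hat\mu](\xi)-\vect{\Psi}(\xi)}<\sigma$ for all $\xi\in\mbr^d$. Substituting $\vect{\Psi}(\xi)=\mathcal F[PSF_{\mathrm{multi}}](\xi)\mathcal F[\mu](\xi)+\mathbf{\hat W}(\xi)$ with $\babs{\mathbf{\hat W}(\xi)}\leq\sigma$ and applying the triangle inequality gives $\babs{\mathcal F[PSF_{\mathrm{multi}}](\xi)\big(\mathcal F[\mu](\xi)-\mathcal F[\hat\mu](\xi)\big)}<2\sigma$ for every $\xi$; dividing by $\mathcal F[PSF_{\mathrm{multi}}]$ on the ball $\btwonorm{\xi}\leq\hat\Omega_{\mathrm{multi},b_{\mathrm{lower}}}$, where $\babs{\mathcal F[PSF_{\mathrm{multi}}](\xi)}>b_{\mathrm{lower}}$ by (\ref{equ:sparsityresolutionlimiteq-2}), yields $\babs{\mathcal F[\mu](\xi)-\mathcal F[\hat\mu](\xi)}<2\sigma/b_{\mathrm{lower}}$ there. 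This is precisely the low-frequency data comparison on which the single-snapshot number-detection results operate.

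Next I would invoke the $d$-dimensional source-number recovery theorem (the counterpart of the Theorem~2.7 in \cite{liu2021mathematicalhighd} used for Theorem~\ref{thm:sparsityresolutionlimit1}), now with band limit $\hat\Omega_{\mathrm{multi},b_{\mathrm{lower}}}$ and effective noise level $2\sigma/b_{\mathrm{lower}}$. Since $\mu$ has $n$ nodes with $\min_j\babs{a_j}\geq m_{\min}$ and is supported in $B^d(\mathbf 0)$, and since the separation hypothesis (\ref{equ:numbersparsityresolutionlimiteq1}) --- after absorbing the factor $2^{1/(2n-2)}$ into the constant --- reads $d_{\min}\geq\frac{C_{num}'(d,n)}{\hat\Omega_{\mathrm{multi},b_{\mathrm{lower}}}}\big(\frac{2\sigma/b_{\mathrm{lower}}}{m_{\min}}\big)^{1/(2n-2)}$, that theorem forbids any measure with fewer than $n$ nodes from matching $\mathcal F[\mu]$ within $2\sigma/b_{\mathrm{lower}}$ on $\{\btwonorm{\xi}\leq\hat\Omega_{\mathrm{multi},b_{\mathrm{lower}}}\}$, contradicting the inequality derived above. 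For the positive-source assertion I would run the identical reduction and instead cite the positive-measure number recovery theorem of \cite{liu2022mathematicalpositive}, whose separation threshold carries the same $(2n-2)$-exponent and constant structure, so that only the numerical value of $C_{num}(d,n)$ changes.

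The main obstacle I anticipate is bookkeeping rather than conceptual: one has to verify that truncating the admissibility condition from all of $\mbr^d$ to the single frequency ball of radius $\hat\Omega_{\mathrm{multi},b_{\mathrm{lower}}}$, together with the substitutions $\Omega\mapsto\hat\Omega_{\mathrm{multi},b_{\mathrm{lower}}}$ and $\sigma\mapsto 2\sigma/b_{\mathrm{lower}}$, lines up exactly with the hypotheses of the cited theorems --- in particular that those theorems place no support restriction on the competitor $\hat\mu$, and that $C_{num}(d,n)$ in (\ref{equ:numbersparsityresolutionlimiteq1}) can be enlarged to swallow the $2^{1/(2n-2)}$ factor as well as any normalization constant in the Fourier convention. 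The assumption $b_{\mathrm{lower}}\gg\sigma$ built into the definition of $\hat\Omega_{\mathrm{multi},b_{\mathrm{lower}}}$ guarantees the effective noise level stays in the regime required by the single-snapshot theory. Once this alignment is checked, the proof is a direct appeal to that theory.
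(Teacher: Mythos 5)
Your proposal is correct and follows essentially the same route as the paper: the paper's proof simply observes that the reduced frequency-domain model (\ref{equ:sparsityresolutionlimiteq0}), with band limit $\hat \Omega_{\mathrm{multi}, b_{\mathrm{lower}}}$ and effective noise of order $\sigma/b_{\mathrm{lower}}$, coincides with the single-snapshot model and then invokes the source-number recovery theorem (Theorem 2.3) of \cite{liu2021mathematicalhighd}, with the constant $C_{num}(d,n)$ absorbing the bookkeeping factors you mention. The only cosmetic difference is that the paper does not separately cite \cite{liu2022mathematicalpositive} for the positive case, since the nonexistence statement for general complex measures already covers positive ones.
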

  \begin{proof}
  	Since the model (\ref{equ:sparsityresolutionlimiteq0}) is the same as the one in \cite{liu2021mathematicalhighd} for the single measurement case, by Theorem 2.3 in \cite{liu2021mathematicalhighd}, we directly obtain the desired results. 
  \end{proof}

Theorem \ref{thm:numbersparsityresolutionlimit1} reveals that when the signal-to-noise ratio is sufficiently high and the source is very sparse, detecting the correct source number when the sources are separated by a distance below $\frac{c}{\hat \Omega_{\mathrm{multi}, b_{\mathrm{lower}}}}$ is possible by sparsity-based multi-illumination imaging. This shows that better resolution can be achieved. However, as the following proposition indicates, in theory this is only possible when resolving very sparse sources. 
  
  \begin{prop} \label{prop:numberresolutionofsparsitymultiillu1} 
  	For given $0<\sigma<m_{\min }$ and integer $n \geq 2$, there exist $\mu=\sum_{j=1}^n a_j \delta_{{y}_j}, {y}_j \in \mathbb{R}^d$ with $n$ supports, and $\hat{\mu}=\sum_{j=1}^{n-1} \hat{a}_j \delta_{\hat{y}_j}$ with $n-1$ supports such that 
  	\[
  	\babs{\mathcal F\left[PSF_{\mathrm{multi}}\right](\xi)\mathcal{F} \left[\hat{\mu}\right](\xi)-\mathcal F\left[PSF_{\mathrm{multi}}\right](\xi)\mathcal{F} \left[\mu\right](\xi)}< \sigma.
  	\] 
  Moreover,
  \[
  \min_{1 \leq j \leq n}\left|a_j\right|=m_{\min }, \quad \min _{p \neq j}\left\|{y}_p-{y}_j\right\|_2=\frac{2 e^{-1}}{\check \Omega_{\mathrm{multi}, \frac{((n-1)!)^2\sigma}{(2n-1)! m_{\min}}}}\left(\frac{\sigma}{m_{\min}b_{\mathrm{upper}}}\right)^{\frac{1}{2 n-2}}, 
  \]
  where $\check \Omega_{\mathrm{multi}, \frac{((n-1)!)^2\sigma}{(2n-1)! m_{\min}}}:= \min \left\{ r >0 : \babs{PSF_{\mathrm{multi}}(\xi)}< \frac{((n-1)!)^2\sigma}{(2n-1)! m_{\min}} \ \text{for}\ \bnorm{\xi}_2\geq r, \xi \in \mathbb R^d\right\} $. 
  In particular, the above results still hold for the case when $\mu, \hat \mu$ are positive measures. 
  \end{prop}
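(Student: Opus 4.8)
The plan is to reduce the statement to the single-snapshot extremal constructions of \cite{liu2021mathematicaloned, liu2022mathematicalpositive} by a two-scale splitting of the frequency domain. Recall from (\ref{equ:sparsityresolutionlimiteq0}) that, after dividing the model (\ref{equ:imagingmodelrsl2}) by $\mathcal F[PSF_{\mathrm{multi}}]$, the problem restricted to the frequency ball on which $|\mathcal F[PSF_{\mathrm{multi}}]|$ is bounded below behaves exactly like the classical band-limited deconvolution model, so the genuinely combinatorial work has already been carried out in our earlier papers; what is new here is that the data are spread over all of $\mathbb R^d$ and the decay of $\mathcal F[PSF_{\mathrm{multi}}]$ itself must absorb the high frequencies.

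First I would invoke the known worst-case pair for the number-recovery limit in the single-snapshot problem: for a band limit $\Omega_0$ and a parameter $0<\tilde\sigma<m_{\min}$ there exist $\mu=\sum_{j=1}^n a_j\delta_{t_j}$ and $\hat\mu=\sum_{j=1}^{n-1}\hat a_j\delta_{\hat t_j}$ on the line with $\min_{1\le j\le n}|a_j|=m_{\min}$, minimal separation $\frac{2e^{-1}}{\Omega_0}\bigl(\tilde\sigma/m_{\min}\bigr)^{1/(2n-2)}$, and $|\mathcal F[\mu-\hat\mu](\xi)|<\tilde\sigma$ for $|\xi|\le\Omega_0$; the construction forces the first $2n-2$ moments of $\mu-\hat\mu$ to vanish, which is what produces the power $1/(2n-2)$ and which simultaneously yields the total-mass bound $\sum_j|a_j|+\sum_j|\hat a_j|\le \frac{(2n-1)!}{((n-1)!)^2}\,m_{\min}$ coming from the equally-spaced Vandermonde system. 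This last bound is exactly the reciprocal of the threshold appearing in $\check\Omega_{\mathrm{multi},\frac{((n-1)!)^2\sigma}{(2n-1)!m_{\min}}}$. I would apply the construction with $\Omega_0=\check\Omega_{\mathrm{multi},\frac{((n-1)!)^2\sigma}{(2n-1)!m_{\min}}}$ and $\tilde\sigma=\sigma/b_{\mathrm{upper}}$, and, when $d\ge 2$, place the supports at $(t_j,0,\dots,0)$ and $(\hat t_j,0,\dots,0)$ so that $\mathcal F[\mu-\hat\mu](\xi)$ depends only on $\xi_1$.

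Next I would verify the claimed estimate by splitting $\mathbb R^d=\{\|\xi\|_2\le\check\Omega_{\mathrm{multi}}\}\cup\{\|\xi\|_2\ge\check\Omega_{\mathrm{multi}}\}$, writing $\check\Omega_{\mathrm{multi}}$ for $\check\Omega_{\mathrm{multi},\frac{((n-1)!)^2\sigma}{(2n-1)!m_{\min}}}$. On the first set $|\xi_1|\le\check\Omega_{\mathrm{multi}}$, so the construction gives $|\mathcal F[\mu-\hat\mu](\xi)|<\sigma/b_{\mathrm{upper}}$, and multiplying by $|\mathcal F[PSF_{\mathrm{multi}}](\xi)|\le b_{\mathrm{upper}}$ gives a product $<\sigma$. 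On the second set the definition of $\check\Omega_{\mathrm{multi}}$ forces $|\mathcal F[PSF_{\mathrm{multi}}](\xi)|<\frac{((n-1)!)^2\sigma}{(2n-1)!m_{\min}}$, while the trivial bound $|\mathcal F[\mu-\hat\mu](\xi)|\le\sum_j|a_j|+\sum_j|\hat a_j|\le\frac{(2n-1)!}{((n-1)!)^2}m_{\min}$ makes the product $<\sigma$ again; combining the two regions gives the inequality for all $\xi\in\mathbb R^d$. For the positive case I would repeat the argument verbatim, replacing the signed pair by the positive extremal configuration of \cite{liu2022mathematicalpositive} (the interlacing node pattern already used in Propositions \ref{prop:positiveresolutionofsparsitymultiillu1} and \ref{prop:positiveresolutionofsparsitymultiillu2}), which enjoys the same moment-vanishing property and hence the same total-mass bound.

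The main obstacle I anticipate is bookkeeping rather than new analysis: one must pin down the total-mass constant of the borrowed single-snapshot construction so that it is precisely $\frac{(2n-1)!}{((n-1)!)^2}m_{\min}$, so that it cancels the threshold in $\check\Omega_{\mathrm{multi}}$, and one must track the factor $b_{\mathrm{upper}}$ through the rescaling $\tilde\sigma=\sigma/b_{\mathrm{upper}}$ so that the separation reads $\bigl(\sigma/(m_{\min}b_{\mathrm{upper}})\bigr)^{1/(2n-2)}$. Beyond that, the only conceptual ingredient is the two-scale split, which is exactly the feature absent from the single-measurement setting and which is forced on us because here the high-frequency decay comes from $\mathcal F[PSF_{\mathrm{multi}}]$ instead of from a hard frequency cutoff.
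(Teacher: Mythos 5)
Your proposal is correct and follows essentially the same route as the paper: the paper constructs the same $2n-1$ equally spaced nodes split into odd/even groups, obtains the amplitudes from the null space of the Vandermonde-type moment matrix (giving the vanishing of the first $2n-2$ moments, the total-mass bound $\frac{(2n-1)!}{((n-1)!)^2}m_{\min}$, and the alternating signs needed for the positive case), and then applies exactly your two-region argument — Taylor expansion below $\check\Omega_{\mathrm{multi}}$ rescaled by $b_{\mathrm{upper}}$, and the trivial total-mass bound against the decay of $\mathcal F[PSF_{\mathrm{multi}}]$ above it. Your bookkeeping of the constants matches the paper's.
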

  \begin{proof}
See Appendix \ref{section:proofofsparsityresolution}.
  	\end{proof}
  
  Proposition \ref{prop:numberresolutionofsparsitymultiillu1} demonstrates the challenge of using sparsity-based multi-illumination imaging to super-resolve the number of complex or positive sources. We remark that by discussions in \cite{liu2022mathematicalSR}, it seems that the minimum separation distance in the proposition can actually be 
  \[
  \min _{p \neq j}\left\|{y}_p-{y}_j\right\|_2=\frac{c\pi}{\check \Omega_{\mathrm{multi}, \frac{((n-1)!)^2\sigma}{(2n-1)! m_{\min}}}}\left(\frac{\sigma}{m_{\min}b_{\mathrm{upper}}}\right)^{\frac{1}{2 n-2}} \text{ with } c>1. 
  \]
  In particular, for the case when $n=2$, the lower bound in the above form should be $$\frac{\sqrt{2}\pi}{\check \Omega_{\mathrm{multi}, \frac{((n-1)!)^2\sigma}{(2n-1)! m_{\min}}}}\left(\frac{\sigma}{m_{\min}b_{\mathrm{upper}}}\right)^{\frac{1}{2 n-2}}.$$ Therefore, if the source is not that sparse, obtaining theoretically a better resolution, i.e., $\text{smaller than} \frac{\pi}{\check \Omega_{\mathrm{multi}, \frac{((n-1)!)^2\sigma}{(2n-1)! m_{\min}}}}$,  is extremely hard. This explains experimentally observed phenomena in BEAM. For example, as shown in Supplementary Figure 1:E in \cite{beam2022}, a threefold resolution improvement is achieved for resolving two positive point sources but the algorithm fails to resolve both the number and location of six positive point sources.

	 \subsection{Resolution limit for imaging with unknown illumination patterns}
	 When the illumination patterns are unknown, although we do not have the explicit form of  $A, A^*$, the results in \cite{liu2022mathematical} in one-dimensional space demonstrated that sparsity recovery in  multi-illumination imaging can also improve the resolution, compared to the case of imaging from a single snaptshot. 
	 
	 	To be more specific, let us suppose that we illuminate the source $\mu= \sum_{j=1}^n a_j \delta_{y_j}$ by $N$ different $I_t, \ t=1, \cdots, N,$ and that the measurements are given by
	 	\[
	 	\vect Y(\xi, t) = \sum_{j=1}^n a_j e^{iy_j\xi}, \quad \xi \in [-\Omega, \Omega],
	 	\] 
	 	with $\Omega$ being the cutoff frequency of the imaging system. In \cite{liu2022mathematical}, the authors have demonstrated that,  when 
	 \begin{equation}\label{equ:sparsityresolutionlimit2}
	 	\min_{p\neq j}\babs{y_p-y_j}\geq \frac{2.2e\pi }{\Omega }\Big(\frac{1}{\sigma_{\infty, \min}(IM)}\frac{\sigma}{m_{\min}}\Big)^{\frac{1}{n}},
	 \end{equation}
	 the sparsity-promoting approach ($l_0$ minimization) can stably recover the source locations $y_j$'s. Here, $\Omega$ is the cutoff frequency of the imaging system rather than the $\Omega_{\mathrm{multi}}$ from the $PSF_{\mathrm{multi}}$.  $IM$ is the illumination matrix defined as 
	 \[
	 IM=\begin{pmatrix}
	 	I_{1}(y_{1}) & \cdots & I_{1}(y_{n}) \\ 
	 	\vdots & \ddots & \vdots \\ 
	 	I_{N}(y_{1}) & \cdots & I_{N}(y_{n})
	 \end{pmatrix}
	 \]
	 and $\sigma_{\infty, \min}(IM):=\min _{x \in \mathbb{C}^{k},\|x\|_{\infty} \geq 1}\|IM x\|_{\infty}$ characterizes the incoherence between the illumination patterns. Compared to the estimate of the resolution limit in the case of a single measurement \cite{liu2021theorylse, liu2022mathematicalSR}, which is of order $O\left(\frac{\pi}{\Omega }\Big(\frac{\sigma}{m_{\min}}\Big)^{\frac{1}{2n-1}}\right)$,  multi-illumination imaging will certainly improve the resolution when the incoherence between the illumination patterns is sufficiently high. 
	 
	On the other hand, by the following proposition (\cite[Proposition 2.1]{liu2022mathematical}), it was shown that, in the worst-case scenario, the resolution order $O\left(\frac{1}{\Omega} \left(\frac{\sigma}{m_{\min}}\right)\right)^{\frac{1}{n}}$ is the best that can be obtained if the illumination patterns are completely unknown.

	 \begin{prop} 
	 Given $n \geq 2, \sigma, m_{\min }$ with $\frac{\sigma}{m_{\min }} \leq 1$, and an unknown illumination pattern $I_t$ with $\left|I_t(y)\right| \leq 1, y \in \mathbb{R}, 1 \leq t \leq N$, let $\tau$ be defined by
	 $$
	 \tau=\frac{0.043}{\Omega}\left(\frac{\sigma}{m_{\min }}\right)^{\frac{1}{n}} .
	 $$
	 Then there exist $\mu=\sum_{j=1}^n a_j \delta_{y_j}$ with $n$ supports at $\{-\tau,-2 \tau, \ldots,-n \tau\}$ and $\left|a_j\right|=m_{\min }, 1 \leq j \leq$ $n$, and $\rho=\sum_{j=1}^{n} \hat{a}_j \delta_{\hat{y}_j}$ with $n$ supports at $\{0, \tau, \cdots,(n-1) \tau\}$, such that
	 \[
	 \text{there exist $\hat{I}_t$ 's so that $\frac{1}{2 \Omega} \int_{-\Omega}^{\Omega}\babs{\mathcal{F}\left[\hat{I}_t \rho\right](\xi)-\mathcal{F}\left[I_t \mu\right](\xi)}d\xi<\sigma, t=1, \cdots, N$.}
	 \]
 \end{prop}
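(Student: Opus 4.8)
The plan is to adapt the worst-case lower-bound construction for single-snapshot super-resolution, accounting for the one genuinely new feature of the multi-illumination setting: the $n$ amplitudes carried by $\mu$ get multiplied by the adversarial values $I_t(y_j)$, about which we know only $|I_t|\le 1$, so we control them far less than in the single-measurement case.

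First I would pass to a Laurent-polynomial reformulation. Place the $n$ atoms of $\mu$ on $\{-n\tau,\dots,-\tau\}$ and those of $\rho$ on $\{0,\tau,\dots,(n-1)\tau\}$, and substitute $w=e^{i\tau\xi}$; then $\mathcal F[I_t\mu](\xi)$ and $\mathcal F[\hat I_t\rho](\xi)$ are Laurent polynomials in $w$ with exponents in $\{-n,\dots,n-1\}$, evaluated on the arc $\{|w|=1,\ |\arg w|\le\tau\Omega\}$. In the single-snapshot lower bound one may prescribe all $2n$ amplitudes and thereby realize a Fourier transform proportional to $(1-w)^{2n-1}$, which yields the exponent $\tfrac{1}{2n-1}$; here the $n$ amplitudes on $\mu$ equal $a_jI_t(y_j)$, so they can only be guaranteed to have modulus $\le m_{\min}$ and are otherwise out of our hands. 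Hence I would aim only for a zero of order $n$ at $w=1$, spending the $n$ genuinely free amplitudes carried by $\rho$ — which is exactly what degrades the exponent to $\tfrac1n$.

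Concretely, take $\mu=m_{\min}\sum_{j=1}^n\delta_{-j\tau}$, so $|a_j|=m_{\min}$. For each $t$ one has $\mathcal F[I_t\mu](\xi)=w^{-n}\widetilde{G}_t(w)$ with $\widetilde{G}_t(w)=m_{\min}\sum_{j=1}^nI_t(-j\tau)\,w^{\,n-j}$, a polynomial of degree $\le n-1$ whose coefficients have modulus $\le m_{\min}$. Let $R_t$ be the degree-$\le(n-1)$ truncation at $w=0$ of the power series of $\widetilde{G}_t(w)(w-1)^{-n}$ (well defined, since $(w-1)^{-n}$ is analytic at $0$), and set $H_t(w):=w^{-n}\bigl(\widetilde{G}_t(w)-(w-1)^nR_t(w)\bigr)$, which is a polynomial of degree $\le n-1$ because the choice of $R_t$ makes the bracket divisible by $w^n$. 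Then define $\rho=\sum_{j=1}^n\hat a_j\delta_{(j-1)\tau}$ and $\hat I_t$ by requiring $\hat a_j\hat I_t((j-1)\tau)$ to equal the coefficient of $w^{j-1}$ in $H_t$ — taking $\hat a_j\equiv1$ suffices, and if one insists on $|\hat I_t|\le1$ one takes $\hat a_j$ equal to a uniform bound for the coefficients of $H_t$. With this choice $\mathcal F[\hat I_t\rho](\xi)=H_t(w)$, and since $w^nH_t(w)-\widetilde{G}_t(w)=-(w-1)^nR_t(w)$,
\[
\mathcal F[\hat I_t\rho-I_t\mu](\xi)=H_t(w)-w^{-n}\widetilde{G}_t(w)=-\,w^{-n}(w-1)^nR_t(w).
\]

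It remains to bound $\|R_t\|_1$, the sum of the moduli of the coefficients of $R_t$, by $C^nm_{\min}$ for an absolute $C$. Expanding $(w-1)^{-n}=(-1)^n\sum_{k\ge0}\binom{n-1+k}{k}w^k$, multiplying by $\widetilde{G}_t$ (whose coefficients are $\le m_{\min}$), and applying the hockey-stick identity shows that the modulus of the coefficient of $w^l$ in $R_t$ is $\le m_{\min}\binom{n+l}{l}$; a second hockey-stick summation gives $\|R_t\|_1\le m_{\min}\sum_{l=0}^{n-1}\binom{n+l}{l}=m_{\min}\binom{2n}{n+1}\le 4^nm_{\min}$. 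For $\xi\in[-\Omega,\Omega]$ one has $|w|=1$ and $|w-1|=2|\sin(\tau\xi/2)|\le\tau\Omega$, so $|\mathcal F[\hat I_t\rho-I_t\mu](\xi)|\le(\tau\Omega)^n\|R_t\|_1\le 4^nm_{\min}(\tau\Omega)^n$, whence
\[
\frac{1}{2\Omega}\int_{-\Omega}^{\Omega}\bigl|\mathcal F[\hat I_t\rho](\xi)-\mathcal F[I_t\mu](\xi)\bigr|\,d\xi\le(4\tau\Omega)^nm_{\min}=(4\cdot0.043)^n\sigma=0.172^n\sigma<\sigma
\]
for every $t$, using $\tau\Omega=0.043\,(\sigma/m_{\min})^{1/n}$; together with the prescribed supports and $\min_j|a_j|=m_{\min}$ this is the claim. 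The hard part is precisely this last quantitative estimate: a fixed numerical constant like $0.043$ can survive only if $\|R_t\|_1$ grows at most like $C^nm_{\min}$ rather than, say, $n!\,m_{\min}$, and this is exactly what the two hockey-stick bounds ($\binom{2n}{n+1}\le 4^n$) deliver, with the slack $4\cdot0.043<1$; the remaining care is just checking the divisibility of the bracket defining $H_t$ by $w^n$ and, if $|\hat I_t|\le1$ is wanted, choosing $\hat a_j$ uniformly large.
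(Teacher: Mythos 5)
The paper does not actually prove this proposition; it is imported verbatim from \cite{liu2022mathematical} as Proposition~2.1 there, so there is no in-paper argument to compare against. Your construction, however, is correct and self-contained, and it correctly isolates the structural reason for the exponent $\tfrac1n$: only the $n$ coefficients of $\mathcal F[\hat I_t\rho]$ are at your disposal, while the $n$ coefficients $a_jI_t(-j\tau)$ of $\mathcal F[I_t\mu]$ are adversarial, so you can only force a zero of order $n$ (rather than $2n-1$) at $w=1$. I checked the details: $R_t$ is well defined since $(w-1)^{-n}$ is analytic at $0$; the bracket $\widetilde G_t-(w-1)^nR_t$ is divisible by $w^n$ and has degree $\le 2n-1$, so $H_t$ is a polynomial of degree $\le n-1$ realizable as $\mathcal F[\hat I_t\rho]$ with $\hat a_j\equiv1$; the two hockey-stick summations give $\|R_t\|_1\le m_{\min}\binom{2n}{n+1}\le 4^nm_{\min}$; and $|w-1|\le\tau\Omega$ on $[-\Omega,\Omega]$ yields the pointwise bound $(4\cdot0.043)^n\sigma<\sigma$, which dominates the averaged $L^1$ quantity in the statement. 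Your route differs in flavor from the technique the paper uses for its own analogous lower bounds (Propositions~4.1--4.4): there the authors take a null vector of a Vandermonde system $Aa=0$ to kill the first moments, bound $\sum_j|a_j|$ via the explicit inverse-Vandermonde product formula (Lemma~A.1), and then control the Fourier transform by a Taylor expansion plus Stirling's formula. Your polynomial-division construction achieves the same vanishing-moment effect more explicitly (the quotient/remainder by $(w-1)^n$ replaces the null-vector argument) and replaces the Stirling estimates by the cleaner bound $\binom{2n}{n}\le4^n$; the trade-off is that it relies on all atoms being equispaced with spacing $\tau$ so that the substitution $w=e^{i\tau\xi}$ turns everything into Laurent polynomials, which is exactly the configuration the proposition prescribes. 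The only point worth making explicit in a final write-up is the choice of the $\hat a_j$: they must be independent of $t$, which your remark already handles by taking $\hat a_j\equiv1$ (so that all $t$-dependence is absorbed into $\hat I_t$), or by a uniform-in-$t$ rescaling if one additionally wants $|\hat I_t|\le1$ (a constraint the statement as quoted does not impose).
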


Note that all of these results show that it is possible to achieve better resolution than predicted by the operator theory of Section \ref{section:convolutheory1} for recovering very sparse sources, but it is difficult to recover more point sources.

		\section{Conclusions}
		In this paper, we have derived a stability analysis for reconstructing the frequency information of a source and demonstrated that the resolution of multi-illumination imaging is fundamentally determined by the essential bandwidth (or cutoff frequency) of an imaging kernel (or point spread function) formulated in terms of the illumination patterns and  point spread function of the imaging system. Our theory provides a unified way to estimate the resolution of various existing super-resolution modalities and arrive at the same results as those obtained experimentally. Our theory also allows us to estimate the resolution of sparsity-based multi-illumination imaging. In particular, we have shown that  sparsity-promoting algorithms can achieve better resolution than that predicted by operator theory in multi-illumination imaging, provided that the source to be recovered is very sparse.

		\appendix

\section{Proofs of results in Section \ref{section:rslforsparsesr}}\label{section:proofofsparsityresolution}
We first introduce some notation and lemmas that are used in the following proofs. Set
\begin{equation}\label{equ:defiofphi}
	\phi_s(t)=\left(1, t, \cdots, t^s\right)^{\top},
\end{equation}
where the superscript $\top$ denotes the transpose. 
We recall the Stirling formula 
\begin{equation}\label{equ:stirlingformula}
	\sqrt{2 \pi} n^{n+\frac{1}{2}} e^{-n} \leq n ! \leq e n^{n+\frac{1}{2}} e^{-n} .
\end{equation}
We introduce the following useful lemma (\cite[Lemma 5]{liu2021theorylse}). 
\begin{lem}\label{lem:vandermonde1}
	Let $t_1, \cdots, t_k$ be $k$ different real numbers and let $t$ be a real number. We have
	\[
	\left(D_k(k-1)^{-1} \phi_{k-1}(t)\right)_j=\Pi_{1 \leq q \leq k, q \neq j} \frac{t-t_q}{t_j-t_q},
	\]
	where $D_k(k-1):=\left(\phi_{k-1}\left(t_1\right), \cdots, \phi_{k-1}\left(t_k\right)\right)$ with $\phi_{k-1}(\cdot)$ being defined by (\ref{equ:defiofphi}).
\end{lem}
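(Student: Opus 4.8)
The plan is to recognize this statement as nothing more than the exactness of Lagrange interpolation, reformulated in matrix form, and to identify the unknown vector with the Lagrange basis polynomials evaluated at $t$. First I would note that, since $t_1,\dots,t_k$ are distinct, the matrix $D_k(k-1)=\left(\phi_{k-1}(t_1),\dots,\phi_{k-1}(t_k)\right)$ is (the transpose of) a Vandermonde matrix with determinant $\prod_{1\le p<q\le k}(t_q-t_p)\neq 0$; hence it is invertible and the left-hand side of the claimed identity is well defined. Write $v:=D_k(k-1)^{-1}\phi_{k-1}(t)\in\mathbb{R}^k$, equivalently the unique solution of the linear system $D_k(k-1)\,v=\phi_{k-1}(t)$.

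Next I would read this system off row by row. Indexing the rows by $i=0,1,\dots,k-1$ (so that the $i$-th entry of $\phi_{k-1}(\cdot)$ is the monomial of degree $i$), the $i$-th equation is $\sum_{j=1}^{k}v_j\,t_j^{\,i}=t^{\,i}$. By linearity in the monomials, these $k$ equations are jointly equivalent to the single assertion that $\sum_{j=1}^{k}v_j\,P(t_j)=P(t)$ holds for every polynomial $P$ of degree at most $k-1$. I would then specialize, for each fixed $j\in\{1,\dots,k\}$, to the $j$-th Lagrange basis polynomial $L_j(x):=\prod_{1\le q\le k,\,q\neq j}\frac{x-t_q}{t_j-t_q}$, which has degree $k-1$ and satisfies $L_j(t_\ell)=\delta_{j\ell}$. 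Plugging $P=L_j$ into the identity yields $v_j=\sum_{\ell=1}^{k}v_\ell\,L_j(t_\ell)=L_j(t)=\prod_{1\le q\le k,\,q\neq j}\frac{t-t_q}{t_j-t_q}$, which is precisely the claimed formula for the $j$-th component of $D_k(k-1)^{-1}\phi_{k-1}(t)$.

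I do not expect a genuine obstacle here: the result is classical, and the only points requiring a little care are the indexing convention (components and rows running from $0$ to $k-1$ so as to match the entries of $\phi_{k-1}$) and the use of the distinctness of the $t_q$, which is needed both for invertibility of $D_k(k-1)$ and for the well-definedness of the $L_j$. An alternative route via Cramer's rule, expressing each $v_j$ as a ratio of Vandermonde-type determinants, also works but is notationally heavier; the Lagrange-interpolation argument above is the cleanest and is what I would write up.
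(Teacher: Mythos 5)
Your proof is correct: the identification of $D_k(k-1)^{-1}\phi_{k-1}(t)$ with the vector of Lagrange basis polynomials $L_j(t)$, via the equivalence of the row equations $\sum_j v_j t_j^i = t^i$ with exactness of the interpolation functional on polynomials of degree at most $k-1$, is a complete and standard argument. The paper itself gives no proof of this lemma (it is quoted from an external reference), so there is nothing to compare against; your write-up supplies exactly the classical Vandermonde/Lagrange argument one would expect, and the two points you flag --- invertibility from distinctness of the $t_q$ and the indexing convention matching the entries of $\phi_{k-1}$ --- are indeed the only details requiring care.
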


\subsection{Proofs of Propositions \ref{prop:resolutionofsparsitymultiillu1} and \ref{prop:positiveresolutionofsparsitymultiillu1}}
\begin{proof}
\textbf{Step 1.} Consider $\gamma=\sum_{j=1}^{2n} a_j \delta_{\mathbf{t}_j}$ with $\mathbf{t}_1=\left(-\left(n-\frac{1}{2}\right) \tau, 0, \ldots, 0\right), \mathbf{t}_2=\left(-\left(n-\frac{3}{2}\right) \tau, 0, \ldots, 0\right), \ldots$ $\mathbf{t}_{2 n}=\left(\left(n-\frac{1}{2}\right) \tau, 0, \ldots, 0\right)$ and 
\begin{equation}\label{equ:proofpropsparsemultirsl-3}
	\tau=\frac{e^{-1}}{\check \Omega_{\mathrm{multi}, \frac{(n-1)!n!\sigma}{(2n)! m_{\min}}}} \left(\frac{\sigma}{m_{\min}b_{\mathrm{upper}}}\right)^{\frac{1}{2 n-1}}.
\end{equation}
For every ${\xi}=\left(\xi_1, \xi_2, \ldots, \xi_d\right)^{\top}$, $\mathcal{F} [\gamma]({\xi})=\sum_{j=1}^{2 n} a_j \mathrm{e}^{\mathrm{i} \mathbf{t}_j \cdot \xi}=\sum_{j=1}^{2 n} a_j \mathrm{e}^{\mathrm{i}(-n-\frac{1}{2}+j) \tau \xi_1}$. This reduces the estimation of $\mathcal{F} [\gamma]({\xi})$ to the one-dimensional case. In what follows, we demonstrate that with proper $a_j$'s, we have 
\[
\babs{\sum_{j=1}^{2 n} a_j \mathrm{e}^{\mathrm{i}(-n-\frac{1}{2}+j) \tau \xi_1}}< \sigma,  \quad  |\xi_1|\leq {\check \Omega_{\mathrm{multi}, \frac{(n-1)!n!\sigma}{(2n)! m_{\min}}}}.
\]

Let 
\begin{equation}\label{equ:proofpropsparsemultirsl-1}
	t_1=-\left(n-\frac{1}{2}\right) \tau,\ t_2=-\left(n-\frac{3}{2}\right) \tau,\ \cdots,\ t_n=-\frac{\tau}{2}, t_{n+1}=\frac{\tau}{2},\ \cdots,\ t_{2 n}=\left(n-\frac{1}{2}\right) \tau.
\end{equation} 
Consider the following system of linear equations:
\begin{equation}\label{equ:proofpropsparsemultirsl1}
A a=0, 
\end{equation}
where $A=\left(\phi_{2 n-2}\left(t_1\right), \cdots, \phi_{2 n-2}\left(t_{2 n}\right)\right)$ with $\phi_{2 n-2}(\cdot)$ being defined by (\ref{equ:defiofphi}). Since $A$ is underdetermined, there exists a nontrivial solution $a=\left(a_1, \cdots, a_{2 n}\right)^{\top}$. By the linear independence of any $(2 n-1)$ column vectors of $A$, all the $a_j$ 's are nonzero. By a scaling of $a$, we can assume that
\begin{equation}\label{equ:proofpropsparsemultirsl-2}
\min _{1 \leq j \leq n}\left|a_{j}\right|=m_{\text {min }} .
\end{equation}
We define
\[
\mu=\sum_{j=1}^n a_{j} \delta_{t_{j}}, \quad \hat{\mu}=\sum_{j=n+1}^n-a_{j} \delta_{t_{j}} .
\]
We now prove that
\[
\babs{\mathcal{F}[\hat{\mu}](\xi_1)-\mathcal{F}[\mu](\xi_1)}<\frac{\sigma}{b_{\mathrm{upper}}},\quad  |\xi_1|\leq {\check \Omega_{\mathrm{multi}, \frac{(n-1)!n!\sigma}{(2n)! m_{\min}}}}.
\]

\textbf{Step 2.} We first estimate $\sum_{j=1}^{2n}\left|a_j\right|$. We begin by ordering the $a_j$ 's such that
\[
\left|a_{j_1}\right| \leq\left|a_{j_2}\right| \leq \cdots \leq\left|a_{j_{2n}}\right|.
\]
Note that $\left|a_{j_1}\right| \leq m_{\min}$ by (\ref{equ:proofpropsparsemultirsl-2}). Then (\ref{equ:proofpropsparsemultirsl1}) implies that
\[
a_{j_1} \phi_{2 n-2}\left(t_{j_1}\right)=\left(\phi_{2 n-2}\left(t_{j_2}\right), \cdots, \phi_{2 n-2}\left(t_{j_{2 n}}\right)\right)\left(-a_{j_2}, \cdots,-a_{j_{2 n}}\right)^{\top},
\]
and hence
$$
a_{j_1}\left(\phi_{2 n-2}\left(t_{j_2}\right), \cdots, \phi_{2 n-2}\left(t_{j_{2 n}}\right)\right)^{-1} \phi_{2 n-2}\left(t_{j_1}\right)=\left(-a_{j_2}, \cdots,-a_{j_{2 n}}\right)^{\top} .
$$
Together with Lemma \ref{lem:vandermonde1}, we have
$$
a_{j_1} \Pi_{2 \leq q \leq 2 n-1} \frac{t_{j_1}-t_{j_q}}{t_{j_{2 n}}-t_{j_q}}=-a_{j_{2 n}} .
$$
Furthermore,
$$
\begin{aligned}
	\left|a_{j_{2 n}}\right| & =\left|a_{j_1}\right| \Pi_{2 \leq q \leq 2 n-1} \frac{\left|t_{j_1}-t_{j_q}\right|}{\left|t_{j_{2 n}}-t_{j_q}\right|}=\left|a_{j_1}\right| \Pi_{2 \leq q \leq 2 n-1} \frac{\left|t_{j_1}-t_{j_q}\right|}{\left|t_{j_{2 n}}-t_{j_q}\right|} \frac{\left|t_{j_1}-t_{j_{2 n}}\right|}{\left|t_{j_{2 n}}-t_{j_1}\right|} \\
	& =\left|a_{j_1}\right| \frac{\Pi_{2 \leq q \leq 2 n}\left|t_{j_1}-t_{j_q}\right|}{\Pi_{1 \leq q \leq 2 n-1}\left|t_{j_{2 n}}-t_{j_q}\right|} \leq\left|a_{j_1}\right| \frac{\max _{j_1=1, \cdots, 2 n} \Pi_{2 \leq q \leq 2 n}\left|t_{j_1}-t_{j_q}\right|}{\min _{j_{2 n}=1, \cdots, 2 n} \Pi_{1 \leq q \leq 2 n-1}\left|t_{j_{2 n}}-t_{j_q}\right|} .
\end{aligned}
$$
Thus, based on the distribution of $t_j$ 's (\ref{equ:proofpropsparsemultirsl-1}), we have
$$
\left|a_{j_{2 n}}\right| \leq \frac{(2 n-1) !}{(n-1) !n!}\left|a_{j_1}\right| \leq \frac{(2 n-1) !}{(n-1) !n!} m_{\min },
$$
and consequently,
$$
\sum_{j=1}^{2 n}\left|a_j\right|=\sum_{q=1}^{2 n}\left|a_{j_q}\right| \leq(2 n)\left|a_{j_{2 n}}\right| \leq \frac{(2 n)!}{(n-1) !n!} m_{\min } .
$$
It then follows that for $k \geq 2 n-1$,
$$
\left|\sum_{j=1}^{2 n} a_j t_j^k\right| \leq \sum_{j=1}^{2 n}\left|a_j\right|\left((n-\frac{1}{2}) \tau\right)^k \leq  \frac{(2 n)!}{(n-1) !n!} m_{\min }\left((n-\frac{1}{2}) \tau\right)^k.
$$
\textbf{Step 3.} On the other hand, we can expand $\mathcal{F}[\mu-\hat \mu]$ as follows:
\[
\mathcal{F}[\mu-\hat \mu](x)=\sum_{j=1}^{2 n} a_j e^{i t_j x}=\sum_{j=1}^{2 n} a_j \sum_{k=0}^{\infty} \frac{\left(i t_j x\right)^k}{k !}=\sum_{k=0}^{\infty} Q_k(\mu-\hat \mu) \frac{(i x)^k}{k !},
\]
where $Q_k(\gamma)=\sum_{j=1}^{2 n-1} a_j t_j^k$. Based on the discussions in Step 1 and Step 2, we have 
\[
Q_k(\gamma)=0, k=0, \cdots, 2 n-2, \text { and }\left|Q_k(\gamma)\right| \leq \frac{(2 n) !}{n !(n-1) !} m_{\min }((n-1 / 2) \tau)^k, k \geq 2 n-1.
\]
Therefore, for $|x| \leq \check \Omega_{\mathrm{multi}, \frac{(n-1)!n!\sigma}{(2n)! m_{\min}}}$,
\[
\begin{aligned}
	& \max _{x \in\left[-\check \Omega_{\mathrm{multi}, \frac{(n-1)!n!\sigma}{(2n)! m_{\min}}},\ \check \Omega_{\mathrm{multi}, \frac{(n-1)!n!\sigma}{(2n)! m_{\min}}}\right]}\babs{\mathcal{F}\left[\mu - \hat \mu\right](x)}\\
	 \leq& \sum_{k \geq 2 n-1} \frac{(2 n) !}{n !(n-1) !} m_{\min }((n-1 / 2) \tau)^k \frac{|x|^k}{k !} \\
	\leq& \sum_{k \geq 2 n-1} \frac{(2 n) !}{n !(n-1) !} m_{\min }((n-1 / 2) \tau)^k \frac{\check \Omega_{\mathrm{multi}, \frac{(n-1)!n!\sigma}{(2n)! m_{\min}}}^k}{k !} \\
	 =&\frac{(2 n) ! m_{\min }(n-1 / 2)^{2 n-1}\left(\tau \check \Omega_{\mathrm{multi}, \frac{(n-1)!n!\sigma}{(2n)! m_{\min}}}\right)^{2 n-1}}{n !(n-1) !(2 n-1) !} \sum_{k=0}^{+\infty} \frac{\left(\tau \check \Omega_{\mathrm{multi}, \frac{(n-1)!n!\sigma}{(2n)! m_{\min}}}\right)^k(2 n-1) !(n-1 / 2)^k}{(k+2 n-1) !}, 
	\end{aligned}
\]
	and hence 
 \[	\begin{aligned}
		& \max _{x \in\left[-\check \Omega_{\mathrm{multi}, \frac{(n-1)!n!\sigma}{(2n)! m_{\min}}},\ \check \Omega_{\mathrm{multi}, \frac{(n-1)!n!\sigma}{(2n)! m_{\min}}}\right]}\babs{\mathcal{F}\left[\mu - \hat \mu\right](x)}\\ 
	 <&\frac{2 n m_{\min }(n-1 / 2)^{2 n-1}\left(\tau \check \Omega_{\mathrm{multi}, \frac{(n-1)!n!\sigma}{(2n)! m_{\min}}}\right)^{2 n-1}}{n !(n-1) !} \sum_{k=0}^{+\infty}\left(\frac{\tau \check \Omega_{\mathrm{multi}, \frac{(n-1)!n!\sigma}{(2n)! m_{\min}}}}{2}\right)^k \\
	=&\frac{2 n m_{\min }(n-1 / 2)^{2 n-1}\left(\tau \check \Omega_{\mathrm{multi}, \frac{(n-1)!n!\sigma}{(2n)! m_{\min}}}\right)^{2 n-1}}{n !(n-1) !} \frac{1}{0.8} \quad\left(\text { (\ref{equ:proofpropsparsemultirsl-3}) implies } \frac{\tau \check \Omega_{\mathrm{multi}, \frac{(n-1)!n!\sigma}{(2n)! m_{\min}}}}{2} \leq 0.2\right) \\
	\leq &\frac{n m_{\min }(n-1 / 2)^{2 n-1}}{\pi n^{n+\frac{1}{2}}(n-1)^{n-\frac{1}{2}}}\left(e \tau \check \Omega_{\mathrm{multi}, \frac{(n-1)!n!\sigma}{(2n)! m_{\min}}}\right)^{2 n-1} \frac{1}{0.8} \quad(\text { by (\ref{equ:stirlingformula})}) \\
	\leq& \frac{n}{\pi(n-1 / 2)} m_{\min }\left(e \tau \check \Omega_{\mathrm{multi}, \frac{(n-1)!n!\sigma}{(2n)! m_{\min}}}\right)^{2 n-1} \frac{1}{0.8} \\
	<&\frac{\sigma}{b_{\mathrm{upper}}}.  \quad\left(\text { by (\ref{equ:proofpropsparsemultirsl-3}) and } \frac{n}{\pi(n-1 / 2)} \frac{1}{0.8}<1\right).
\end{aligned}
\]
It then follows that 
\[
\babs{\mathcal{F}[\hat{\mu}](\xi_1)-\mathcal{F}[\mu](\xi_1)}<\frac{\sigma}{b_{\mathrm{upper}}},\quad  |\xi_1|\leq \check \Omega_{\mathrm{multi}, \frac{(n-1)!n!\sigma}{(2n)! m_{\min}}}.
\]
By Step 1, we construct the sources in $\mathbb R^d$ as follows:
\[
\mu=\sum_{j=1}^n a_{j} \delta_{\vect t_{j}}, \quad \hat{\mu}=\sum_{j=n+1}^n-a_{j} \delta_{\vect t_{j}},
\]
with the $a_j$'s satisfying (\ref{equ:proofpropsparsemultirsl1}). Above discussions yield 
\[
\babs{\mathcal{F}[\hat{\mu}](\xi)-\mathcal{F}[\mu](\xi)}<\frac{\sigma}{b_{\mathrm{upper}}},\quad  \bnorm{\xi}_2\leq {\check \Omega_{\mathrm{multi}, \frac{(n-1)!n!\sigma}{(2n)! m_{\min}}}}.
\]
Thus, 
\[
\babs{\mathcal F [PSF_{\mathrm{multi}}](\xi)\mathcal{F}[\hat{\mu}](\xi)-\mathcal F [PSF_{\mathrm{multi}}](\xi)\mathcal{F}[\mu](\xi)}<\sigma,\quad  \bnorm{\xi}_2\leq {\check \Omega_{\mathrm{multi}, \frac{(n-1)!n!\sigma}{(2n)! m_{\min}}}},
\]
by the definition of $b_{\mathrm{upper}}$.

\textbf{Step 4.}
Note that by Step 3, for $\bnorm{\xi}_2> {\check \Omega_{\mathrm{multi}, \frac{(n-1)!n!\sigma}{(2n)! m_{\min}}}},$
\[
\babs{\mathcal{F}[\hat{\mu}](\xi)-\mathcal{F}[\mu](\xi)}= \babs{\sum_{j=1}^{2 n} a_j e^{i t_j \xi_1}}\leq \sum_{j=1}^n \babs{a_j} = \frac{(2 n)!}{(n-1) !n!} m_{\min}.
\] 
By the definition of $\check \Omega_{\mathrm{multi}, \frac{(n-1)!n!\sigma}{(2n)! m_{\min}}}$, for $\bnorm{\xi}_2> {\check \Omega_{\mathrm{multi}, \frac{(n-1)!n!\sigma}{(2n)! m_{\min}}}}$, we also have 
\[
\babs{\mathcal F [PSF_{\mathrm{multi}}](\xi)\mathcal{F}[\hat{\mu}](\xi)-\mathcal F [PSF_{\mathrm{multi}}](\xi)\mathcal{F}[\mu](\xi)}<\sigma.
\]
This completes the proof of Proposition \ref{prop:resolutionofsparsitymultiillu1}.

\textbf{Step 5.} Now we prove Proposition \ref{prop:positiveresolutionofsparsitymultiillu1}. Similar to the proof of Proposition \ref{prop:resolutionofsparsitymultiillu1}, we still only need to consider the one-dimensional case. We define 
\[
\mu=\sum_{j=1}^n a_{2j} \delta_{t_{2j}}, \quad \hat{\mu}=\sum_{j=n+1}^n-a_{2j-1} \delta_{t_{2j-1}}, \quad j =1, \cdots, n,
\]
where the $t_j$'s are defined by (\ref{equ:proofpropsparsemultirsl-1}) and $a=(a_1, \cdots, a_{2n})^{\top}$ satisfies (\ref{equ:proofpropsparsemultirsl1}) and 
\[
a_{2n}>0, \quad \min_{j=1,\cdots, n}|a_{2j}|\geq m_{\min}. 
\]  
Note that the above conditions on $a$ are easy to be satisfied after scaling $a$ in  (\ref{equ:proofpropsparsemultirsl1}). Now we only have to prove that $\mu, \hat \mu$ are positive measures, since the other conclusions of Proposition \ref{prop:positiveresolutionofsparsitymultiillu1} can be shown in the same way as in the previous steps. 

Equation (\ref{equ:proofpropsparsemultirsl1}) implies that
\[
-a_{2 n} \phi_{2 n-2}\left(t_{2 n}\right)=\left(\phi_{2 n-2}\left(t_1\right), \cdots, \phi_{2 n-2}\left(t_{2 n-1}\right)\right)\left(a_1, \cdots, a_{2 n-1}\right)^{\top},
\]
and hence,
\[
-a_{2 n}\left(\phi_{2 n-2}\left(t_1\right), \cdots, \phi_{2 n-2}\left(t_{2 n-1}\right)\right)^{-1} \phi_{2 n-2}\left(t_{2 n-1}\right)=\left(a_1, \cdots, a_{2 n-1}\right)^{\top} .
\]
This together with Lemma \ref{lem:vandermonde1} yields
\begin{equation}\label{equ:proofpropsparsemultirsl4}
-a_{2 n} \Pi_{1 \leq q \leq 2 n-1, q \neq j} \frac{t_{2 n}-t_q}{t_j-t_q}=a_j,
\end{equation}
for $j=1, \cdots, 2 n-1$. Observe first that $\Pi_{1 \leq q \leq 2 n-1, q \neq j}\left(t_{2 n}-t_q\right)$ is always positive for $1 \leq j \leq$ $2 n-1$. For $j=2 n-1$, since $$a_{2 n}>0,-a_{2 n} \Pi_{1 \leq q \leq 2 n-1, q \neq 2 n-1}\left(t_{2 n-1}-t_q\right)$$ is negative in (\ref{equ:proofpropsparsemultirsl4}). Thus we have $a_{2 n-1}<0$. In the same way, we see that $a_j>0$ for even $j$ and $a_j<0$ for odd $j$. Therefore, the intensities in $\hat{\mu}$ and $\mu$ are all positive. This completes the proof of Proposition \ref{prop:positiveresolutionofsparsitymultiillu1}. 
\end{proof}

\subsection{Proof of Proposition \ref{prop:positiveresolutionofsparsitymultiillu2}}
\begin{proof}
\textbf{Step 1.} Similar to the proofs of Propositions \ref{prop:resolutionofsparsitymultiillu1} and \ref{prop:positiveresolutionofsparsitymultiillu1}, we only need to consider the one-dimensional case. 

For $j \in\{1,2, \cdots, 2 n\}$, set $t_j=-\frac{s n-2}{2} \tau+\frac{(j-2) s}{2} \tau$ if $j$ is even and $t_j=t_{4\left\lceil\frac{j+1}{4}\right\rceil-2}+$ $(-1)^{\frac{j+1}{2}} \tau$ otherwise. Consider the following system of linear equations:
\[
A a=0,
\]
where $A=\left(\phi_{2 n-2}\left(t_1\right), \cdots, \phi_{2 n-2}\left(t_{2 n}\right)\right)$ with $\phi_{2 n-2}(\cdot)$ defined in (\ref{equ:defiofphi}). As discussed before, for a nonzero $a$, by scaling, we can assume that $a_{2 n}>0$ and
$$
\min _{1 \leq j \leq n}\left|a_{2 j}\right|=m_{\min}.
$$
We define
$$
\mu=\sum_{j=1}^n a_{2 j} \delta_{t_{2 j}}, \quad \hat{\mu}=\sum_{j=1}^n-a_{2 j-1} \delta_{t_{2 j-1}} .
$$
As discussed before, we can show that $a_{2 j-1}<0, j=1, \cdots, n$, and $a_{2 j}>0, j=1, \cdots, n$. Thus, both $\hat{\mu}$ and $\mu$ are positive measures.

\textbf{Step 2.} By the proof of Proposition 2.1 in \cite{liu2022mathematicalpositive} we know that
\[
\sum_{j=1}^{2 n}\left|a_j\right| \leq \sum_{q=1}^{2 n}\left|a_{j_q}\right| \leq \frac{2ne^{11} s^2(n+1)^{10} 2^{2 n-8} m_{\min}}{\pi^2}
\]
and 
\[
\babs{\mathcal F \left [ \mu \right](\xi)- \mathcal F \left [ \hat \mu \right](\xi)}<\sigma, \quad \xi \in \left[- \check \Omega_{\mathrm{multi},  \frac{\pi^2\sigma}{2ne^{11} s^2(n+1)^{10} 2^{2 n-8} m_{\min}}}, \check \Omega_{\mathrm{multi},  \frac{\pi^2\sigma}{2ne^{11} s^2(n+1)^{10} 2^{2 n-8} m_{\min}}} \right].
\]
On the other hand, by the definition of $\check \Omega_{\mathrm{multi},  \frac{\pi^2\sigma}{2ne^{11} s^2(n+1)^{10} 2^{2 n-8} m_{\min}}}$, for $\bnorm{\xi}_2> \check \Omega_{\mathrm{multi},  \frac{\pi^2\sigma}{2ne^{11} s^2(n+1)^{10} 2^{2 n-8} m_{\min}}}$, we also have 
\[
\babs{\mathcal F [PSF_{\mathrm{multi}}](\xi)\mathcal{F}[\hat{\mu}](\xi)-\mathcal F [PSF_{\mathrm{multi}}](\xi)\mathcal{F}[\mu](\xi)}<\sigma.
\]
This completes the proof.

\subsection{Proof of Proposition \ref{prop:numberresolutionofsparsitymultiillu1}}
Let
\[
\tau=\frac{e^{-1}}{\check \Omega_{\mathrm{multi}, \frac{((n-1)!)^2\sigma}{(2n-1)!m_{\min}}}}\left(\frac{\sigma}{m_{\min } b_{\mathrm{upper}}}\right)^{\frac{1}{2 n-2}}
\]
and $\vect t_1=\left(-(n-1) \tau,0, \cdots, 0 \right), \vect t_2=\left(-(n-2) \tau,0, \cdots, 0\right) \cdots, \vect t_{2 n-1}=\left((n-1) \tau, 0, \cdots, 0\right)$. Define
\[
\mu=\sum_{j=1}^n a_{2 j-1} \delta_{\vect t_{2 j-1}}, \quad \hat{\mu}=\sum_{j=1}^{n-1}-a_{2 j} \delta_{\vect t_{2 j}} .
\]
The rest of proof proceeds in the same way as the proofs of Propositions \ref{prop:resolutionofsparsitymultiillu1} and \ref{prop:positiveresolutionofsparsitymultiillu1}. 
\end{proof}

		\bibliographystyle{plain}
		\bibliography{reference_final}	
	\end{document}